\newtheorem{Th}{Theorem}[section]
\newtheorem{Lemma}[Th]{Lemma}
\newtheorem{Cor}[Th]{Corollary}
\newtheorem{Prop}[Th]{Proposition}
\newtheorem{Asm}[Th]{Assumption}
\newcommand{\Bernoulli}{\mathrm{Bernoulli}}
\newcommand{\Uniform}{\mathrm{Uniform}}
\newcommand{\InvGamma}{\mathrm{InvGamma}}
\providecommand{\Diag}{\mathop\mathrm{Diag}}
\newcommand{\eqd}{\stackrel{d}{=}}
\newcommand{\toas}{\stackrel{\textup{a.s.}}{\to}}
\newcommand{\toL}{\stackrel{L^1}{\to}}
\newcommand{\toasL}{\stackrel{\textup{a.s.},\,L^1}{\to}}
\def\norm#1{\|{#1}\|} %
\newcommand{\twonorm}[1]{\norm{#1}_2} %
\newcommand{\opnorm}[1]{\norm{#1}_{\mathrm{op}}} %
\newcommand{\cubname}{CUB\xspace} %
\newcommand{\cub}[1][p]{\textup{\cubname}_{#1}}%
\newcommand{\defeq}{\triangleq}
\newcommand\mydots{\makebox[1em][c]{.\hfil.\hfil.}}
\newenvironment{talign*}
 {\csname align*\endcsname}
 {\endalign}
\newenvironment{talign}
 {\csname align\endcsname}
 {\endalign}
\def\spacingset#1{\renewcommand{\baselinestretch}%
{#1}\small\normalsize} \spacingset{1}
\renewcommand{\baselinestretch}{1.1}
\newcommand{\blind}{1}
\begin{document}
\if1\blind
{
  \title{
  \opt{jasa}{\bf} 
  Bounding Wasserstein distance with couplings}
  \author{Niloy Biswas \\
    Harvard University \\
    niloy\_biswas@g.harvard.edu \\
    \and
    Lester Mackey \\
    Microsoft Research New England \\
    lmackey@microsoft.com}
  \maketitle
} \fi

\if0\blind
{
  \bigskip
  \bigskip
  \bigskip
  \begin{center}
    {\LARGE \opt{jasa}{\bf} 
    Bounding Wasserstein distance with couplings}
\end{center}
  \medskip
} \fi

\bigskip
\begin{abstract}
Markov chain Monte Carlo (MCMC) provides asymptotically consistent estimates of 
intractable posterior expectations as the number of iterations tends to infinity. 
However, in large data applications, MCMC can be computationally expensive per iteration. 
This has catalyzed interest in approximating MCMC in a manner that improves computational speed per iteration 
but does not produce asymptotically consistent estimates.
In this article, we propose estimators based on couplings of Markov chains to assess the 
quality of such asymptotically biased sampling methods. The estimators give empirical upper 
bounds of the Wasserstein distance between the limiting distribution of the asymptotically 
biased sampling method and the original target distribution of interest. 
We establish theoretical guarantees for our upper bounds and show that our estimators 
can remain effective in high dimensions. 
We apply our quality measures to stochastic gradient MCMC, variational Bayes, 
and Laplace approximations for tall data and to approximate MCMC for Bayesian logistic regression in $4500$ dimensions and Bayesian linear regression in $50000$ dimensions.
\end{abstract}

\opt{jasa}{
\noindent%
{\it Keywords:}  Markov chain Monte Carlo; Stochastic gradients; 
Approximate Markov chain Monte Carlo; Variational inference; Wasserstein distance
\vfill

\newpage
\spacingset{1.45} %
}

\section{Introduction} \label{section:introduction}

\subsection{Quality of asymptotically biased Monte Carlo methods}
Markov chain Monte Carlo (MCMC) methods are commonly used for the
approximation of intractable integrals arising in Bayesian statistics, probabilistic inference, 
machine learning, and other fields \citep{brooks2000handbook}. 
They are based on a transition kernel $K_1$ which is invariant 
with respect to a target distribution of interest $P$. 
MCMC methods are asymptotically unbiased 
in that they generate Markov chains with marginal distributions that asymptotically 
converge to $P$ as the number of iterations tend to infinity.
However, in modern applications with a large number of data points or high dimensions,
evaluating the transition kernel $K_1$ at each iteration can incur high computation cost.
This has catalyzed the use of asymptotically biased sampling methods such as approximate MCMC and variational inference.
Approximate MCMC \citep[e.g.,][]{welling2011bayesianICML, bardenet2017onJMLR, narisetty2019skinnyJASA, 
johndrow2020scalableJMLR} is based on a transition kernel $K_2$ 
which is an approximation of $K_1$ with low computation cost; these approximate Markov chains typically converge to a distribution $Q$ that differs from the target $P$.
Variational inference \citep[e.g.,]{blei2017JASAvariational} alternatively uses optimization to inexactly approximate $P$ with a surrogate distribution $Q$.

Assessing the quality of such asymptotically biased samplers is of great interest for researchers who develop new approximate inference methods.
Standard MCMC diagnostic tests \cite[e.g.,][]{johnson1998coupling, biswas2019estimating, vats2020revisitingSS, vehtari2020rankBA}
are not directly suitable for such settings as they do not account for asymptotic bias. 
Researchers often resort to comparing summary statistics or marginal univariate 
traceplots of samples from such methods with samples from an asymptotically unbiased Markov chain. However, such marginal traceplots and summary statistics may fail to capture higher order moments and
dependencies between different components. Moreover, in high-dimensional settings, 
visualizing all marginal traceplots may not even be feasible.
In this manuscript, we develop generic upper bound estimates of the Wasserstein distance, 
an appealing measure of distance between distributions discussed in Sec.~\ref{subsection:wass}. 
Our estimates are then applied to assess the quality of asymptotically biased samplers.

\subsection{Couplings and Wasserstein distances} \label{subsection:wass}
Consider a complete, separable metric space $(\mathcal{X}, c)$ where $c$ is a metric. 
For each $p \geq 1$, let $\mathcal{P}_p(\mathcal{X})$ denote the 
set of all probability measures $P$ on $(\mathcal{X}, c)$ which have finite moments of order $p$, i.e., for which $\int_\mathcal{X} c(x_0,x)^p dP(x) < \infty$ for some $x_0 \in \mathcal{X}$.
Then the $p$-Wasserstein distance is a metric on $\mathcal{P}_p(\mathcal{X})$, defined for
any probability measures $P$ and $Q$ in $\mathcal{P}_p(\mathcal{X})$ as 
\begin{align}\label{eq:Wp_defn}
\mathcal{W}_p(P, Q) = 
( \inf_{ \gamma \in \Gamma( P, Q )} \textstyle\int_{\mathcal{X} \times \mathcal{X}} c(x, y)^p d\gamma(x,y) )^{1/p}
\end{align}
where $\Gamma(P, Q)$ is the set of probability measures on $\mathcal{X} \times \mathcal{X}$
with marginal measures $P$ and $Q$ respectively. 
Any probability measure in $\Gamma(P, Q)$ is called a coupling of $P$ and $Q$, 
and any coupling which attains the infimum in \eqref{eq:Wp_defn}
is called $p$-Wasserstein optimal.

The Wasserstein distance has many advantageous properties. Here we note those most relevant for this work 
and refer to \citet{villani2008optimal}
for more details. 
First, it allows comparison between mutually singular distributions 
that may have disjoint supports, unlike common alternatives like the total variation distance, Kullback–Leibler (KL) divergence and R\'enyi’s $\alpha$-divergences \citep{vanErven2014renyiIEEE}. 
Moreover, it captures geometric properties induced by the metric $c$ and differences between moments of distributions. 
For example when $\mathcal{X}=\mathbb{R}^d$ and $c(x,y) = \| x-y \|_p = (\sum_{j=1}^d |x_i-y_i|^p)^{1/p}$, Jensen's inequality and the triangle inequality imply 
\begin{talign} \label{eq:pwass_moment}
\max \big\{ \| \mathbb{E}[ X-Y] \|_p , | \mathbb{E}[ \| X \|_p^p]^{1/p}\! - \! \mathbb{E}[ \| Y \|_p^p]^{1/p} | \big\} 
\leq \mathbb{E}_{\gamma^*(P, Q)} [ \| X-Y \|_p^p ]^{1/p} = \mathcal{W}_p(P, Q)
\end{talign}
for any $P, Q \in \mathcal{P}_p(\mathcal{X})$ 
and random variables $(X,Y)$ jointly distributed according to a $p$-Wasserstein optimal coupling  $\gamma^*(P, Q)$.
Equation \eqref{eq:pwass_moment} shows that $p$-Wasserstein distances can control the 
difference between moments of order $p$. 
Indeed, \citet[Thm.~3.4, Rem.~3.5, and Prop.~3.6]{Huggins2020validatedAISTATS} showed that explicit bounds on Wasserstein distances translate into explicit guarantees for a variety of downstream inferential tasks including mean estimation, covariance estimation, numerical integration of Lipschitz functions, and prediction accuracy.
Meanwhile, these guarantees are \emph{not} implied by a small KL or $\alpha$-divergence \citep{Huggins2020validatedAISTATS}.

Popular approaches to estimating $\mathcal{W}_p(P, Q)$ involve drawing 
independent samples from $P$ and $Q$ and then computing the Wasserstein 
distance between the corresponding empirical distributions.
Such approaches produce estimates that are consistent
as the number of samples tend to infinity but can suffer from the curse of dimensionality
and give loose upper bounds of $\mathcal{W}_p(P, Q)$ when the number of 
samples does not increase exponentially with dimension 
\citep[e.g.,][]{weed2019sharpBERNOULLI}.
They also incurs prohibitive computational costs which scale at 
a cubic rate with the number of samples \citep{orlin1988fasterSTOC}.
Entropy-regularized variants of the Wasserstein distance 
such as Sinkhorn distances \citep{cuturi2013sinkhorn} offer computational costs which 
scale at a quadratic rate with the number of samples but produce estimates 
that are not consistent \citep{altschuler2017nearNEURIPS}.

This manuscript develops consistent upper bound estimates for Wasserstein distances. 
The developed algorithms and estimators are then used to assess the quality of approximate MCMC and certain variational inference methods. 
Specifically, we use couplings of Markov chains to estimate upper bounds on the Wasserstein distance between the limiting distribution of the asymptotically biased sampling method and the original target distribution of interest. 
As we cover in Sec.~\ref{subsection:comparison}, 
our work provides an appealing alternative to estimates based on empirical Wasserstein distances and Sinkhorn distances and to 
the upper bound estimates of \citet{Huggins2020validatedAISTATS}, which are based on
worst-case divergence bounds and rely on efficient importance sampling.
In addition, our upper bound estimates provably improve upon those of \citet{dobson2021usingSIAM}
which rely on challenging contraction-constant estimation.

In related work, measures of asymptotic bias based on Stein discrepancies  
have been developed, which do not require sampling from 
the target distribution of interest.
For example, \citet{gorham2019measuringAoAP} 
established a near-linear relationship between Stein discrepancies and standard Wasserstein distances, 
but the constants in these results rely on specific knowledge of the gradient of the log target density
that must be derived for each new target distribution. 
Our upper bound estimates of the Wasserstein distance apply to any distributions that can be targeted with Markov chains and do not require any additional distributional knowledge.

\subsection{Our contributions}
We introduce new tools for method developers to assess the quality of 
their approximate inference procedures. 
Our primary contributions are summarized below. 

In Sec.~\ref{section:sample_quality}, we first introduce algorithms for coupling two Markov chains with distinct stationary distributions. 
Our approach generalizes recent efforts to couple Markov chains with identical transition kernels 
\citep[see, e.g.,][]{glynn2014exact, heng2019unbiased, 
middleton2019unbiasedPMLR, jacob2020unbiasedJRSSB,biswas2019estimating,
biswas2021coupled}.
We then introduce estimators based on our coupled chains that consistently upper bound the Wasserstein distance between their stationary distributions. This enables us to assess the asymptotic bias of approximate MCMC methods and certain variational inference procedures. 

Sec.~\ref{section:properties} provides a theoretical analysis of our upper bound estimates.
We first establish the consistency and unbiasedness of our upper bound estimates
and then derive interpretable analytic upper bounds on our  estimates in terms of the mixing rate of one chain and the closeness of the two transition kernels.
These analytic bounds provide sufficient conditions for our upper estimates to be informative in high dimensions.

In Sec.~\ref{section:applications}, we demonstrate the favorable empirical performance of our upper bound estimates 
on modern applications. We first consider datasets with a large number of data points to assess 
the quality of stochastic gradient MCMC, variational Bayes, and Laplace approximations for Bayesian logistic regression. 
We then consider high-dimensional datasets to assess the quality of 
approximate MCMC for high-dimensional linear regression with continuous shrinkage priors 
($d \approx 50000$) and high-dimensional logistic regression with spike-and-slab priors
($d \approx 4500$). 
Finally, we discuss our results and directions for future work in Sec.~\ref{section:discuss}.
Open-source R code 
recreating all experiments in this paper can be found at
\url{github.com/niloyb/BoundWasserstein}.

\section{Bounding Wasserstein distance with couplings} \label{section:sample_quality}
Given distributions $P$ and $Q$ in $\mathcal{P}_p(\mathcal{X})$
for some $p \geq 1$, we wish to estimate upper bounds on $\mathcal{W}_p(P,Q)$. 
Our estimates are based on Markov chains $(X_t)_{t \geq 0}$ and $(Y_t)_{t \geq 0}$
with marginal transition kernels $K_1$ and $K_2$ invariant 
for $P$ and $Q$ respectively. 
Specifically, we construct a Markovian kernel $\bar{K}$ on the joint space $\mathcal{X} \times \mathcal{X}$ 
such that for all $x,y \in \mathcal{X}$, 
\begin{talign} \label{eq:joint_kernel}
\bar{K}\big( (x, y), (\cdot, \mathcal{X}) \big) = K_1(x, \cdot) \text{ and } \bar{K}\big( (x, y), (\mathcal{X}, \cdot) \big) = K_2(y, \cdot).
\end{talign}
Given the kernel $\bar{K}$, we generate a coupled Markov chain $(X_t, Y_t)_{t \geq 0}$ using 
Alg.~\ref{algo:coupled_chain_general}, 
a generalization of 
existing coupling constructions 
\citep{johnson1998coupling, glynn2014exact, heng2019unbiased, middleton2019unbiasedPMLR, 
jacob2020unbiasedJRSSB,biswas2019estimating, biswas2021coupled}.
While prior work focused on $K_1=K_2$ and $X_t \eqd Y_t$ to establish convergence to a single stationary distribution $P$, our work uses distinct kernels $K_1$ and $K_2$ to bound the distance between distinct stationary distributions $P$ and $Q$.
Algorithms to sample from $\bar{K}$ are covered in Sec.~\ref{subsection:algo_k_bar}. 

\begin{algorithm}[!htb]
\small
\DontPrintSemicolon
\KwIn{Initial distribution $\bar{I}_0$ on $\mathcal{X} \times \mathcal{X}$, joint kernel $\bar{K}$, number of iterations $T$} 
\textbf{Initialize:} Sample $(X_0, Y_0) \sim \bar{I}_0$ \;
\lFor{$t=1,\mydots,T-1$}{
    Sample $(X_{t+1}, Y_{t+1}) | (X_{t}, Y_{t}) \sim \bar{K}\big( (X_{t}, Y_{t}), \cdot \big)$
    }
 \Return Markov chain $(X_{t}, Y_{t})_{t = 0}^T$
 \caption{Coupled Markov chain Monte Carlo for bounding Wasserstein distances}
 \label{algo:coupled_chain_general}
\end{algorithm}

For a Markov chain $(X_{t}, Y_{t})_{t \geq 0}$ from 
Alg.~\ref{algo:coupled_chain_general}, 
suppose the marginal distributions of $X_{t}$ and $Y_{t}$ converge in $p$-Wasserstein distance
to $P$ and $Q$ respectively as $t$ tends to infinity. Informally, the coupling representation of the 
Wasserstein distance implies
$\mathcal{W}_p(P, Q)^p \leq  \underset{S \rightarrow \infty, T-S\rightarrow\infty}{\liminf} \sum_{t=S+1}^T \frac{\mathbb{E}[c(X_t, Y_t)^p]}{T-S} $.
This motivates our coupling upper bound (\cubname) estimate
\begin{talign} \label{eq:W_ublimit1}
\cub\defeq  ( \frac{1}{I(T-S)} \sum_{i=1}^I \sum_{t=S+1}^T c(X^{(i)}_t, Y^{(i)}_t)^p )^{1/p},
\end{talign}
where $(X^{(i)}_{t}, Y^{(i)}_{t})_{t = 0}^T$ are 
sampled using Alg.~\ref{algo:coupled_chain_general}
independently for each $i$, with burn-in $S \geq 0$ and trajectory length $T>S$.
We prove the consistency of this and related upper bound estimators in Sec.~\ref{section:properties}. 
We now consider the empirical performance of this estimator on two stylized 
examples, working with the Euclidean metric $c(x,y)=\|x-y\|_2$ on $\mathbb{R}^d$.

\subsection{Upper bound on Wasserstein distance} \label{subsection:wass_ub}
We consider the performance of $\cub[2]$ \eqref{eq:W_ublimit1} 
for two Gaussian distributions on $\mathbb{R}^d$, given by
\begin{talign} \label{eq:stylized_example_targets}
P = \mathcal{N}(0, \Sigma) \text{ where } \Sigma_{i,j} = 0.5^{|i-j|} \text{ for } 1 \leq i,j \leq d \text{ and } Q = \mathcal{N}(0, I_d).
\end{talign}
Here we use the marginal kernels $K_1$ and $K_2$ of the Metropolis--adjusted Langevin algorithm (MALA)
with step sizes $\sigma_P=\sigma_Q=0.5 d^{-1/6}$ targeting $P$ and $Q$ respectively,
following existing guidance for step size choice \citep{roberts1998optimalJRSSB}.
The joint kernel $\bar{K}$ is based on a common random numbers 
(CRN, also called ``synchronous'') coupling of both the proposal step and the accept-reject step of the MALA algorithm, as detailed in Alg.~\ref{algo:mala_crn} of App.~\ref{appendices:algos}. 
Each chain is initialized with independent draws of 
$X^{(i)}_0 \sim P$ and $Y^{(i)}_0 \sim Q$, and the choice of initialization 
is covered in Sec.~\ref{section:properties}.
Throughout, we will also compare to an \emph{independent coupling} obtained by sampling the $(X_t)_{t\geq 0}$ and $(Y_t)_{t\geq 0}$ chains independently using the $K_1$ and $K_2$ kernels respectively.

Fig.~\ref{fig:stylized_example_mvn_combined} (Left) compares several upper bound estimates of $\mathcal{W}_2(P,Q)$ for dimension $d=100$. 
The solid line 
(\raisebox{-3pt}{\tikz{
\fill[lightgray!50] (0,0) rectangle (0.6,0.45);
\draw[black,solid,line width=2pt](0,0.225) -- (0.6,0.225)
}}) 
is $\cub[2]$ 
based on $I=5$ independent chains, burn-in $S=0$ and 
varying trajectory length $1\leq T \leq 1000$, and
the grey error bands represent 95\% confidence intervals 
arising from Monte Carlo error. 
As the marginal chains are initialized at their respective
stationary distributions, here $\cub[2]$ produces valid upper bounds for all 
trajectory lengths $T$ with zero burn-in $S=0$. The values of $T$ and $I$
are chosen based on upper bound estimates and error bands of initial runs, 
and this choice is further discussed in Sec.~\ref{subsection:algo_k_bar}.
The dotted line
(\raisebox{-3pt}{\tikz{
\fill[lightgray!50] (0,0) rectangle (0.6,0.45);
\draw[black,dotted,line width=2pt](0,0.225) -- (0.6,0.225)
}})
plots the independent coupling upper bound
$\underset{Y \sim Q,X \sim P}{\mathbb{E}}[\|X-Y\|^2_2]^{1/2}=(2d)^{1/2}$ with $X$ and $Y$ independent.
The dot-dashed line 
(\raisebox{-3pt}{\tikz{
\fill[lightgray!50] (0,0) rectangle (0.6,0.45);
\draw[black,dash pattern={on 6pt off 3pt on 2pt off 3pt},line width=2pt](0,0.225) -- (0.6,0.225)
}})
plots an estimate based on empirical Wasserstein distances, 
given by $\sum_{i=1}^I \mathcal{W}_2(\hat{P}^{(i)}_T,\hat{Q}^{(i)}_T)/ I$ where each $\hat{P}^{(i)}_T$ and $\hat{Q}^{(i)}_T$ are the empirical distributions of 
$T=1000$ points sampled independently from $P$ and $Q$ respectively and 
$\mathcal{W}_2(\hat{P}^{(i)}_T,\hat{Q}^{(i)}_T)$ is calculated exactly by solving a linear program \citep[][see also App.~\ref{appendices:empirical_wass}]{orlin1988fasterSTOC}. In Sec.~\ref{subsection:comparison} we examine the upper- and lower-bounding properties of this common Wasserstein distance estimate and observe that its convergence can be slow in high dimensions due to substantial bias. 
Finally, the dashed line 
(\raisebox{-3pt}{\tikz{
\fill[lightgray!50] (0,0) rectangle (0.6,0.45);
\draw[black,dashed,line width=2pt](0,0.225) -- (0.6,0.225)
}})
shows the true Wasserstein distance $\mathcal{W}_2(P,Q)$, which is known for this stylized example \citep[see, e.g.,][Rem.~2.23]{Peyre2019computationalFTML} and is
given by the coupling $\underset{Y \sim Q,X=\Sigma^{1/2}Y \sim P}{\mathbb{E}}[\|X-Y\|^2_2]^{1/2}$ where $\Sigma^{1/2}$ is the positive matrix square root of $\Sigma$. 
At initialization ($T=0$) $\cub[2]$ matches the equivalent 
independent coupling bound. For greater trajectory lengths $T$, 
$\cub[2]$ offers a significant improvement over  the independent bound 
and the popular empirical Wasserstein estimate.

Fig.~\ref{fig:stylized_example_mvn_combined} (right) considers  $\mathcal{W}_2(P,Q)$ for higher dimensions.
The solid line now plots $\cub[2]$
based on $I=5$, $S=0$, and $T=1000$. 
Fig.~\ref{fig:stylized_example_mvn_combined} (right) highlights that, unlike the independent and empirical Wasserstein estimates, $\cub[2]$ 
offers bounds that remain informative even in higher dimensions. Such 
dimension-free properties of our upper bounds are 
investigated in Sec.~\ref{section:properties}. 
Sec.~\ref{subsection:comparison} provides a further comparison of 
our \cubname bounds with empirical Wasserstein and Sinkhorn distances, which can have prohibitive computational cost for larger sample 
sizes and suffer from the curse of dimensionality. 

\begin{figure}[!t]
\captionsetup[subfigure]{font=footnotesize,labelfont=footnotesize}
    \centering
    \includegraphics[width=\textwidth]{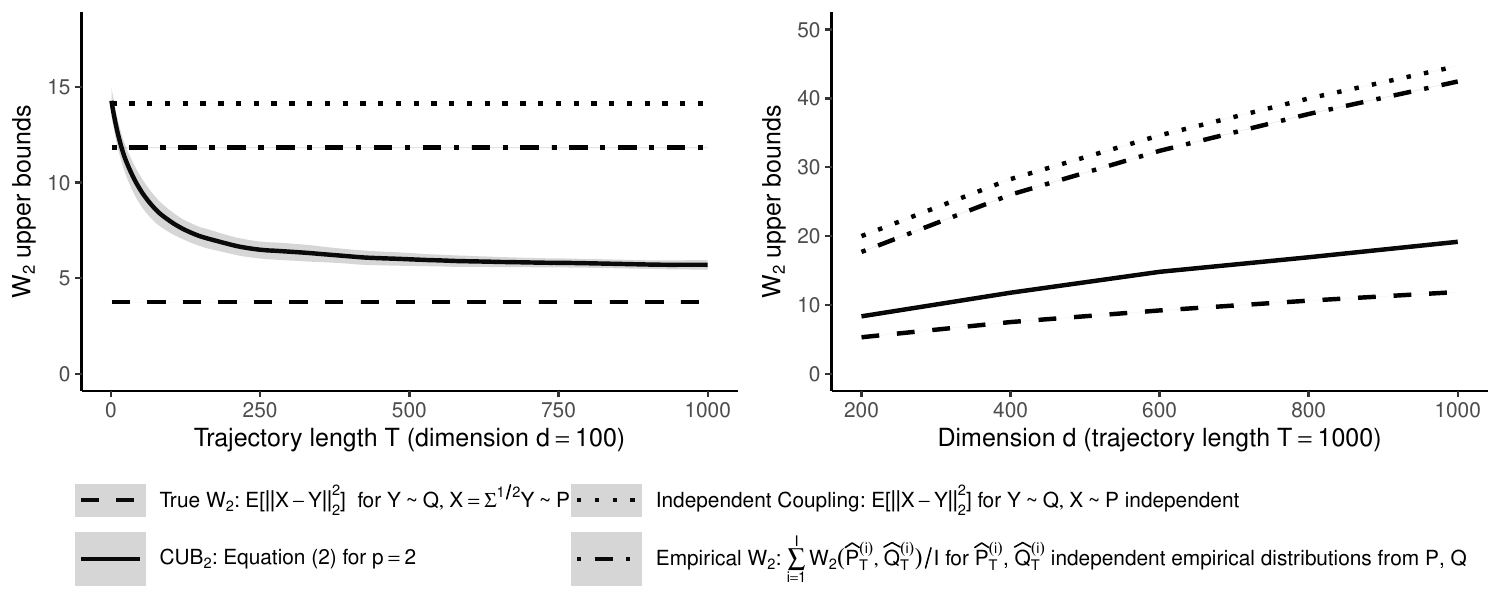}
        \caption{Upper bound estimates for $\mathcal{W}_2(P,Q)$ with $P = \mathcal{N}(0, \Sigma) \text{ where } \Sigma_{i,j} = 0.5^{|i-j|} \text{ for } 1 \leq i,j \leq d$, $Q= \mathcal{N}(0, I_d)$, and metric $c(x,y) = \twonorm{x-y}$. See Sec.~\ref{subsection:wass_ub}. 
    }
    \label{fig:stylized_example_mvn_combined}
\end{figure}

\subsection{Bias of approximate MCMC methods} \label{subsection:bias_ub}
The unadjusted Langevin algorithm (ULA)
is a popular 
approximate MCMC counterpart to MALA. It has the same proposal step as 
MALA but now all proposed states are accepted. 
The lack of a Metropolis--Hastings accept-reject step 
leads to ULA having a lower computation 
costs per iteration than MALA, which is beneficial for
applications with large datasets \cite[e.g.,][]{nemeth2021stochasticJASA}. 
On the other hand, ULA is asymptotically biased
\citep{durmus2019highBERNOULLI}. 
In this section, we consider upper bounds 
of the Wasserstein distance between the limiting distribution of
ULA and the original target distribution of interest on a stylized example. 

\begin{figure}[!htb]
\captionsetup[subfigure]{font=footnotesize,labelfont=footnotesize}
    \centering
    \begin{subfigure}[b]{0.65\textwidth}
        \includegraphics[width=\textwidth]{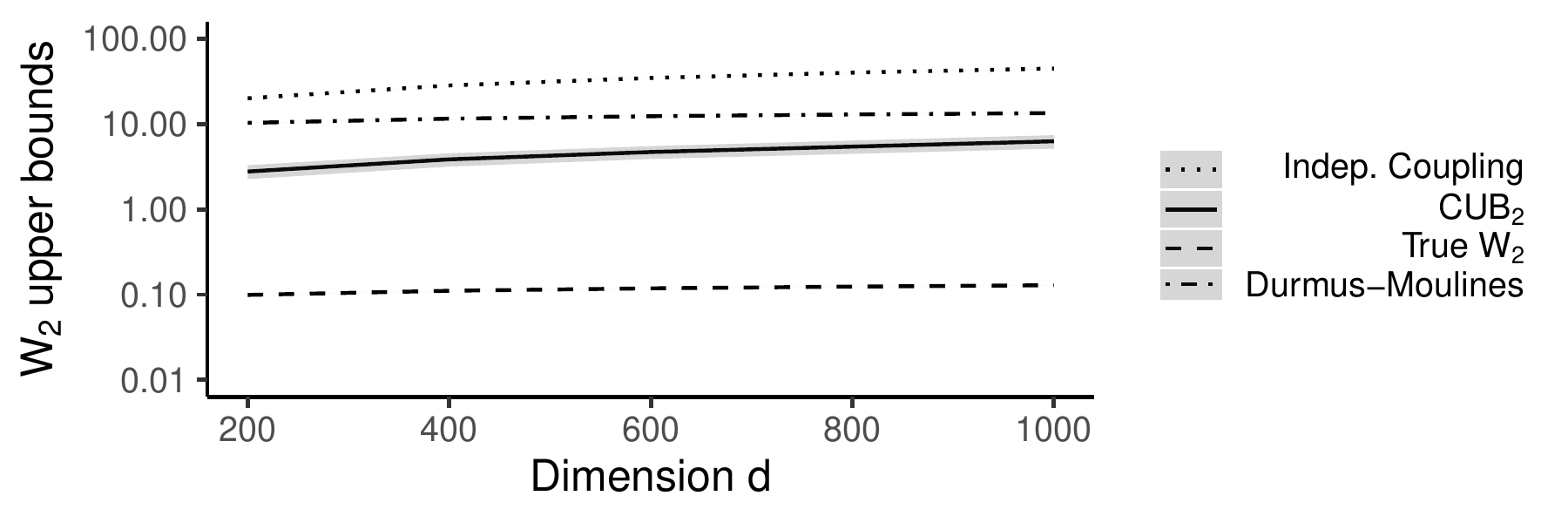}
    \end{subfigure}
    \caption{Upper bound estimates for the $2$-Wasserstein distance with $c(x,y) = \twonorm{x-y}$ between the 
    limiting distributions of ULA and MALA targeting $P = \mathcal{N}(0, \Sigma) \text{ where } \Sigma_{i,j} = 0.5^{|i-j|} \text{ for } 1 \leq i,j \leq d$
    on $\mathbb{R}^d$. See Sec.~\ref{subsection:bias_ub}. }
    \label{fig:stylized_example_ula_mala}
\end{figure}
Fig.~\ref{fig:stylized_example_ula_mala} shows the performance of $\cub[2]$ \eqref{eq:W_ublimit1} when the marginal kernels 
$K_1$ and $K_2$ are based, respectively, on the MALA and ULA 
Markov chains targeting the distribution $\mathcal{N}(0, \Sigma)$ on $\mathbb{R}^d$ defined in
\eqref{eq:stylized_example_targets}. 
The MALA kernel $K_1$ produces an \textit{exact} Markov chain which
is $\mathcal{N}(0, \Sigma)$ invariant, and the ULA kernel $K_2$ produces 
an \textit{approximate} Markov chain which is not $\mathcal{N}(0, \Sigma)$
invariant. The joint kernel $\bar{K}$ is based 
on a CRN coupling of the proposal steps of MALA and ULA, and
is given in Alg.~\ref{algo:mala_ula_crn} of App.~\ref{appendices:algos}. 
We again use a step size of $\sigma_P=\sigma_Q=0.5d^{-1/6}$ for both  marginal chains (following existing guidance for step size choice \citep{roberts1998optimalJRSSB})
and initialize $X^{(i)}_0 \sim \mathcal{N}(0, I_d)$ and $Y^{(i)}_0 \sim \mathcal{N}(0, I_d)$ 
independently for each coupled chain $i$. Let $P_t$ and $Q_t$ denote the marginal distribution of
 $X^{(i)}_t$ and $Y^{(i)}_t$ respectively. 
We show in App.~\ref{appendices:stylized_ula_mala} that 
$P_t \overset{t \rightarrow \infty}{\Rightarrow} P \defeq \mathcal{N}(0, \Sigma)$, $Q_t = \mathcal{N} \big(0, \sigma_Q^2 \sum_{j=0}^{t-1} B^{2j}\big)$,
and $Q_t \overset{t \rightarrow \infty}{\Rightarrow} Q \defeq \mathcal{N}(0, \sigma_Q^2 (I_d - B^2)^{-1})$,
where $B \defeq ( I_d - (\sigma_Q^2/2) \Sigma^{-1} )$ and the weak convergence of $Q_t$ to $Q$ holds for $\sigma_Q$ 
sufficiently small.

Fig.~\ref{fig:stylized_example_ula_mala} compares several approaches to bounding the asymptotic $\mathcal{W}_2(P,Q)$ bias of ULA. 
The solid line (\raisebox{-3pt}{\tikz{
\fill[lightgray!50] (0,0) rectangle (0.6,0.45);
\draw[black,solid,line width=2pt](0,0.225) -- (0.6,0.225)
}}) displays our coupling upper bound estimate.
For each dimension $d$, it is calculated using $\cub[2]$
\eqref{eq:W_ublimit1} with 
$I=10$, $S=1000$, and $T=3000$. 
The dashed line 
(\raisebox{-3pt}{\tikz{
\fill[lightgray!50] (0,0) rectangle (0.6,0.45);
\draw[black,dashed,line width=2pt](0,0.225) -- (0.6,0.225)
}})
shows the true asymptotic bias $\mathcal{W}_2(P,Q)$ and 
the dotted 
(\raisebox{-3pt}{\tikz{
\fill[lightgray!50] (0,0) rectangle (0.6,0.45);
\draw[black,dotted,line width=2pt](0,0.225) -- (0.6,0.225)
}})
line shows the independent coupling upper bound, both 
of which can be computed exactly in this example.  
The dot-dashed line 
(\raisebox{-3pt}{\tikz{
\fill[lightgray!50] (0,0) rectangle (0.6,0.45);
\draw[black,dash pattern={on 2pt off 3pt on 7pt off 3pt},line width=2pt](0,0.225) -- (0.6,0.225)
}})
plots the analytic ULA bias upper bounds of \citet[Cor.~9]{durmus2019highBERNOULLI} 
(see App.~\ref{appendices:stylized_ula_mala} for more details). 
The tailored Durmus-Moulines bounds are significantly tighter than the convenient independent coupling bound, but $\cub[2]$ is tighter still, offering significantly improved estimates for all dimensions.

\section{Properties and Implementation} \label{section:properties}
In this section we establish the consistency of the estimators in Sec.~\ref{section:sample_quality},
describe how to sample from the joint kernel $\bar{K}$ in Alg.~\ref{algo:coupled_chain_general},
investigate the theoretical properties of our upper bounds, and compare to alternative approaches. All
proofs are in App.~\ref{appendices:proofs}.

\subsection{Consistency of coupling upper bounds} \label{subsection:consistency}
We begin by establishing the consistency of coupling upper bound estimators.
Our first result bounds the Wasserstein distance between coupled chains in terms of an instantaneous CUB estimator related to the time-averaged estimator in \eqref{eq:W_ublimit1}.

\begin{Prop}[Consistency of instantaneous \cubname]
\label{prop:Wp_UB_t}
Let $(X^{(i)}_t, Y^{(i)}_t)_{t \geq 0}$ for $i=1,\mydots,I$  denote coupled chains generated independently from  Algorithm 
\ref{algo:coupled_chain_general} 
with marginal distributions $X^{(i)}_t\sim P_t$ and $Y^{(i)}_t\sim Q_t$ at time $t$.
For each $t\geq 0$, define the \emph{instantaneous \cubname} estimator 
\begin{talign} \label{eq:W_ublimit2}
\cub[p,t] \defeq \Big( \frac{1}{I} \sum_{i=1}^I c(X^{(i)}_t, Y^{(i)}_t)^p \Big)^{1/p}.
\end{talign}
If $P_s$ and $Q_s$ have finite moments of order $p$ for all $s \leq t$, then $\cub[p,t]$ has finite moments of order $p$, and, as $I\to\infty$,
\begin{talign} \label{eq:Wp_UB_t}
\cub[p,t]^p \toasL \mathbb{E}[ \cub[p,t]^p ] \geq \mathcal{W}_p(P_t, Q_t)^p.
\end{talign}
\end{Prop}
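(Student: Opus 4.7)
The plan is to recognize that the $I$ coupled chains produce i.i.d.\ realizations of a single joint law in $\Gamma(P_t, Q_t)$, after which the three conclusions reduce to (i) a moment bound from the triangle inequality, (ii) the strong law of large numbers, and (iii) the definition of $\mathcal{W}_p$ as an infimum over couplings.

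First I would set up notation: because the $I$ chains are generated independently from the same joint kernel $\bar{K}$ and common initial distribution $\bar{I}_0$, the pairs $\{(X^{(i)}_t,Y^{(i)}_t)\}_{i=1}^I$ are i.i.d.\ copies of a single $\mathcal{X}\times\mathcal{X}$-valued random element whose law I call $\pi_t$. A short induction on $t$ using the marginal property \eqref{eq:joint_kernel} of $\bar{K}$ and the marginal laws $P_t$, $Q_t$ at step $t$ confirms that $\pi_t \in \Gamma(P_t, Q_t)$.

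Next I would establish integrability. Pick any $x_0 \in \mathcal{X}$ witnessing the finite $p$-th moment of both $P_t$ and $Q_t$. The triangle inequality and convexity of $u\mapsto u^p$ for $p\geq 1$ give
\[
c(x,y)^p \;\leq\; 2^{p-1}\bigl(c(x_0,x)^p + c(x_0,y)^p\bigr).
\]
Integrating against $\pi_t$ yields $\mathbb{E}[c(X^{(1)}_t,Y^{(1)}_t)^p]<\infty$, and averaging over $i$ gives $\mathbb{E}[\cub[p,t]^p]<\infty$.

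Then, since $\{c(X^{(i)}_t,Y^{(i)}_t)^p\}_{i=1}^I$ is an i.i.d.\ sequence of integrable random variables with common mean $\mathbb{E}[\cub[p,t]^p]$, Kolmogorov's strong law of large numbers and the $L^1$ law of large numbers for i.i.d.\ integrable summands together give $\cub[p,t]^p \toasL \mathbb{E}[\cub[p,t]^p]$ as $I \to \infty$. Finally, since $\pi_t \in \Gamma(P_t,Q_t)$, the definition \eqref{eq:Wp_defn} of the Wasserstein distance yields
\[
\mathcal{W}_p(P_t,Q_t)^p \;\leq\; \int_{\mathcal{X}\times\mathcal{X}} c(x,y)^p\,d\pi_t(x,y) \;=\; \mathbb{E}[c(X^{(1)}_t,Y^{(1)}_t)^p] \;=\; \mathbb{E}[\cub[p,t]^p],
\]
which gives the stated inequality.

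The only step that requires any care is the identification of $\pi_t$ as an element of $\Gamma(P_t,Q_t)$, which is bookkeeping from \eqref{eq:joint_kernel}; the hypothesis that $P_s,Q_s$ have finite $p$-th moments for all $s\leq t$ is used only to ensure this inductive moment-propagation argument is well posed at every intermediate step. The rest is the elementary coupling characterization of $\mathcal{W}_p$ combined with the classical strong law.
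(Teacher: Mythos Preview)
Your proposal is correct and follows essentially the same approach as the paper's proof: both obtain the inequality $\mathcal{W}_p(P_t,Q_t)^p \leq \mathbb{E}[c(X_t,Y_t)^p] = \mathbb{E}[\cub[p,t]^p]$ from the coupling representation of the Wasserstein distance, and both appeal to the strong law of large numbers for nonnegative i.i.d.\ integrable summands to get the a.s.\ and $L^1$ convergence. Your write-up is somewhat more explicit---spelling out the triangle-inequality moment bound and the fact that the joint law $\pi_t$ lies in $\Gamma(P_t,Q_t)$---whereas the paper states these points tersely and packages the a.s.\,+\,$L^1$ convergence into a short lemma proved via reverse martingales; but there is no substantive difference in the argument.
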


Our next result shows that the estimator $\cub[p]$ \eqref{eq:W_ublimit1} consistently bounds the Wasserstein distance between time-averaged marginal distributions. 
\begin{Cor}[Consistency of CUB for time-averaged marginals]
\label{cor:Wp_UB_average_t}
Under the assumptions and notation of Prop.~\ref{prop:Wp_UB_t}, 
consider the estimator $\cub[p]$ \eqref{eq:W_ublimit1} with 
any number of independent chains $I \geq 0$, and trajectories with 
burn-in $S \geq 1$ and length $T \geq S$. 
Then $\cub[p]$ has finite moments of order $p$, and as $I\to\infty$,
\begin{talign}\cub[p]^p \toasL \mathbb{E}[ \cub[p]^p ] \geq \mathcal{W}_p ( \frac{1}{T-S} \sum_{t=S+1}^T P_t, \frac{1}{T-S} \sum_{t=S+1}^T Q_t )^p.
\end{talign}
\end{Cor}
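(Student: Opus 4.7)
The plan is to reduce the corollary to Proposition \ref{prop:Wp_UB_t} by writing $\cub[p]^p$ as a finite average of instantaneous estimators and then applying joint convexity of $\mathcal{W}_p^p$. Specifically, observe that
\begin{equation}
\cub[p]^p = \frac{1}{T-S}\sum_{t=S+1}^T \cub[p,t]^p.
\end{equation}
Under the assumptions inherited from Proposition \ref{prop:Wp_UB_t}, each $\cub[p,t]^p$ has finite mean, so $\cub[p]^p$ does as well. A finite linear combination of sequences that converge almost surely and in $L^1$ converges almost surely and in $L^1$ to the corresponding combination of limits, so Proposition \ref{prop:Wp_UB_t} gives
\begin{equation}
\cub[p]^p \toasL \frac{1}{T-S}\sum_{t=S+1}^T \mathbb{E}[\cub[p,t]^p] = \mathbb{E}[\cub[p]^p],
\end{equation}
where the final equality uses Fubini/linearity of expectation.

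Next, I would invoke the lower bound from Proposition \ref{prop:Wp_UB_t}, $\mathbb{E}[\cub[p,t]^p] \geq \mathcal{W}_p(P_t, Q_t)^p$, to conclude
\begin{equation}
\mathbb{E}[\cub[p]^p] \geq \frac{1}{T-S}\sum_{t=S+1}^T \mathcal{W}_p(P_t, Q_t)^p.
\end{equation}
It then remains to establish the joint convexity bound
\begin{equation}
\frac{1}{T-S}\sum_{t=S+1}^T \mathcal{W}_p(P_t, Q_t)^p \geq \mathcal{W}_p\Big(\bar{P}, \bar{Q}\Big)^p,
\end{equation}
where $\bar P = \frac{1}{T-S}\sum_{t=S+1}^T P_t$ and likewise $\bar Q$. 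This is the one non-routine step: for each $t$, let $\gamma_t \in \Gamma(P_t, Q_t)$ be an optimal coupling attaining $\mathcal{W}_p(P_t,Q_t)^p$ (existence is standard on a complete separable metric space, see \citet{villani2008optimal}). Then $\bar\gamma \defeq \frac{1}{T-S}\sum_{t=S+1}^T \gamma_t$ lies in $\Gamma(\bar P, \bar Q)$ because marginalization is linear, and evaluating the cost on this feasible (but not necessarily optimal) coupling yields the desired inequality directly from \eqref{eq:Wp_defn}.

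The main obstacle is the convexity step, though it is a well-known property of $\mathcal{W}_p^p$; the remaining ingredients are linearity of expectation, finiteness of moments inherited from Proposition \ref{prop:Wp_UB_t}, and stability of almost sure and $L^1$ convergence under finite linear combinations. Combining the two inequalities with the $L^1$ convergence above gives the statement of the corollary.
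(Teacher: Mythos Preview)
Your proposal is correct and follows essentially the same route as the paper: both reduce to Proposition \ref{prop:Wp_UB_t} for the per-time lower bound $\mathbb{E}[\cub[p,t]^p]\geq \mathcal{W}_p(P_t,Q_t)^p$, and both obtain the mixture inequality by averaging the optimal couplings $\gamma_t$ (the paper phrases this via a uniform random index, which is exactly your $\bar\gamma$). The only cosmetic difference is that the paper applies its SLLN lemma directly to $\cub[p]^p$ as an i.i.d.\ average over chains, whereas you obtain the a.s.\ and $L^1$ convergence by summing the per-$t$ limits from Proposition \ref{prop:Wp_UB_t}; both arguments are valid and equivalent here.
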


An important implication of Cor.~\ref{cor:Wp_UB_average_t} is that $\cub[p]$ \eqref{eq:W_ublimit1} consistently bounds the Wasserstein distance between stationary distributions whenever its chains are marginally initialized at stationarity.

\begin{Cor}[Consistency of \cubname with stationary initialization]
\label{cor:Wp_UB}
Under the assumptions and notation of Prop.~\ref{prop:Wp_UB_t}, 
suppose kernels $K_1$ and $K_2$ have stationary distributions $P$ and $Q$ respectively,
where $P$ and $Q$ have finite moments of order $p$. Suppose we initialize 
$(X_0, Y_0) \sim \bar{I}_0$ such that $X_0 \sim P$ and $Y_0 \sim Q$ marginally.
Then for any number of independent chains 
$I \geq 0$, trajectories with burn-in $S \geq 1$ and length $T \geq S$, 
the estimator $\cub$ \eqref{eq:W_ublimit1} 
has finite moments of order $p$, and as $I\to\infty$,
\begin{talign} \label{eq:Wp_UB}
\cub[p]^p \ \toasL \mathbb{E}[ \cub^p ] \geq \mathcal{W}_p(P, Q)^p.
\end{talign}
\end{Cor}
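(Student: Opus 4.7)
The plan is to deduce Corollary \ref{cor:Wp_UB} as a direct specialization of Corollary \ref{cor:Wp_UB_average_t} by exploiting the stationary initialization. First I would observe that the marginal consistency condition \eqref{eq:joint_kernel} means that along the coupled chain, $X_{t+1} \mid (X_t, Y_t) \sim K_1(X_t, \cdot)$, independent of $Y_t$. Combined with $X_0 \sim P$ and the $K_1$-invariance of $P$, a straightforward induction yields $X_t \sim P$ for every $t \geq 0$. The symmetric argument applied to $(Y_t, K_2, Q)$ gives $Y_t \sim Q$ for every $t \geq 0$. In the notation of Proposition \ref{prop:Wp_UB_t}, this is the statement that $P_t = P$ and $Q_t = Q$ for all $t$.

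Next I would verify the hypotheses of Corollary \ref{cor:Wp_UB_average_t}: the finite $p$-th moment requirement on $P_s$ and $Q_s$ for $s\leq T$ reduces to the finite $p$-th moments of $P$ and $Q$, which is assumed. The corollary then delivers both that $\cub[p]$ has finite $p$-th moments and that, as $I \to \infty$,
\begin{equation*}
\cub[p]^p \toasL \mathbb{E}[\cub[p]^p] \geq \mathcal{W}_p\Big(\frac{1}{T-S}\sum_{t=S+1}^T P_t,\ \frac{1}{T-S}\sum_{t=S+1}^T Q_t\Big)^p.
\end{equation*}
Substituting $P_t = P$ and $Q_t = Q$ collapses each time-averaged mixture to $P$ and $Q$ respectively, so the lower bound becomes $\mathcal{W}_p(P,Q)^p$, which is precisely the claim \eqref{eq:Wp_UB}.

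There is no substantive obstacle here: the corollary is essentially a bookkeeping consequence of Proposition \ref{prop:Wp_UB_t} and Corollary \ref{cor:Wp_UB_average_t} once the stationarity propagation is in hand. The only point worth flagging is that propagation under $\bar K$, rather than under $K_1$ and $K_2$ individually, is what is needed — but the marginal constraint \eqref{eq:joint_kernel} makes this immediate, so no additional assumption on the dependence structure of $\bar K$ is required.
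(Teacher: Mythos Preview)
Your proposal is correct and essentially mirrors the paper's argument: both establish $P_t=P$ and $Q_t=Q$ for all $t$ from stationarity and then invoke the coupling bound, with the only cosmetic difference that you route through Corollary~\ref{cor:Wp_UB_average_t} (letting it handle the time-averaging and the a.s./$L^1$ convergence) whereas the paper appeals directly to Proposition~\ref{prop:Wp_UB_t} and Lemma~\ref{lemma:slln}.
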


We may not always be able to initialize using the marginal stationary distributions $P$ and $Q$. 
To obtain upper bounds on $\mathcal{W}_p(P,Q)$ without starting at the marginal stationary distributions
$P$ and $Q$, we make an assumption related to convergence of the Markov chain marginals
$(P_t)_{t \geq 0}$ and $(Q_t)_{t \geq 0}$.

\begin{Asm}[Convergence of marginal chains]
\label{asm:marginal_conv_Wp} 
As $t \rightarrow \infty$, $P_t$ and $Q_t$ converge in $p$-Wasserstein distance respectively to $P$ and $Q$ with finite moments of order $p$.
\end{Asm}

\begin{Prop}[Consistency when chain marginals converge] 
\label{prop:Wp_UB_marginal_conv}
Under Assump.~\ref{asm:marginal_conv_Wp} and the assumptions and notation of Prop.~\ref{prop:Wp_UB_t}, for all $\epsilon>0$ 
there exists $S \geq 1$ such that for all $T \geq S$, 
the estimator $\cub$ \eqref{eq:W_ublimit1} 
has finite moments of order $p$, and as $I\to\infty$,
\begin{talign} \label{eq:w_ub_burnin}
\cub[p]^p \ \toasL \mathbb{E} \big[ \cub[p]^p \big] \geq \mathcal{W}_p(P, Q)^p - \epsilon.
\end{talign}
\end{Prop}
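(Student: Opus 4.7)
The plan is to reduce Proposition \ref{prop:Wp_UB_marginal_conv} to Corollary \ref{cor:Wp_UB_average_t} together with a quantitative control of how far the time-averaged marginals $\bar{P}_T \defeq \frac{1}{T-S}\sum_{t=S+1}^T P_t$ and $\bar{Q}_T \defeq \frac{1}{T-S}\sum_{t=S+1}^T Q_t$ lie from the true stationary distributions $P$ and $Q$. Corollary \ref{cor:Wp_UB_average_t} already gives $\cub[p]^p \toasL \mathbb{E}[\cub[p]^p] \geq \mathcal{W}_p(\bar{P}_T, \bar{Q}_T)^p$, so the entire burden is to show that, for any $\epsilon>0$, one can pick $S$ large enough that $\mathcal{W}_p(\bar{P}_T, \bar{Q}_T)^p \geq \mathcal{W}_p(P, Q)^p - \epsilon$ for every $T\geq S$. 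To even invoke Corollary \ref{cor:Wp_UB_average_t}, I first note that Assumption \ref{asm:marginal_conv_Wp} forces $P_t, Q_t \in \mathcal{P}_p(\mathcal{X})$ for all sufficiently large $t$ (since convergence in $\mathcal{W}_p$ requires finite moments of order $p$); discarding finitely many initial times if needed ensures the hypotheses of Proposition \ref{prop:Wp_UB_t} are met.

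The key quantitative step is a mixture-coupling bound: for each $t\in\{S+1,\dots,T\}$, let $\gamma_t \in \Gamma(P, P_t)$ be an optimal $p$-Wasserstein coupling. Then $\gamma \defeq \frac{1}{T-S}\sum_{t=S+1}^T \gamma_t$ lies in $\Gamma(P,\bar{P}_T)$ since its marginals are $P$ and $\bar{P}_T$, which gives
\begin{equation}
\mathcal{W}_p(P,\bar{P}_T)^p \ \leq\ \int c(x,y)^p\, d\gamma(x,y)\ =\ \frac{1}{T-S}\sum_{t=S+1}^{T} \mathcal{W}_p(P, P_t)^p,
\end{equation}
and analogously for $Q$. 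Assumption \ref{asm:marginal_conv_Wp} says $\mathcal{W}_p(P,P_t)\to 0$ and $\mathcal{W}_p(Q,Q_t)\to 0$, so for any $\delta>0$ we can choose $S$ large enough that each summand is at most $\delta^p$, yielding $\mathcal{W}_p(P,\bar{P}_T)\leq \delta$ and $\mathcal{W}_p(Q,\bar{Q}_T)\leq \delta$ uniformly in $T\geq S$.

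Now the triangle inequality for $\mathcal{W}_p$ gives
\begin{equation}
\mathcal{W}_p(\bar{P}_T,\bar{Q}_T)\ \geq\ \mathcal{W}_p(P,Q) - \mathcal{W}_p(P,\bar{P}_T) - \mathcal{W}_p(Q,\bar{Q}_T)\ \geq\ \mathcal{W}_p(P,Q) - 2\delta.
\end{equation}
To convert this to a $p$-th power statement, I treat two cases: if $2\delta \leq \mathcal{W}_p(P,Q)$, continuity of $x\mapsto x^p$ on the compact interval $[0,\mathcal{W}_p(P,Q)]$ makes $(\mathcal{W}_p(P,Q)-2\delta)^p \geq \mathcal{W}_p(P,Q)^p - \epsilon$ for $\delta$ sufficiently small; if $2\delta > \mathcal{W}_p(P,Q)$, then $\mathcal{W}_p(P,Q)^p \leq (2\delta)^p$, which is already $\leq \epsilon$ for small enough $\delta$, and the bound is trivial since $\mathcal{W}_p(\bar{P}_T,\bar{Q}_T)^p \geq 0$. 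Choosing $\delta$ small enough to handle both cases and then $S$ large enough to achieve this $\delta$ completes the argument.

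The only nontrivial step is the mixture-coupling inequality, which is where the specific structure of Wasserstein distance (joint convexity in its arguments with respect to mixtures) is essential; once that is in hand, the rest is the triangle inequality plus a routine continuity argument to pass from a $\mathcal{W}_p$ bound to a $\mathcal{W}_p^p$ bound. No new probabilistic convergence result is needed beyond what Corollary \ref{cor:Wp_UB_average_t} already supplies.
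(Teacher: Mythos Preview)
Your argument is correct, but it takes a different route from the paper's proof. The paper works time-by-time rather than through the time-averaged marginals: for each fixed $t$ it applies the triangle inequality
\[
\mathcal{W}_p(P,Q) \leq \mathcal{W}_p(P,P_t) + \mathcal{W}_p(P_t,Q_t) + \mathcal{W}_p(Q_t,Q) \leq \mathcal{W}_p(P,P_t) + \mathbb{E}[c(X_t,Y_t)^p]^{1/p} + \mathcal{W}_p(Q_t,Q),
\]
takes $\liminf_{t\to\infty}$ to get $\mathcal{W}_p(P,Q)^p \leq \liminf_t \mathbb{E}[c(X_t,Y_t)^p]$, and then uses the definition of $\liminf$ to find $S$ with $\mathbb{E}[c(X_t,Y_t)^p] \geq \mathcal{W}_p(P,Q)^p - \epsilon$ for all $t\geq S$; averaging over $t$ immediately yields $\mathbb{E}[\cub[p]^p] \geq \mathcal{W}_p(P,Q)^p - \epsilon$. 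Your approach instead first passes through Corollary~\ref{cor:Wp_UB_average_t} to land on $\mathcal{W}_p(\bar P_T,\bar Q_T)^p$, then controls $\mathcal{W}_p(P,\bar P_T)$ and $\mathcal{W}_p(Q,\bar Q_T)$ via the mixture-coupling inequality, and finally converts the resulting $\mathcal{W}_p$-level bound to a $p$-th power bound by a continuity argument. Both are valid; the paper's route is a bit shorter because it works in $p$-th powers from the outset and never needs the mixture-coupling step or the case analysis for $x\mapsto x^p$, whereas your route has the merit of making the reduction to Corollary~\ref{cor:Wp_UB_average_t} explicit and highlighting the joint-convexity-in-mixtures property of $\mathcal{W}_p^p$.
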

Prop.~\ref{prop:Wp_UB_marginal_conv} establishes that $\cub[p]$ with any initialization $(X_0, Y_0) \sim \bar{I}_0$ consistently bounds $\mathcal{W}_p(P, Q)$ as $I$ and $S$ grow.  
In practice, we can use standard MCMC burn-in diagnostics to select an appropriate burn-in level for our marginal chains of interest
\cite[e.g.,][]{johnson1998coupling, biswas2019estimating, vats2020revisitingSS, vehtari2020rankBA}.
Alternatively, for $p=1$, we can avoid burn-in removal and instead directly correct 
our bound for non-stationarity using the recent $L$-lag coupling approach of \citet{biswas2019estimating}
(see App.~\ref{appendices:llag} for details).

We emphasize that the results of this section hold for any coupled chain sampled 
using Alg.~\ref{algo:coupled_chain_general} with joint kernel $\bar{K}$ satisfying
\eqref{eq:joint_kernel}. For example, this includes both the CRN coupled chains 
and the independently coupled chains from Sec.~\ref{section:sample_quality}, where 
the CRN coupled chains produced more informative upper bounds empirically as shown 
in Figures \ref{fig:stylized_example_mvn_combined} and \ref{fig:stylized_example_ula_mala}. 
We now consider how to sample from $\bar{K}$ and investigate when 
our upper bounds are informative.

\subsection{Algorithms to sample from the coupled kernel $\bar{K}$} \label{subsection:algo_k_bar}
In this section, we develop algorithms to sample from the joint kernel $\bar{K}$ such that the 
estimators from Sec.~\ref{subsection:consistency} can produce informative upper bounds. 
Our construction 
decomposes the overall coupling into two convenient coupling steps based on 
a same-chain coupling kernel $\Gamma_1$ on $\mathcal{X} \times \mathcal{X}$ and 
a perturbative coupling kernel $\Gamma_\Delta$ on $\mathcal{X}$:
\begin{enumerate}
\item $\Gamma_1$ is a Markovian coupling of the kernel $K_1$ with itself: 
for all $x,\tilde{x} \in \mathcal{X}$, $\Gamma_1(x,\tilde{x})$ is a coupling of the distributions $K_1(x,\cdot)$ and $K_1(\tilde{x},\cdot)$. 
\item $\Gamma_\Delta$ is coupling of kernels $K_1$ and $K_2$ from the same point:
for all $z \in \mathcal{X}$, $\Gamma_\Delta(z)$ is a coupling of the distributions $K_1(z,\cdot)$ and 
$K_2(z,\cdot)$. 
\end{enumerate}
This decomposition allows us to exploit the extensive and growing literature on same-chain coupling kernels and their properties (see Section~\ref{subsection:theory}) and to analyze the targeting of two distinct stationary distributions as a simple perturbation to well-studied same-chain couplings.
For  example, when $K_1$ is a Metropolis--Hastings kernel, $\Gamma_1$ can be a CRN coupling of both 
the proposal step and the accept-reject step.
Indeed, we often make use of CRN couplings as a default choice in this work due to their broad applicability and straightforward implementation. 
When the Metropolis--Hastings proposal is based on a spherically symmetric distribution such as
a Gaussian---as in random walk Metropolis--Hastings or the momentum component in 
Hamiltonian Monte Carlo (HMC)---$\Gamma_1$ can be a reflection coupling of the 
proposal step and a CRN coupling of 
the accept-reject step 
\citep[e.g.][]{bou-rabee2020couplingAOAP, oleary2021maximal}.
The kernel $\Gamma_\Delta$ characterizes the perturbation
between the marginal kernels $K_1$ and $K_2$. 
For example, when $K_1$ and $K_2$ are MALA and ULA kernels respectively, 
$\Gamma_\Delta$ can be a CRN coupling of the proposal step. This leads to identical proposals when MALA and 
ULA have the same step size, but the MALA chain will have a further accept-reject step while the ULA chain will always accept the proposal. 
We discuss the choice of $\Gamma_1$ and $\Gamma_\Delta$ further in Sec.~\ref{subsection:theory}. 
Given $\Gamma_1$ and $\Gamma_\Delta$, we sample from the joint kernel $\bar{K}$
using Alg.~\ref{algo:coupled_kernel_single_step}. 
\begin{algorithm}[!h]
\small
\DontPrintSemicolon
\KwIn{Chain states $X_{t-1}$ and $Y_{t-1}$, kernels $K_1$ and $K_2$, coupled kernels $\Gamma_1$ and $\Gamma_\Delta$} 
Sample $(X_t, Z_t, Y_t) | X_{t-1}, Y_{t-1}$ such that  
$(X_t, Z_t) \sim \Gamma_1(X_{t-1}, Y_{t-1})$, $(Z_t,Y_t) \sim \Gamma_\Delta(Y_{t-1})$\;
\Return $(X_{t}, Y_{t})$
\caption{Joint kernel $\bar{K}$ 
which couples the marginal kernels $K_1$ and $K_2$}
\label{algo:coupled_kernel_single_step}
\end{algorithm}

Alg.~\ref{algo:coupled_kernel_single_step} gives the conditional marginal distributions 
$X_t | X_{t-1}, Y_{t-1} \sim K_1(X_{t-1}, \cdot)$, $Z_t | X_{t-1}, Y_{t-1} \sim K_1(Y_{t-1}, \cdot)$,
$Y_t | X_{t-1}, Y_{t-1} \sim K_2(Y_{t-1})$ 
so that $\bar{K}$ satisfies \eqref{eq:joint_kernel}. 
Often Alg.~\ref{algo:coupled_kernel_single_step} can be implemented 
without explicitly sampling $Z_t$.
As an example, consider 
when $K_1$ and $K_2$ are MALA and ULA 
kernels with step sizes $\sigma_P$ and $\sigma_Q$, target distributions $P$ and $Q$, and $\Gamma_1$ and $\Gamma_\Delta$ are CRN coupled kernels. Given 
$(X_{t-1}, Y_{t-1})$, we sample $\epsilon_{CRN} \sim \mathcal{N}(0, I_d)$ 
and calculate the proposals 
$X^* = X_{t-1} + (\sigma^2_P/2) \nabla \log P(X_{t-1}) + \sigma_P \epsilon_{CRN}$, 
$Z^* = Y_{t-1} + (\sigma^2_P/2) \nabla \log P(Y_{t-1}) + \sigma_P \epsilon_{CRN}$, and
$Y^* = Y_{t-1} + (\sigma^2_Q/2) \nabla \log Q(Y_{t-1}) + \sigma_Q \epsilon_{CRN}$.
Then we accept or reject proposals $X^*$ and $Z^*$ based on a Metropolis--Hastings
correction with a common random number $U_{CRN} \sim \Uniform(0,1)$ to obtain 
$X_t$ equal to $X^*$ or $X_{t-1}$, $Z_t$ equal to $Z^*$ or $Y_{t-1}$, and always 
accept $Y^*$ to obtain $Y_t=Y^*$. 
Notably, $Z_t$ need not be explicitly sampled to perform this update of $(X_t, Y_t)$.
This CRN coupling of MALA and ULA is included in Alg.~\ref{algo:mala_ula_crn} of 
App.~\ref{appendices:algos}. 
App.~\ref{appendices:algos} also details general CRN and reflection
couplings between two Metropolis--Hastings kernels. 

We now cover implementation practicalities and potential limitations.
\begin{figure}[!t]
\captionsetup[subfigure]{font=footnotesize, labelfont=footnotesize}
    \centering
    \begin{subfigure}[b]{0.32\textwidth}
        \includegraphics[width=\textwidth]{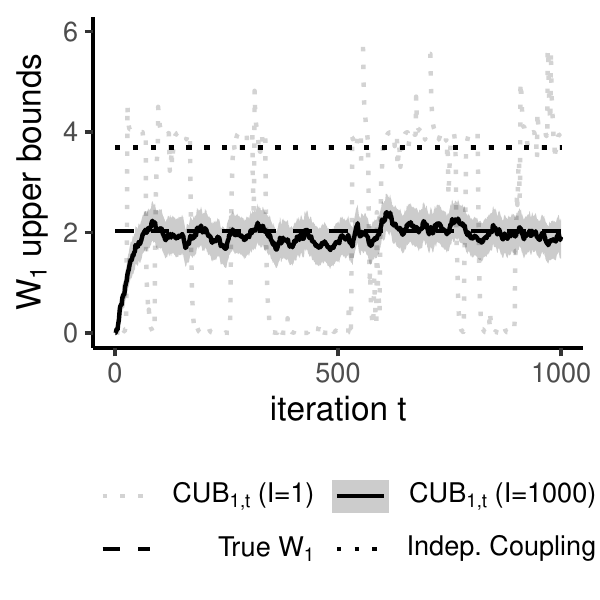}
        \caption{Impact of multiple trajectories on CRN coupling bounds.}
        \label{fig:bimodal1a}
    \end{subfigure}
    \begin{subfigure}[b]{0.32\textwidth}
        \includegraphics[width=\textwidth]{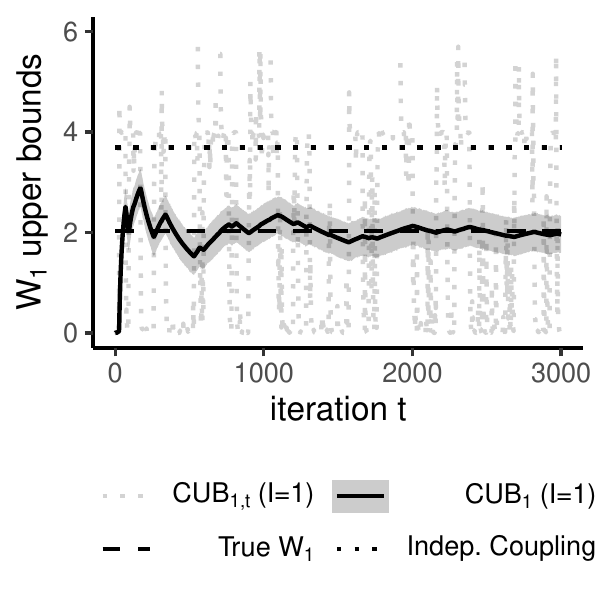}
        \caption{Impact of ergodic averaging on CRN coupling bounds.}
        \label{fig:bimodal1b}
    \end{subfigure}
    ~ %
    \begin{subfigure}[b]{0.32\textwidth}
        \includegraphics[width=\textwidth]{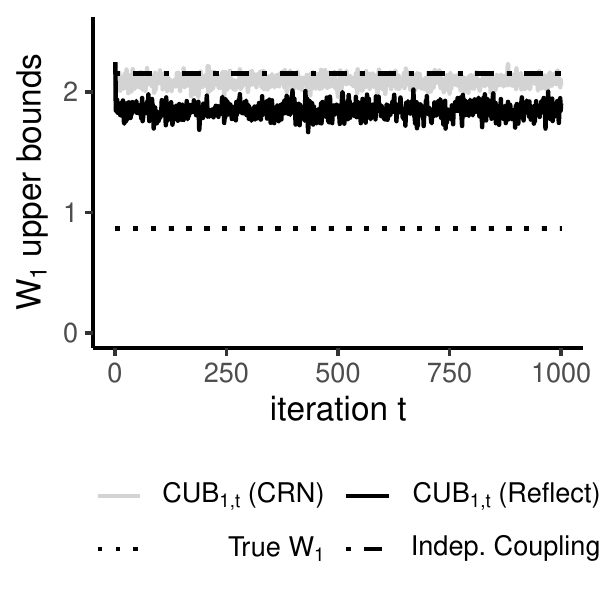} 
        \caption{Impact of coupling choice on bound quality.}
        \label{fig:crn_reflection}
    \end{subfigure}
    \caption{
    Impact of multiple trajectories, ergodic averaging, and coupling choice on coupling bound quality for the $1$-Wasserstein distance with $c(x,y)=\|x-y\|_2$. See Sec.~\ref{subsection:algo_k_bar}.
    }
    \label{fig:implementation}
\end{figure}

\newcommand{\para}[1]{\textbf{#1}\ \ }
\para{Number of coupled chains and chain length to simulate.} We first highlight the value of averaging over time and over independent coupled chains when producing  upper bound estimates. 
Figures \ref{fig:bimodal1a} and \ref{fig:bimodal1b} examine the performance of the $\cub[1]$ \eqref{eq:W_ublimit1} and instantaneous $\cub[1,t]$ \eqref{eq:W_ublimit2} estimators  when bounding the $1$-Wasserstein distance with $c(x,y)=\|x-y\|_2$ between 
$%
P = \frac{1}{2}\mathcal{N}(1_d, I_d) + \frac{1}{2}\mathcal{N}(-1_d, I_d) 
\text{ and } 
Q = \mathcal{N}(1_d, I_d)
\text{ with }
d=4
$ %
so that one of the marginal target distributions is bimodal with well-separated modes. 
We simulate the coupled chains $(X^{(i)}_t, Y^{(i)}_t)_{t \geq 0}$ independently for each $i$ using 
Alg.~\ref{algo:coupled_chain_general}, where the joint kernel $\bar{K}$ is based on a CRN coupling 
of MALA kernels $K_1$ and $K_2$ targeting distributions $P$ and $Q$ respectively. The MALA kernels
have a common step size $d^{-1/6}$ (following existing guidance for step size choice \citep{roberts1998optimalJRSSB}), and we initialize $X^{(i)}_0 = \mathrm{1}_d$ and 
$Y^{(i)}_0 = \mathrm{1}_d$ such that both marginal chains start at the common mode. 
Fig.~\ref{fig:bimodal1a} isolates the impact of averaging over multiple chains when computing the  $\cub[1,t]$ estimate \eqref{eq:W_ublimit2}. 
The grey dotted line shows the single trajectory $(c(X_t^{(1)}, Y_t^{(1)}))_{t=1}^{1000}$ and
the black solid line shows the averaged trajectory $(\bar{c}(X_t, Y_t))_{t=1}^{1000}$
where $\bar{c}(X_t, Y_t) \defeq \sum_{i=1}^{I} c(X_t^{(i)}, Y_t^{(i)})/I$ for $I=100$ independent chains.
The grey dotted line alternates between values close to $0$ or $4$, corresponding to 
when the marginal chains from a single trajectory are both near the common mode ($1_d$) or 
near different modes ($-\mathrm{1}_d$ and $\mathrm{1}_d$) respectively. This illustrates that instantaneous 
upper bound estimator $\cub[1,t]$  \eqref{eq:W_ublimit2} based on only a single trajectory of short chain length can have high variance. 
For multiple independent coupled chains, the averaged trajectory 
has lower variance and higher precision as shown by the grey confidence bands and the black solid line which 
remains close to the true $\mathcal{W}_1(P,Q)$ distance (shown by black dotted line). 
Conveniently, these multiple chains can be simulated in 
parallel.  
Also even for upper bound estimates
based on a single chain, the $\cub[1]$ estimator with $I=1$ 
and a sufficiently large chain length $T$ can produce estimates with low variance, as shown 
by the grey confidence bands and the black solid line in Fig.~\ref{fig:bimodal1b}. 
The optimal choice between number of independent coupled chains and chain length, 
given a certain coupled kernel $\bar{K}$ and a fixed number of parallel processors is an open 
area for further investigation. 
\citet{jacob2020unbiasedJRSSB} contains related motivating discussions for unbiased estimation
with couplings. 

\para{Choice of coupled kernel.}
Secondly, we highlight the importance of the choice of the coupled kernel $\bar{K}$. 
Fig.~\ref{fig:crn_reflection} examines the performance of the $\cub[1]$ \eqref{eq:W_ublimit1} 
estimator when bounding the $1$-Wasserstein distance with $c(x,y)=\|x-y\|_2$ between 
$%
P = \frac{1}{2}\mathcal{N}(2, 1) + \frac{1}{2}\mathcal{N}(-2, 1) 
\text{ and } 
Q = \frac{1}{2}\mathcal{N}(1, 1) + \frac{1}{2}\mathcal{N}(-1, 1),
$ %
so that now both the marginal target distributions are bimodal. Under this setup, 
we simulated coupled chains based on both a CRN coupling and a reflection coupling
of MALA kernels $K_1$ and $K_2$ targeting distributions $P$ and $Q$ respectively. 
The MALA kernels have a common step size $2$, and we initialize such that each 
$X^{(i)}_0 \sim P$ and $Y^{(i)}_0 \sim Q$ are independent. 
In Fig.~\ref{fig:crn_reflection}, the grey and black solid lines show averaged 
trajectories from $I = 1000$ independent coupled chains based on CRN and reflection coupling
respectively. It highlights that reflection coupling gives tighter upper bounds compared to 
CRN for this example. In general, the choice of coupling can have an impact on the 
tightness of our upper bounds. We emphasize that any choice of such couplings still 
produces consistent upper bounds (as shown in Sec.~\ref{subsection:consistency}). In 
practice, one can simulate different coupling algorithms to empirically assess which 
choice produces the tightest upper bounds and even select the smallest of multiple coupling bounds. 
Finally, Fig.~\ref{fig:crn_reflection} highlights that our upper bounds may not always be
very close to the true Wasserstein distance
when the marginal Markov chains have slow mixing rates or 
when the coupling of the marginal transition kernels is not close to optimal. 
Alternative coupling algorithms and tailored
Wasserstein distance upper bounds between mixtures of distributions 
could give further improvements for this example.

\subsection{Interpretable upper bounds for \cubname} \label{subsection:theory}
So far we have established that \cubname \eqref{eq:W_ublimit1} consistently upper bounds
 Wasserstein distances (Sec.~\ref{subsection:consistency})
and developed algorithms to compute \cubname in practice (Sec.~\ref{subsection:algo_k_bar}).
We next derive 
upper bounds on the size of \cubname 
to provide interpretable sufficient conditions under which \cubname is guaranteed to be small.
We emphasize that it is possible for \cubname to be significantly smaller than these interpretable bounds and for \cubname to be small even when the assumptions of the interpretable bounds are not met.
Hence, when bounding Wasserstein distances in practice, we would not recommend computing these intepretable bounds but rather computing the even tighter \cubname Wasserstein bound directly.

Our analysis is based on Markov chain perturbation theory for $\mathcal{W}_1$
\citep{pillai2015ergodicity, johndrow2018error, rudolf2018}, 
which we generalize to $\mathcal{W}_p$ 
for all $p \geq 1$. This is a useful extension, as $\mathcal{W}_2$ in particular is believed to better reflect geometric features and adapt to geometric structure than  $\mathcal{W}_1$  \citep[Rem.~6.6]{villani2008optimal}.
We also discuss examples where the $\mathcal{W}_p$ upper bounds do not 
explicitly depend on the state space dimension and are stable up to a coupling 
of the one-step marginal kernels.

To establish our $\cub$ upper bounds, we assume that the Markovian coupling 
$\Gamma_1$ in Alg.~\ref{algo:coupled_kernel_single_step} 
gives uniform contraction in Wasserstein distance. 
Recall that $\Gamma_1$ is a coupling of the marginal kernel $K_1$ with itself, so 
Assump.~\ref{asm:gammaP_uniform_contract} concerns only the single kernel $K_1$ targeting the single stationary distribution $P$.

\begin{Asm}[Uniform contraction]
\label{asm:gammaP_uniform_contract} 
There exists $\rho \in (0,1)$ such that for all $X_t, \tilde{X}_t \in \mathcal{X}$
and $(X_{t+1}, \tilde{X}_{t+1}) | (X_t, \tilde{X}_t) \sim \Gamma_1(X_t, \tilde{X}_t)$,
$\mathbb{E}[ c(X_{t+1}, \tilde{X}_{t+1})^p | X_t, \tilde{X}_t ]^{1/p} \leq \rho c(X_t, \tilde{X}_t)$.
\end{Asm}

Assump.~\ref{asm:gammaP_uniform_contract} is stronger than the convergence assumption of the 
marginal chain corresponding to kernel $K_1$ (Assump.~\ref{asm:marginal_conv_Wp} for $(P_t)_{t \geq 0}$). 
For many popular MCMC algorithms, Assump.~\ref{asm:gammaP_uniform_contract} has been
established under certain metrics $c$ and coupled kernels $\Gamma_1$ to 
give contraction rates $\rho$ that do not explicitly depend on the dimension of the state space $\mathcal{X}$. 
This includes MALA \citep{eberle2014errorAoAP} 
and HMC \citep{bou-rabee2020couplingAOAP}.
When the target distributions are log-concave, these algorithms satisfy 
Assump.~\ref{asm:gammaP_uniform_contract} with 
$c(x,y)=\| x-y \|_2$ 
and the coupled kernel $\Gamma_1$ based on a CRN coupling. 
For target distributions satisfying a weaker distant dissipativity condition 
\citep{eberle2016reflectionPTRF, gorham2019measuringAoAP} (including, for example, multimodal distributions 
with Gaussian tails), 
these algorithms satisfy Assump.~\ref{asm:gammaP_uniform_contract}  
with $\Gamma_1$ based on a combination of CRN and reflection coupling 
and a metric $\tilde{c}$
satisfying $r \tilde{c}(x,y) \leq \| x-y \|_2 \leq R\, \tilde{c}(x,y)$ for some $0 < r \leq R < \infty$.

Furthermore, we can weaken Assump.~\ref{asm:gammaP_uniform_contract} 
to a geometric ergodicity condition as in \citep{rudolf2018}, where for some
constants $C \geq 1$, $\rho \in (0,1)$,
and for all $L \geq 1$, $\mathbb{E}[ c(X_{t+L}, Y_{t+L})^p | X_t, Y_t ]^{1/p} \leq C \rho^L c(X_t, Y_t)$
for $(X_{t+L}, Y_{t+L}) | (X_t, Y_t) \sim \Gamma^L_P(X_t, Y_t)$
where $\Gamma^L_P(X_t, Y_t)$ denotes a coupling of $L$-steps of the kernel $K_1$
marginally starting from states $X_t$ and $Y_t$. 
Our analysis then is based on the construction of a multi-step coupling kernel.
This may be of independent interest and is
included in App.~\ref{appendices:multi_step} for completeness.

Under Assump.~\ref{asm:gammaP_uniform_contract}, we can upper bound the distance from our
coupled chains explicitly in terms of the initial distribution $\bar{I_0}$, contraction constant $\rho$, and
coupled kernel $\Gamma_\Delta$ corresponding to perturbations between the marginal kernels $K_1$ and $K_2$. 

\begin{Th}[\cubname upper bound]
\label{thm:Wp_UB_single_step}
Let $(X_t, Y_t)_{t \geq 0}$ denote a coupled Markov chain
generated using Alg.~\ref{algo:coupled_chain_general} 
with initial distribution $\bar{I}_0$ and joint kernel $\bar{K}$ 
from Alg.~\ref{algo:coupled_kernel_single_step}. 
Suppose the coupled kernel $\Gamma_1$ satisfies 
Assump.~\ref{asm:gammaP_uniform_contract} for some $\rho \in (0,1)$. Then
\begin{talign}
\mathbb{E}[ \cub[p,t]^p ]^{1/p} = \mathbb{E} [ c(X_{t}, Y_{t})^p ]^{1/p} 
&\leq \rho^t \mathbb{E} [ c(X_0, Y_0)^p ]^{1/p} + \sum_{i=1}^t \rho^{t-i} \mathbb{E} [ \Delta_p(Y_{i-1}) ]^{1/p}
\end{talign}
for all $t \geq 0$, where $(X_0, Y_0) \sim \bar{I}_0$ and $\Delta_p(z) : = \mathbb{E}[ c(X,Y)^p|z]$ 
for $(X,Y)|z \sim \Gamma_\Delta(z)$.
\end{Th}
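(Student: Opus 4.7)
The plan is to proceed by induction on $t$ using the triangle inequality, Minkowski's inequality in $L^p$, and the contraction assumption applied to the intermediate variable $Z_t$ from Algorithm \ref{algo:coupled_kernel_single_step}. The key identity to exploit is the decomposition in Algorithm \ref{algo:coupled_kernel_single_step}: conditionally on $(X_{t-1},Y_{t-1})$, we have $(X_t,Z_t)\sim \Gamma_1(X_{t-1},Y_{t-1})$ and $(Z_t,Y_t)\sim \Gamma_\Delta(Y_{t-1})$, so $Z_t$ serves as a bridge between the two marginal kernels.

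First I would write $a_t \defeq \mathbb{E}[c(X_t,Y_t)^p]^{1/p}$ and apply the pointwise triangle inequality $c(X_t,Y_t)\le c(X_t,Z_t)+c(Z_t,Y_t)$ followed by Minkowski's inequality (valid since $p\ge 1$) to obtain
\begin{equation}
a_t \;\le\; \mathbb{E}[c(X_t,Z_t)^p]^{1/p} \;+\; \mathbb{E}[c(Z_t,Y_t)^p]^{1/p}.
\end{equation}
For the first term, Assumption \ref{asm:gammaP_uniform_contract} applied conditionally on $(X_{t-1},Y_{t-1})$, using that $(X_t,Z_t)\sim\Gamma_1(X_{t-1},Y_{t-1})$, gives $\mathbb{E}[c(X_t,Z_t)^p\mid X_{t-1},Y_{t-1}]\le \rho^p\, c(X_{t-1},Y_{t-1})^p$; taking expectations and $p$-th roots yields $\mathbb{E}[c(X_t,Z_t)^p]^{1/p}\le \rho\, a_{t-1}$. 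For the second term, since $(Z_t,Y_t)\mid Y_{t-1}\sim \Gamma_\Delta(Y_{t-1})$, by the tower property $\mathbb{E}[c(Z_t,Y_t)^p] = \mathbb{E}[\Delta_p(Y_{t-1})]$, giving $\mathbb{E}[c(Z_t,Y_t)^p]^{1/p} = \mathbb{E}[\Delta_p(Y_{t-1})]^{1/p}$.

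Combining the two pieces yields the one-step recursion
\begin{equation}
a_t \;\le\; \rho\, a_{t-1} \;+\; \mathbb{E}[\Delta_p(Y_{t-1})]^{1/p},
\end{equation}
which I would unroll by a trivial induction on $t$ to get $a_t \le \rho^t a_0 + \sum_{i=1}^t \rho^{t-i}\,\mathbb{E}[\Delta_p(Y_{i-1})]^{1/p}$, matching the claimed bound since $a_0 = \mathbb{E}[c(X_0,Y_0)^p]^{1/p}$ and $\mathbb{E}[\cub[p,t]^p]^{1/p} = a_t$ by definition \eqref{eq:W_ublimit2}.

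The only subtlety, and the step I would verify carefully, is the conditioning in the contraction bound: Assumption \ref{asm:gammaP_uniform_contract} is stated for pairs sampled from $\Gamma_1$ starting from an arbitrary pair in $\mathcal{X}\times\mathcal{X}$, and it must be applied here with the starting pair $(X_{t-1},Y_{t-1})$ rather than $(X_{t-1},Z_{t-1})$. This matches the construction in Algorithm \ref{algo:coupled_kernel_single_step}, which explicitly sets $(X_t,Z_t)\sim\Gamma_1(X_{t-1},Y_{t-1})$, so the contraction applies directly with contraction reference point $c(X_{t-1},Y_{t-1})$; this is precisely what lets the recursion close on $a_{t-1}$ rather than on some hybrid quantity. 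Everything else is a routine application of Minkowski and the tower property, and no additional moment finiteness hypothesis is needed beyond what is already implicit in $a_0 < \infty$ and $\mathbb{E}[\Delta_p(Y_{i-1})]<\infty$.
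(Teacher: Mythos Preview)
Your proof is correct and follows essentially the same approach as the paper: both arguments introduce the intermediate variable $Z_t$ from Algorithm~\ref{algo:coupled_kernel_single_step}, apply the triangle inequality followed by Minkowski's inequality to split $\mathbb{E}[c(X_t,Y_t)^p]^{1/p}$ into a contraction term and a perturbation term, invoke Assumption~\ref{asm:gammaP_uniform_contract} on the first and the definition of $\Delta_p$ on the second, and then unroll the resulting recursion by induction. Your careful remark about the contraction being applied with starting pair $(X_{t-1},Y_{t-1})$ is exactly the point that makes the recursion close, and matches the paper's construction.
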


For $\cub[p,t]$ based on a metric $c$, one obtains an analogous bound if Assump.~\ref{asm:gammaP_uniform_contract} instead holds for a dominating metric $\tilde{c}$, i.e., for $\tilde{c}$  satisfying
$c(x,y) \leq R\, \tilde{c}(x,y) $ for some constant $R \in (0,\infty)$. Then 
$ \mathbb{E} [ c(X_{t}, Y_{t})^p ]^{1/p} \leq R\, \mathbb{E} [ \tilde{c}(X_{t}, Y_{t})^p ]^{1/p}$.
Also, when the marginal distributions $(Q_t)_{t \geq 0}$ converge, we can obtain a simpler
expression for the upper bound. 

\begin{Cor}[\cubname upper bound under marginal convergence]
\label{cor:Wp_UB_single_step}
Under the notation and assumptions of Thm.~\ref{thm:Wp_UB_single_step}, 
suppose that the marginal distributions $Q_t$ converge in $p$-Wasserstein 
distance to some distribution $Q$ as 
$t \rightarrow\infty$. 
Then for each $\epsilon >0$, there exists $S \geq 1$ such that for all $t \geq S$,
\begin{talign}
\mathbb{E}[ \cub[p,t]^p ]^{1/p} = \mathbb{E} [ c(X_{t}, Y_{t})^p ]^{1/p} \leq 
\rho^t \mathbb{E} [ c(X_0, Y_0)^p ]^{1/p} + (1 - \rho^t) \frac{\mathbb{E}[\Delta_p(Y^*)]^{1/p}}{1-\rho} + \epsilon.
\end{talign}
where $(X_0, Y_0) \sim \bar{I}_0$, $\Delta_p(z) \defeq \mathbb{E}[ c(X,Y)^p|z]$ for $(X,Y) \sim \Gamma_\Delta(z)$, and $Y^* \sim Q$.
\end{Cor}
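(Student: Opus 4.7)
The plan is to start from Theorem~\ref{thm:Wp_UB_single_step} and use the hypothesized $p$-Wasserstein convergence $Q_t\to Q$ to replace the summation $\sum_{i=1}^t \rho^{t-i}\mathbb{E}[\Delta_p(Y_{i-1})]^{1/p}$ by its closed-form geometric limit $\mathbb{E}[\Delta_p(Y^*)]^{1/p}(1-\rho^t)/(1-\rho)$, at the cost of an $\epsilon$ slack. The initial-state term $\rho^t\mathbb{E}[c(X_0,Y_0)^p]^{1/p}$ already matches the desired form and will be carried through unchanged.

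The first substantive step is to show that $\mathbb{E}[\Delta_p(Y_{i-1})]^{1/p}\to \mathbb{E}[\Delta_p(Y^*)]^{1/p}$ as $i\to\infty$. Using $\mathcal{W}_p(Q_{i-1},Q)\to 0$, one couples $Y_{i-1}\sim Q_{i-1}$ optimally with $Y^*\sim Q$ so that $\mathbb{E}[c(Y_{i-1},Y^*)^p]^{1/p}=\mathcal{W}_p(Q_{i-1},Q)\to 0$, and then invokes continuity of $z\mapsto\Delta_p(z)^{1/p}$ (inherited from the construction of $\Gamma_\Delta$ and the base kernels $K_1,K_2$) combined with the reverse triangle inequality on $L^p$-norms. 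I expect this regularity step to be the main technical hurdle, since nothing in the stated assumptions automatically forces $\Delta_p$ to be regular enough in $z$; for the CRN and reflection couplings used throughout the paper, however, the required regularity is inherited from the proposal and Metropolis--Hastings structures of the underlying kernels, so the convergence holds.

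Given this convergence, fix $\epsilon>0$, choose $\eta\in(0,(1-\rho)\epsilon/2)$, and pick $M\ge 1$ so that $\mathbb{E}[\Delta_p(Y_{i-1})]^{1/p}\le \mathbb{E}[\Delta_p(Y^*)]^{1/p}+\eta$ for all $i-1\ge M$. Setting $C_M := \sum_{i=1}^M \mathbb{E}[\Delta_p(Y_{i-1})]^{1/p}$ and splitting the sum at $M$ gives
\begin{align*}
\sum_{i=1}^t \rho^{t-i}\mathbb{E}[\Delta_p(Y_{i-1})]^{1/p} &\le \rho^{t-M} C_M + \bigl(\mathbb{E}[\Delta_p(Y^*)]^{1/p}+\eta\bigr)\tfrac{1-\rho^{t-M}}{1-\rho}.
\end{align*}
Since $\rho^{t-M}\ge \rho^t$, we have $(1-\rho^{t-M})/(1-\rho)\le (1-\rho^t)/(1-\rho)$, so the second term is at most $\mathbb{E}[\Delta_p(Y^*)]^{1/p}(1-\rho^t)/(1-\rho)+\eta/(1-\rho)\le \mathbb{E}[\Delta_p(Y^*)]^{1/p}(1-\rho^t)/(1-\rho)+\epsilon/2$ by the choice of $\eta$. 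Choosing $S\ge M$ large enough that $\rho^{S-M}C_M\le \epsilon/2$, the two slack contributions together are at most $\epsilon$ for every $t\ge S$; combining with the initial-state term from Theorem~\ref{thm:Wp_UB_single_step} yields exactly the bound claimed by the corollary.
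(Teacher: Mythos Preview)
Your proposal is correct and follows essentially the same route as the paper: both reduce to showing $a_{k}:=\mathbb{E}[\Delta_p(Y_k)]^{1/p}\to a:=\mathbb{E}[\Delta_p(Y^*)]^{1/p}$ and then control $\sum_{i=1}^t\rho^{t-i}a_{i-1}$ by a geometric sum plus an $\epsilon$ slack. The only cosmetic differences are that the paper packages your split-at-$M$ argument as a standalone lemma (Lemma~\ref{lemma:series_limit}, applied to the sequence $|a_{i-1}-a|\to 0$), and that the paper simply asserts $a_k\to a$ ``because $Q_t$ converges in $p$-Wasserstein distance to $Q$'' without the regularity discussion you flag---so your caution there is warranted rather than a deficiency.
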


Cor.~\ref{cor:Wp_UB_single_step} gives 
$
\mathcal{W}_p(P,Q) \leq 
\liminf_{t \rightarrow \infty} \mathbb{E}[ \cub[p,t]^p ]^{1/p} \leq 
\mathbb{E}[\Delta_p(Y^*)]^{1/p} / (1-\rho),
$ 
implying that \cubname estimators 
may give informative empirical upper bounds 
when the expected perturbation $\mathbb{E}[\Delta_p(Y^*)]$ for $Y^* \sim Q$ 
is small. Further if the contraction rate $\rho$ does not explicitly depend on the dimension,
then our upper bounds do not increase unfavorably with dimension and remain informative 
in high dimensional settings. Hence Cor.~\ref{cor:Wp_UB_single_step} 
provides interpretable sufficient conditions for \cubname to be dimension-free,  as in 
Figs.~\ref{fig:stylized_example_mvn_combined} and \ref{fig:stylized_example_ula_mala}. 

Our next result covers the case in which the marginals $(Q_t)_{t \geq 0}$ do not converge to any limiting distribution in $p$-Wasserstein distance.  
In this case, our upper bound is in terms of perturbations between the marginal kernels 
weighted by a Lyapunov function of $K_2$. 

\begin{Prop}[\cubname upper bound weighted by a Lyapunov function]
\label{prop:Wp_UB_single_step_lyapunov}
Under the notation and assumptions of Thm.~\ref{thm:Wp_UB_single_step}, 
let $V: \mathcal{X} \rightarrow [0, \infty)$ 
satisfy 
$\mathbb{E}[V(Y_{t+1})^p|Y_t=z] \leq \gamma V(z)^p + L$
for some fixed constants
$\gamma \in [0,1)$ and $L \in [0,\infty)$ and all $z \in \mathcal{X}$. Define
$\delta \defeq \sup_{z \in \mathcal{X}} \big( \frac{\Delta_p(z)
}{1 + V(z)^p} \big)^{1/p}$ and 
$\kappa \defeq \big(1+\max \big\{ \mathbb{E}[V(Y_{0})^p], \frac{L}{1-\gamma} \big\} \big)^{1/p}$,
where $\Delta_p(z) \defeq \mathbb{E}[ c(X,Y)^p|z]$ for $(X,Y) \sim \Gamma_\Delta(z)$. Then for all $t \geq 0$,
\begin{talign}
\mathbb{E}[ \cub[p,t]^p ]^{1/p} &= 
\mathbb{E} [ c(X_t, Y_t)^p ]^{1/p} \leq \rho^t \mathbb{E} [ c(X_0, Y_0)^p ]^{1/p} + (1-\rho^t)\frac{\delta \kappa}{1-\rho}.
\end{talign}
\end{Prop}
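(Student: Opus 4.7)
The plan is to start from the bound supplied by Theorem \ref{thm:Wp_UB_single_step}, namely
\begin{equation*}
\mathbb{E}[c(X_t,Y_t)^p]^{1/p} \;\leq\; \rho^t\,\mathbb{E}[c(X_0,Y_0)^p]^{1/p} + \sum_{i=1}^t \rho^{t-i}\,\mathbb{E}[\Delta_p(Y_{i-1})]^{1/p},
\end{equation*}
and then control each term $\mathbb{E}[\Delta_p(Y_{i-1})]^{1/p}$ uniformly in $i$ using the Lyapunov drift condition on $V$ under $K_2$. The final answer will then follow by summing a geometric series in $\rho$.

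First, by the very definition of $\delta$, for every $z\in\mathcal{X}$ we have $\Delta_p(z) \leq \delta^p\bigl(1 + V(z)^p\bigr)$. Taking expectations gives $\mathbb{E}[\Delta_p(Y_{i-1})] \leq \delta^p\bigl(1 + \mathbb{E}[V(Y_{i-1})^p]\bigr)$, so the task reduces to a uniform-in-$t$ bound on $\mathbb{E}[V(Y_t)^p]$. For this I would iterate the drift inequality $\mathbb{E}[V(Y_{t+1})^p\mid Y_t=z] \leq \gamma V(z)^p + L$ and take expectations in $Y_t$ to obtain $\mathbb{E}[V(Y_{t+1})^p] \leq \gamma\,\mathbb{E}[V(Y_t)^p] + L$. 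A short induction then shows $\mathbb{E}[V(Y_t)^p] \leq \max\{\mathbb{E}[V(Y_0)^p],\,L/(1-\gamma)\}$ for every $t\geq 0$: denoting this max by $M$, the base case is immediate, and in the induction step $\gamma M + L \leq M$ because $M \geq L/(1-\gamma)$. Consequently $1 + \mathbb{E}[V(Y_{i-1})^p] \leq 1 + M = \kappa^p$, so $\mathbb{E}[\Delta_p(Y_{i-1})]^{1/p} \leq \delta\,\kappa$ for all $i\geq 1$.

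Plugging this uniform bound into the Theorem \ref{thm:Wp_UB_single_step} estimate and summing the geometric series,
\begin{equation*}
\sum_{i=1}^t \rho^{t-i}\,\delta\,\kappa \;=\; \delta\,\kappa\,\sum_{j=0}^{t-1}\rho^{j} \;=\; \delta\,\kappa\,\frac{1-\rho^t}{1-\rho},
\end{equation*}
yields precisely the claimed inequality, and the equality $\mathbb{E}[\cub[p,t]^p]^{1/p} = \mathbb{E}[c(X_t,Y_t)^p]^{1/p}$ is just the definition \eqref{eq:W_ublimit2} combined with independence across $i$. There is no substantive obstacle here beyond cleanly separating the two ingredients: the definition of $\delta$ reduces $\Delta_p$ to $1+V^p$, and the drift condition gives a uniform envelope on $\mathbb{E}[V(Y_t)^p]$; the only point requiring any care is checking that the induction for the Lyapunov envelope uses $M \geq L/(1-\gamma)$, which is built into the definition of $\kappa$.
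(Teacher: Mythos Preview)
Your proposal is correct and follows essentially the same route as the paper: apply Theorem~\ref{thm:Wp_UB_single_step}, use the definition of $\delta$ to reduce to bounding $1+\mathbb{E}[V(Y_{i-1})^p]$, establish a uniform-in-$i$ bound via the Lyapunov drift condition, and sum the geometric series. The only cosmetic difference is that the paper writes out the explicit iterate $\mathbb{E}[V(Y_i)^p] \leq \gamma^i\,\mathbb{E}[V(Y_0)^p] + (1-\gamma^i)\frac{L}{1-\gamma}$ and bounds this convex combination by the max, whereas you induct directly on the bound $\mathbb{E}[V(Y_t)^p] \leq M$; these are equivalent.
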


In the case $p=1$, Prop.~\ref{prop:Wp_UB_single_step_lyapunov}
recovers Thm.~3.1 of \citet{rudolf2018}. 
For such result to be informative, we require functions $V$ 
such that $\delta \kappa$ is small. 
An application of these results to three simple examples based on 
MALA, ULA, and stochastic gradient Langevin dynamics (SGLD) \citep{welling2011bayesianICML}  
chains is given in App.~\ref{appendices:theory_examples}.

\subsection{Comparison with alternative Wasserstein bounds} \label{subsection:comparison}
In this section, we compare our coupling-based Wasserstein bounds with alternatives. 

\para{Empirical Wasserstein and Sinkhorn distances.} 
A common approach to estimating $\mathcal{W}_p(P, Q)$ is to draw independent samples from $P$ and $Q$ and then exactly compute the $\mathcal{W}_p$ distance between the empirical distributions.
This is precisely the empirical Wasserstein estimate that appeared in Fig.~\ref{fig:stylized_example_mvn_combined}. As our next proposition, proved in App.~\ref{sec:empirical_wasserstein_proof}, demonstrates, this empirical Wasserstein approach consistently upper bounds $\mathcal{W}_p(P, Q)$. 
\newcommand{\iid}{\textrm{i.i.d.}\xspace}
\newcommand{\dist}{\sim}
\newcommand{\distiid}{\overset{\textrm{\tiny\iid}}{\dist}}
\begin{Prop}[Empirical Wasserstein distance bounds]
\label{prop:empirical_wasserstein}
For $P$ and $Q$ in $\mathcal{P}_p(\mathcal{X})$,
let $\hat{P}_n$, $\tilde{P}_n$, $\hat{Q}_n$, and $\tilde{Q}_n$ denote empirical distributions of the samples
$(X_i)_{i=1}^n$, $(\tilde{X}_i)_{i=1}^n$, $(Y_i)_{i=1}^n$, and $(\tilde{Y}_i)_{i=1}^n$
respectively, where $X_i, \tilde{X}_i \distiid P$ and, independently, $Y_i, \tilde{Y}_i \distiid Q$ for all $i=1,\mydots,n$.
Then, $\mathcal{W}_p(\hat{P}_n, \hat{Q}_n) \stackrel{\textup{a.s.}}{\to} \mathcal{W}_p(P, Q)$
as $n \rightarrow \infty$, and
\begin{talign}
0 \leq 
\mathbb{E}[ \mathcal{W}_p(\hat{P}_n, \hat{Q}_n)^p]^{1/p} - \mathcal{W}_p(P, Q) 
\leq 
\mathbb{E} [ \mathcal{W}_p(\hat{P}_n, \tilde{P}_n)^p ]^{1/p} +
\mathbb{E} [ \mathcal{W}_p (\hat{Q}_n, \tilde{Q}_n)^p ]^{1/p}.
\end{talign}
\end{Prop}

However, there are two downsides to the empirical Wasserstein approach.
The first is statistical.  
The difference between $\mathbb{E} \big[ \mathcal{W}_p(\hat{P}_n, \hat{Q}_n)^p \big]^{1/p}$ and $\mathcal{W}_p(P, Q)$ can be quite large and decay very slowly in $n$.  
For example, for some $d$-dimensional target distributions, 
$\mathbb{E}[\mathcal{W}_p(\hat{P}_n, \hat{Q}_n)]$ converges to $\mathcal{W}_p(P, Q)$ at rate $\Omega(n^{-1/d})$ when $d>2p$ \citep{weed2019sharpBERNOULLI}.
This can lead to the empirical Wasserstein distance giving loose upper bounds on $\mathcal{W}_p(P, Q)$
when the number of samples does not increase exponentially with dimension. 
The example in Fig.~\ref{fig:stylized_example_mvn_combined} illustrates this curse of dimensionality, where the estimator $\cub$ \eqref{eq:W_ublimit1} with CRN coupling gives tighter upper bounds of $\mathcal{W}_p(P, Q)$ than the empirical Wasserstein estimates. 

The second downside is computational.
Calculating $\mathcal{W}_p(\hat{P}_n, \hat{Q}_n)$ amounts to solving an 
uncapacitated minimum cost flow problem with $\mathcal{O}(n^3 \log n)$ 
computational cost \citep{orlin1988fasterSTOC}, 
prohibitive cost for large sample sizes. A popular alternative is to compute an entropy-regularized Wasserstein distance instead using the Sinkhorn algorithm \citep{cuturi2013sinkhorn}.
A larger value of the regularization parameter $\lambda > 0$ leads to faster computation but also introduces an additional bias that can compromise bound accuracy.
A smaller $\lambda$ leads to more expensive $\mathcal{O}(n^2/(\lambda \epsilon))$ 
computation time for $\epsilon$-accurate solutions \citep{altschuler2017nearNEURIPS}
and potential instability of the Sinkhorn algorithm in practice. 
See App.~\ref{appendices:sinkhorn} for simulations illustrating these issues.

In comparison, our coupling estimators run in time linear in the sample size $n$ and do not require the solution of any expensive optimization problems. 
On the other hand, empirical Wasserstein estimates will eventually converge to the true Wasserstein distance given sufficiently (perhaps exponentially) large sample sizes, so the empirical Wasserstein approach can lead to tighter bounds if one has a substantial computational budget.

\para{The approach of Huggins et al.}
\citet{Huggins2020validatedAISTATS} derive upper bounds for Euclidean  
Wasserstein distances 
in terms of KL or 
$\alpha$-divergences. 
To estimate their upper bounds of $\mathcal{W}_p(P,Q)$ 
for $P$ and $Q$ in $\mathcal{P}_p(\mathbb{R}^d)$
and $P$ absolutely continuous with respect to $Q$, 
\citeauthor{Huggins2020validatedAISTATS} propose importance sampling based estimates which require 
samples from $Q$, evaluations of the normalized density of $Q$, and evaluations of the unnormalized 
density of $P$.  
Fig.~\ref{fig:huggins_dobson_comparison} (left) plots the performance
of the $\mathcal{W}_2$ bounds of \citeauthor{Huggins2020validatedAISTATS} for the example in Sec.~\ref{subsection:wass_ub}. 
The dot-dashed line represents the mean of $I=20$ independent  \citeauthor{Huggins2020validatedAISTATS}  importance-sampling estimators, each with $2T=3000$ samples
from $Q$. 
The $\cub[2]$ estimator plotted for comparison uses $I$ independent CRN coupled chains with trajectory length $T$ and burnin $S=500$.  
In this example, the \citeauthor{Huggins2020validatedAISTATS} bounds   are significantly looser than both our CRN coupling bound and the independent coupling upper bound.
Furthermore, the \citeauthor{Huggins2020validatedAISTATS} estimates exhibit an increasing variance in higher dimensions, as shown by the large grey error bands.
One advantage of the \citeauthor{Huggins2020validatedAISTATS} estimates over $\cub[2]$ 
is that samples from $P$ are not required. 
On the other hand, unlike the \citeauthor{Huggins2020validatedAISTATS} estimates, $\cub$ remains applicable even when the 
density of $Q$ cannot be evaluated.
This case arises for many approximate MCMC algorithms such as ULA in Sec.~\ref{subsection:bias_ub}, the 
stochastic gradient-based samplers in Sec.~\ref{subsection:tall_data}, and the
matrix approximation-based sampler in Sec.~\ref{subsection:half_t}. 

\begin{figure}[!t]
\captionsetup[subfigure]{font=footnotesize,labelfont=footnotesize}
    \centering
    \begin{subfigure}[b]{0.43\textwidth}
        \includegraphics[width=\textwidth]{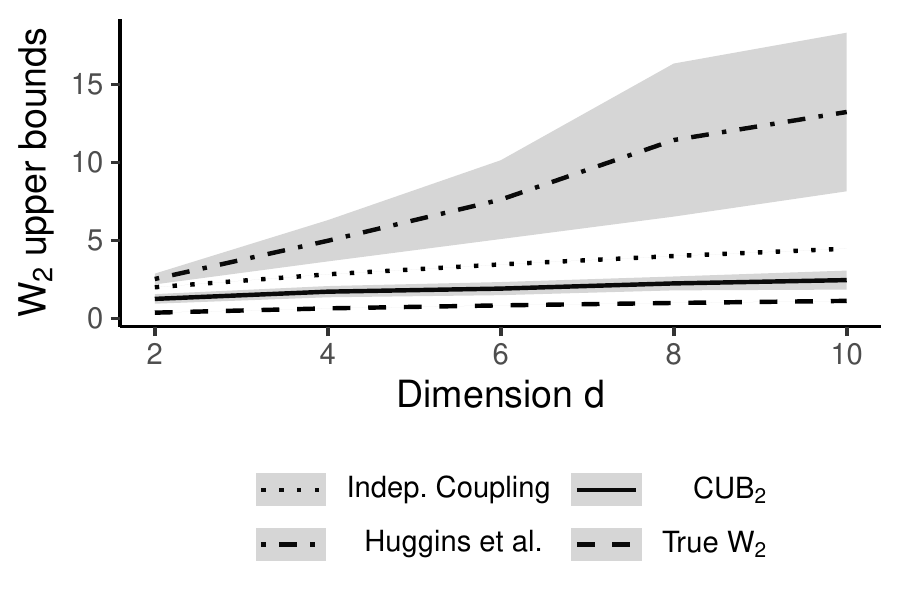}
    \end{subfigure}
    \begin{subfigure}[b]{0.43\textwidth}
        \includegraphics[width=\textwidth]{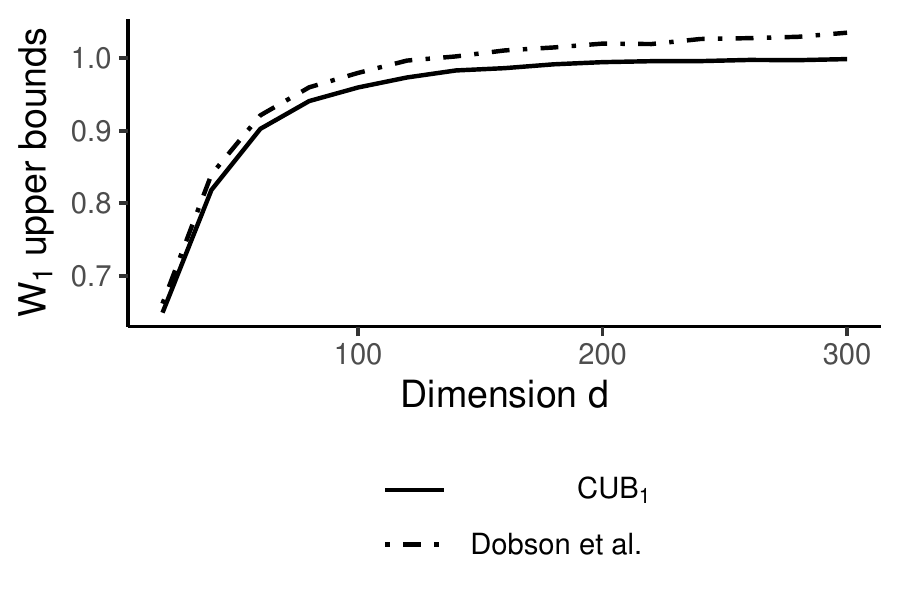} 
    \end{subfigure}
    \caption{\textbf{(Left)} Upper bound estimates for $\mathcal{W}_2$ with $c(x,y) = \twonorm{x-y}$ between 
    $P = \mathcal{N}(0, \Sigma)$ and $Q=\mathcal{N}(0, I_d)$ 
    for
    $[\Sigma]_{i,j}=0.5^{|i-j|}$ for $1 \leq i, j \leq d$. 
    The \citet{Huggins2020validatedAISTATS} bound is looser than $\cub[2]$ and has larger variance as the dimension grows.
    See Sec.~\ref{subsection:comparison} for more details. 
    \textbf{(Right)} Upper bound estimates for 
    $\mathcal{W}_1$ with 
    $c(x,y) = \min \{1, \|x-y\|_2 \}$ between 
    ULA and MALA chains targeting $P = \mathcal{N}(0, \Sigma)$. 
    In line with Prop.~\ref{prop:dobson_bound}, the $\cub[1]$ \eqref{eq:W_ublimit1} estimate is tighter than the \citet{dobson2021usingSIAM} bound employing the same CRN coupling.
    See Sec.~\ref{subsection:comparison} for more details.}
    \label{fig:huggins_dobson_comparison}
\end{figure}

\para{The approach of Dobson et al.} 
\citet{dobson2021usingSIAM} apply couplings to assess the quality of 
numerical approximation of stochastic differential equations.  
Specifically, they focus on the 1-Wasserstein distance with the capped metric 
$c(x,y) = \min \{1, \|x-y\|_2 \}$ on $\mathbb{R}^d$ and derive  
upper bounds in terms of the contraction constant of one of the marginal chains
which are then estimated using couplings. 
Our next result, proved in App.~\ref{appendices:dobson},
shows that $\mathbb{E}[\cub[1]]$ with the same 
coupling provides a tighter upper bound than the proposal of \cite{dobson2021usingSIAM}.

\begin{Prop}[\cubname lower bounds Dobson et al.]
\label{prop:dobson_bound}
Consider the $1$-Wasserstein distance with metric 
$c(x,y) = \min \{1, \|x-y\|_2 \}$ on $\mathbb{R}^d$. 
Then, for any coupling and sufficiently large burn-in, $\mathbb{E}[\cub[1]]$  
\eqref{eq:W_ublimit1} lower bounds the estimated upper bound of \cite{dobson2021usingSIAM}.
\end{Prop}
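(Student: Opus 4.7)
The plan is to exhibit the \cite{dobson2021usingSIAM} bound as the $t\to\infty$ limit of the finite-time estimate underlying $\cub[1]$, and then observe that our finite-time average is necessarily no larger than this limit.

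First, I would identify the exact form of the \cite{dobson2021usingSIAM} upper bound. Under Assumption \ref{asm:gammaP_uniform_contract} with contraction rate $\rho$, their bound is obtained by iterating the one-step contraction--perturbation inequality to its fixed point and takes the form
$$\mathcal{W}_1(P, Q) \leq \frac{\mathbb{E}[\Delta_1(Y^*)]}{1 - \rho}, \qquad Y^* \sim Q,$$
which they then estimate by replacing $\mathbb{E}[\Delta_1(Y^*)]$ with a sample average from one-step couplings started at draws from $Q$, using the same joint coupling $\bar{K}$ as $\cub[1]$.

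Next, I would apply Theorem \ref{thm:Wp_UB_single_step} with $p = 1$, the capped metric $c(x,y)=\min\{1,\|x-y\|_2\}$, and stationary initialization $Y_0 \sim Q$, so that $Y_t \sim Q$ marginally for every $t$. Under this initialization $\mathbb{E}[\Delta_1(Y_{i-1})] = \delta := \mathbb{E}[\Delta_1(Y^*)]$ for every $i$, the geometric sum $\sum_{i=1}^t \rho^{t-i} = (1-\rho^t)/(1-\rho)$ collapses, and since $c\leq 1$ implies $\mathbb{E}[c(X_0,Y_0)]\leq 1$, Theorem \ref{thm:Wp_UB_single_step} yields
$$\mathbb{E}[c(X_t, Y_t)] \;\leq\; \rho^t\,\mathbb{E}[c(X_0, Y_0)] + \frac{1-\rho^t}{1-\rho}\,\delta \;\leq\; \rho^t + \frac{1-\rho^t}{1-\rho}\,\delta.$$

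Finally, averaging over $t \in \{S+1, \ldots, T\}$ and using linearity of expectation,
$$\mathbb{E}[\cub[1]] \;=\; \frac{1}{T-S}\sum_{t=S+1}^T \mathbb{E}[c(X_t, Y_t)] \;\leq\; \frac{\rho^{S+1}}{(T-S)(1-\rho)} + \frac{\delta}{1-\rho}\left(1 - \frac{1}{T-S}\sum_{t=S+1}^T \rho^t\right).$$
Choosing the burn-in $S$ large enough makes the first term arbitrarily small, and the factor $(1 - (T-S)^{-1}\sum_t \rho^t) < 1$ ensures the second term is strictly below $\delta/(1-\rho)$, the Dobson et al. bound. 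Comparing expectations of the sample estimators on either side then gives the claim; if Dobson et al.'s bound is instead evaluated with an estimated $\hat\rho$, the same argument applies pointwise in $\hat\rho$.

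The main obstacle is aligning the two procedures so that the comparison is apples-to-apples: one must use the same joint kernel $\bar{K}$, the same contraction constant $\rho$ in both bounds, and the same stationary draws from $Q$ to estimate $\delta$. Once this alignment is fixed, the inequality is a direct consequence of the geometric-series identity and the capped-metric bound $c\leq 1$; the gap between the two estimates is controlled by the factor $(1-\rho^t)$ that $\cub[1]$ retains but the Dobson estimate discards.
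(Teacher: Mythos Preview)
Your argument has a genuine gap: you have misidentified the Dobson et al.\ bound and, as a consequence, invoked the wrong contraction hypothesis. The Dobson et al.\ framework (spelled out in the paper's Appendix preceding this proof) does \emph{not} assume global uniform contraction as in Assumption~\ref{asm:gammaP_uniform_contract}. Instead, it posits a compact set $\Omega$ with $P(\Omega^c)<\epsilon$, $Q(\Omega^c)<\epsilon$, and contraction with rate $\alpha_\Omega$ holding only when both states lie in $\Omega$. Their upper bound is
\[
\mathcal{W}_1(P,Q) \;\leq\; \frac{\mathbb{E}[\Delta(Y^*)] + 2\epsilon}{1-\alpha_\Omega},
\]
not $\delta/(1-\rho)$. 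Because you assume global contraction, your appeal to Theorem~\ref{thm:Wp_UB_single_step} is not available under Dobson's hypotheses, and the quantity you bound $\mathbb{E}[\cub[1]]$ by is not the Dobson estimate.

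The paper's proof repairs exactly this point and this is where the capped metric does real work. One writes $\mathbb{E}[c(X_t,Y_t)] \leq \mathbb{E}[c(X_t,Z_t)] + \mathbb{E}[c(Z_t,Y_t)]$ as you do, but then splits the first term on the event $\{X_{t-1}\in\Omega,\,Y_{t-1}\in\Omega\}$. On that event the $\alpha_\Omega$-contraction applies; on its complement one uses $c\leq 1$ to bound the contribution by $\mathbb{P}(X_{t-1}\in\Omega^c)+\mathbb{P}(Y_{t-1}\in\Omega^c)$ via a union bound. This yields the recursion
\[
\mathbb{E}[c(X_t,Y_t)] \;\leq\; \alpha_\Omega\,\mathbb{E}[c(X_{t-1},Y_{t-1})] + \mathbb{E}[\Delta(Y_{t-1})] + \mathbb{P}(X_{t-1}\in\Omega^c) + \mathbb{P}(Y_{t-1}\in\Omega^c),
\]
and iterating and passing to the limit (using marginal convergence so the last two probabilities tend to $P(\Omega^c)+Q(\Omega^c)<2\epsilon$) produces $\liminf_t \mathbb{E}[c(X_t,Y_t)] \leq (\mathbb{E}[\Delta(Y^*)]+2\epsilon)/(1-\alpha_\Omega)$, exactly the Dobson bound. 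Your use of $c\leq 1$ only to bound $\mathbb{E}[c(X_0,Y_0)]$ misses its essential role: it is what allows the argument to go through under merely local contraction.
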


Fig.~\ref{fig:huggins_dobson_comparison} (right) plots the $1$-Wasserstein 
upper bounds of \citeauthor{dobson2021usingSIAM} and $\cub[1]$ 
for the example in Sec.~\ref{subsection:bias_ub}
with the capped metric $c(x,y) = \min \{1, \|x-y\|_2 \}$ on $\mathbb{R}^d$.
We use $I=100$ independent coupled chains 
with trajectory length $T=3000$ and burnin $S=1000$ to estimate both the 
upper bounds of \citeauthor{dobson2021usingSIAM} and $\cub[1]$. 
The figure shows that, in line with Prop.~\ref{prop:dobson_bound}, the upper bounds of 
\citeauthor{dobson2021usingSIAM} are looser than $\cub[1]$.

\section{Applications} \label{section:applications}
We now illustrate the value of our methods for three practical applications.
We focus on the $2$-Wasserstein distance with $c(x,y) = \twonorm{x-y}$ on $\mathbb{R}^d$,  
which by \eqref{eq:pwass_moment} controls first and second order moments 
and captures geometric features induced by the Euclidean norm $\twonorm{\cdot}$. 
In this case a tractably estimated lower bound on the Wasserstein distance is also available.
For any $P, Q \in \mathcal{P}_2(\mathbb{R}^d)$,  let $P_i$ and $Q_i$ denote the 
marginal distributions of the $i^{th}$ component of $P$ and $Q$ respectively.  
Let $\mathcal{N}_P$ and $\mathcal{N}_Q$ denote Gaussian distributions on $\mathbb{R}^d$ 
with the same means and covariance matrices as $P$ and $Q$ respectively. Then,
\begin{talign} \label{eq:W2L2LB} 
\max \Big\{ \sum_{i=1}^d \mathcal{W}_2(P_i, Q_i)^2 \ , \ \mathcal{W}_2(\mathcal{N}_P,  \mathcal{N}_Q)^2 \Big\} 
&\leq  \mathcal{W}_2(P, Q)^2.
\end{talign}
Here,  $\sum_{i=1}^d \mathcal{W}_2(P_i, Q_i)^2 \leq \mathcal{W}_2(P, Q)^2$ follows 
from the coupling representation of $\mathcal{W}_2(P, Q)$,  
and $\mathcal{W}_2(\mathcal{N}_P,  \mathcal{N}_Q) \leq  \mathcal{W}_2(P, Q) $ is the lower bound of 
\citet[][Thm.~2.1]{gelbrich1990onMN}. 
Each one-dimensional Wasserstein distance 
$\mathcal{W}_2^2(P_i, Q_i)$ admits a convenient representation for estimation,
given by $\int_0^1 ( F^{-1}_{P_i}(u) - F^{-1}_{Q_i}(u) )^2 du$
where $F^{-1}_{P_i}$ and $F^{-1}_{Q_i}$ are the inverse cumulative distribution functions 
of $P_i$ and $Q_i$ respectively, while 
$\mathcal{W}_2(\mathcal{N}_P,  \mathcal{N}_Q)$ has the closed form 
$\big(\| \mu_P - \mu_Q \|_2^2 + \textup{Trace} \big(\Sigma_P + \Sigma_Q - 2(\Sigma_P^{1/2}\Sigma_Q \Sigma_P^{1/2})^{1/2}\big)\big)^{1/2}$
in terms of the means $\mu_P, \mu_Q$ and covariances $\Sigma_P, \Sigma_Q$ of $P$ and $Q$ \citep[Rem.~2.23]{Peyre2019computationalFTML}. 
Since the true Wasserstein distances are unknown in our applications to follow, we will assess the tightness of our coupling-based upper bounds by estimating the lower bound  \eqref{eq:W2L2LB}.
Details of all the datasets, algorithms, and specific estimator parameters used in
this section can be found in App.~\ref{appendices:applications}.

\subsection{Approximate MCMC and variational inference for tall data} \label{subsection:tall_data}
Our first application concerns Bayesian inference for \textit{tall} datasets \citep{bardenet2017onJMLR}, 
where the number of observations $n$ is large compared to the dimension $d$.  
In such settings, exact MCMC can be computationally expensive with $\Omega(n)$ cost per iteration. 
This computational bottleneck 
and the prevalence of tall datasets 
has catalyzed much interest in approximate MCMC 
and variational approximation based algorithms. 
Approximate MCMC algorithms include ULA and
stochastic gradient MCMC 
(see \citep{nemeth2021stochasticJASA} for a review) 
such as SGLD \citep{welling2011bayesianICML}.
Popular variational approximation methods include Laplace approximation 
\citep[e.g.,][]{tierney1986accurateJASA} and 
variational Bayes 
(VB, see \citep{blei2017JASAvariational} for a review).

In this section, we assess the quality of these sampling algorithms. 
We consider ULA,  SGLD,  Laplace approximation, and mean field VB applied to Bayesian logistic regression with a Gaussian prior
for the Pima diabetes dataset \citep{smith1988using} 
and the DS1 life sciences dataset \citep{ds1dataset}. 
For each sampling algorithm, 
Fig.~\ref{fig:tall_data} plots $\cub[2]$ 
\eqref{eq:W_ublimit1} upper bounds and $\mathcal{W}_2$ lower bounds estimated using \eqref{eq:W2L2LB}.  
We simulate the coupled chains $(X^{(i)}_t,Y^{(i)}_t)_{t \geq 0}$ independently for each $i$, where each 
$(X^{(i)}_t)_{t \geq 0}$ is a MALA chain targeting the posterior $P$ and 
each $(Y^{(i)}_t)_{t \geq 0}$ is linked to an approximate MCMC or a variational procedure. 
In particular,  
we consider $(Y^{(i)}_t)_{t \geq 0}$ to be an ULA chain, 
SGLD chains based on sub-sampling $10\%$ and $50\%$ of the observations, 
a MALA chain targeting $\mathcal{N}(\mu_{L}, \Sigma_{L})$ where 
$\mu_{L} \in \mathbb{R}^d$ and $\Sigma_{L} \in \mathbb{R}^{d \times d}$ 
are from a Laplace approximation of $P$, and 
a MALA chain targeting $\mathcal{N}(\mu_{VB}, \Sigma_{VB})$ where 
$\mu_{VB} \in \mathbb{R}^d$ and $\Sigma_{VB} \in \Diag(\mathbb{R}^{d \times d})$ 
are from a Gaussian mean field VB approximation of $P$. 
In each case, we use a CRN coupling between the 
marginal kernels of $(X^{(i)}_t)_{t \geq 0}$ and $(Y^{(i)}_t)_{t \geq 0}$. 
App.~\ref{appendices:tall_data} contains details 
about the datasets, algorithms and 
estimator parameters used.

Fig.~\ref{fig:tall_data} shows that 
Laplace approximation has the smallest asymptotic bias 
for both datasets.
This promising Laplace performance can be linked to 
posterior concentration and accuracy of the corresponding Bernstein-von Mises approximation 
\citep{bardenet2017onJMLR, chopin2017leaveSS}. 
Our bounds also show how the Metropolis--Hastings correction and stochastic gradients 
affect the quality of ULA and SGLD.
Overall, this application illustrates the effectiveness of our proposed quality 
measures for comparing approximate inference algorithms in the tall data setting. 

\begin{figure}[!t]
\captionsetup[subfigure]{font=footnotesize, labelfont=footnotesize}
    \centering
    \includegraphics[width=0.8\textwidth]{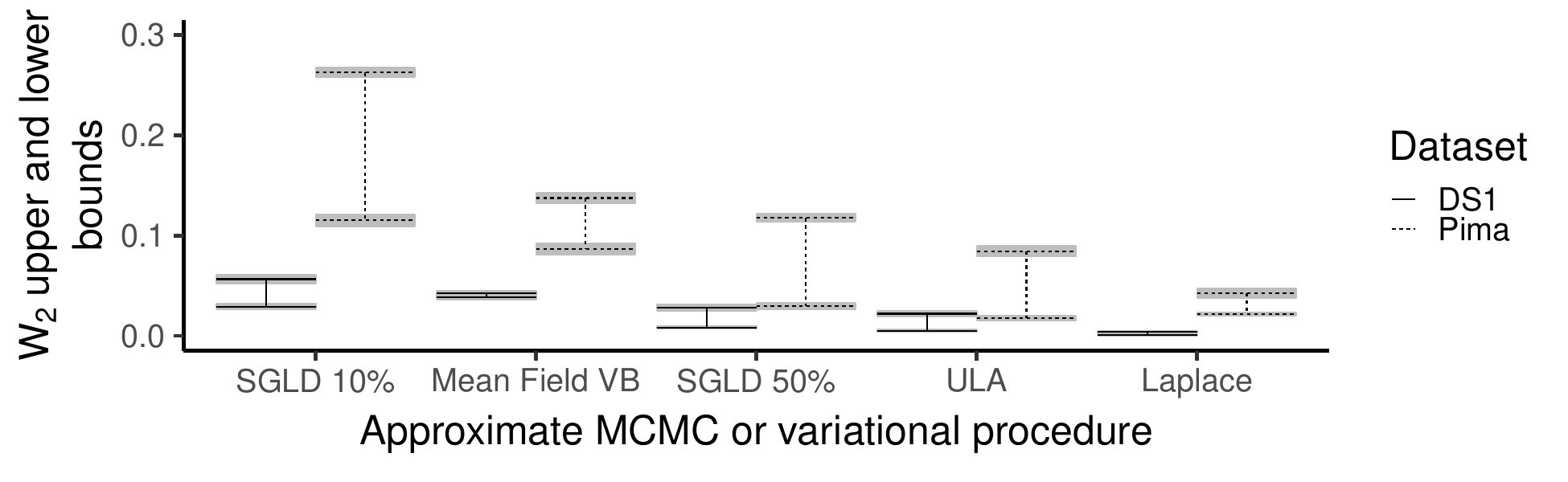}
    \caption{
    Bounds on the Euclidean $\mathcal{W}_2$ bias of approximate MCMC and variational inference procedures for Bayesian logistic regression.
    We consider the DS1 dataset ($n = 26732$ observations, $d = 10$ covariates)  and
    the Pima dataset ($n = 768$, $d = 8$).
    See Sec.~\ref{subsection:tall_data} for more details.}
    \label{fig:tall_data}
\end{figure}

\subsection{Approximate MCMC for high-dimensional linear regression} \label{subsection:half_t}
We now consider high-dimensional Bayesian linear regression, where the 
dimension $d$ is larger than the number of observations $n$.
The likelihood for the response vector $y \in \mathbb{R}^n$ is 
a Gaussian density 
with mean $X \beta$ and covariance matrix $\sigma^2 I_n$, where 
$X \in \mathbb{R}^{n \times d}$ is the design matrix, 
$\beta \in \mathbb{R}^d$ is an unknown signal vector, 
and $\sigma^2 > 0$ is the unknown noise variance.
We consider a class of global-local mixture priors, given by 
\begin{talign} \label{eq:half_t}
\xi^{-1/2} \sim \mathcal{C}_+(0,1), \ \eta_j^{-1/2} \overset{i.i.d.}{\sim} t_+(\nu), \ 
\sigma^{-2} \sim \mathrm{Gamma} \Big( \frac{a_0}{2}, \frac{b_0}{2} \Big), \
\beta_j | \eta, \xi, \sigma^2 \overset{ind.}{\sim} \mathcal{N} \Big( 0, \frac{\sigma^2}{\xi \eta_j} \Big)
\end{talign}
where $\mathcal{C}_+(0,1)$ is the half-Cauchy distribution on $[0, \infty)$ and
$t_+(\nu)$ is the half-t distribution on $[0, \infty)$ with $\nu$ degrees of freedom. 
When $\nu=1$, this corresponds to the popular Horseshoe prior 
\citep{carvalho2010theBIOMETRIKA}. 
This setting differs considerably from the log-concave tall data example 
of Sec.~\ref{subsection:tall_data}, 
as now the posterior distribution is multi-modal, has polynomial tails,
and has infinite density about the origin \citep{biswas2021coupled}. 
\citet{johndrow2020scalableJMLR} have developed exact and approximate Gibbs samplers
for the Horseshoe prior in this setting, which involves an approximation
parameter $\epsilon \geq 0$. \citet{biswas2021coupled} extended the 
sampler of \citeauthor{johndrow2020scalableJMLR} 
to all $\nu \geq 1$ and showed that using larger values of $\nu$ 
could improve mixing times in high dimensions.

In this section, we use couplings to assess the quality of such approximate
MCMC algorithms.
Following \citeauthor{biswas2021coupled}, we consider $\nu=2$ 
applied to a genome-wide 
association study (GWAS) dataset \citep{buhlmann2014highARSA} 
and a synthetic dataset.
We use a CRN coupling with the marginal chains 
corresponding to the exact and the approximate MCMC kernel. 
App.~\ref{appendices:half_t} contains details 
about the datasets, algorithms, and estimator parameters used. 

\begin{figure}[!thb]
\captionsetup[subfigure]{font=footnotesize, labelfont=footnotesize}
    \centering
\begin{subfigure}[b]{0.4\textwidth}
    \includegraphics[width=\textwidth]{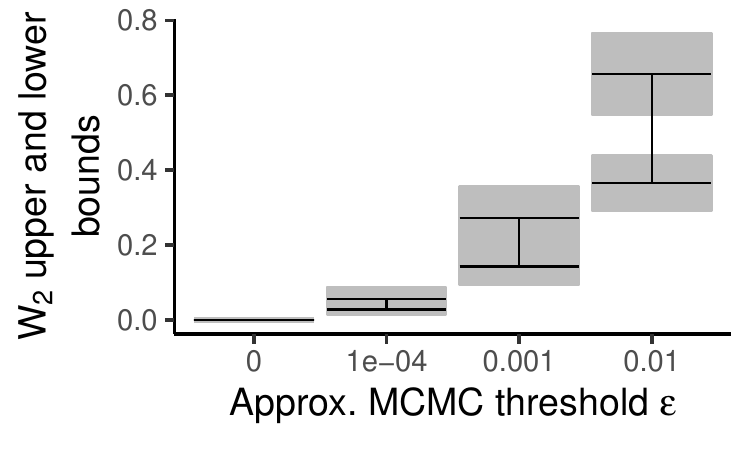}
    \caption{GWAS dataset ($n=71, d=4088$)}
\end{subfigure}
\begin{subfigure}[b]{0.4\textwidth}
    \includegraphics[width=\textwidth]{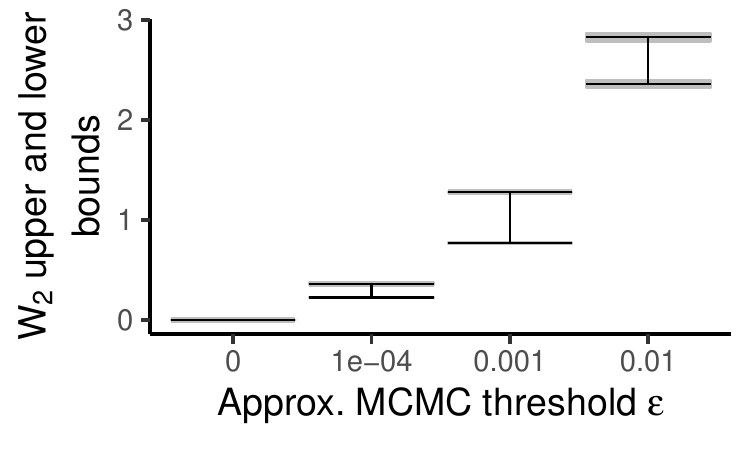}
    \caption{Synthetic dataset ($n=500, d=50000$)}
\end{subfigure}
	\caption{Bounds on the Euclidean $\mathcal{W}_2$ bias of an approximate MCMC Gibbs sampler for 
	high-dimensional Bayesian regression with half-t($2$) prior, $n$ observations, and $d$ covariates.  We consider both a bacteria GWAS dataset 
    and a synthetic dataset.
    See Sec.~\ref{subsection:half_t} for more details.}
    \label{fig:half_t}
\end{figure}

Fig.~\ref{fig:half_t} plots upper and lower bounds on the 
2-Wasserstein distance, illustrating how
asymptotic bias of the approximate Gibbs sampler varies with 
the approximation parameter $\epsilon \geq 0$.
The upper bounds are given by our estimator $\cub[2]$ 
\eqref{eq:W_ublimit1}, and the lower bounds are estimated using \eqref{eq:W2L2LB}. 
For developers of such high-dimensional approximate MCMC samplers, these bounds provide an empirical assessment of the trade-off between improved quality and higher computational cost. 
In particular, the bounds enable a developer to assess the computational cost of an approximation procedure as a function of the bias introduced (and vice-versa).
For example, for any maximum acceptable bias level, one can identify the largest approximation parameter $\epsilon$ with a \cubname interval below the acceptable level and assess the computational savings delivered 
relative to an exact sampler.

Often one will choose an acceptable level of Wasserstein bias based on the direct implications for downstream inferential tasks (e.g., based on tolerable discrepancies in predictive accuracy or numerical integration, as discussed in Section~\ref{subsection:wass}).
When it is otherwise difficult for a user to select an acceptable level of Wasserstein bias on an absolute scale, we would recommend normalizing each \cubname estimate based on the coupled chains $(X^{(i)}_{t}, Y^{(i)}_{t})_{t = 0}^T$ by a second, independent-coupling \cubname estimate based on the chains $(X^{(i)}_{t}, \tilde{X}^{(i)}_{t})_{t = 0}^T$, where $(\tilde{X}^{(i)}_{t})_{t = 0}^T$ is sampled independently of $(X^{(i)}_{t})_{t = 0}^T$ using the $P$-invariant $K_1$ kernel.
This enables Wasserstein bias to be assessed relative to a 
measure of the intrinsic variability or noise level in the target distribution $P$.

\subsection{Approximate MCMC for high-dimensional logistic regression} \label{subsection:skinny_gibbs}
We now consider high-dimensional Bayesian logistic regression with 
spike and slab priors, a popular choice 
for Bayesian variable selection
\citep{vannucci2021handbook}.
\citet{narisetty2019skinnyJASA} recently developed an approximate MCMC algorithm
called \textit{Skinny Gibbs}, to sample from  posteriors in this setting. 
Here, we assess the quality of the Skinny Gibbs algorithm
applied to a malware dataset \citep{Dua2019UCI} and a lymph node GWAS 
\citep{narisetty2019skinnyJASA} dataset using a CRN coupling between 
the exact MCMC kernel and the Skinny Gibbs kernel. 
App.~\ref{appendices:skinny_gibbs} contains further details 
about spike and slab priors and the datasets, algorithms, and estimator parameters used. 

Fig.~\ref{fig:spike_slab} displays $\cub[2]$ 
\eqref{eq:W_ublimit1} upper bounds and lower bounds estimated using \eqref{eq:W2L2LB} on the Euclidean $2$-Wasserstein distance between the limiting distributions of the exact and 
Skinny Gibbs chains for $\beta$.
We display these bounds not to draw comparisons across the datasets but rather to exemplify the level of precision provided by \cubname when applied to real high-dimensional logistic regression tasks.
For researchers developing 
approximate samplers, 
these bounds provide an empirical assessment of asymptotic 
bias for different datasets and posteriors under the spike and slab prior.

\begin{figure}[!t]
    \centering
\includegraphics[width=0.55\textwidth]{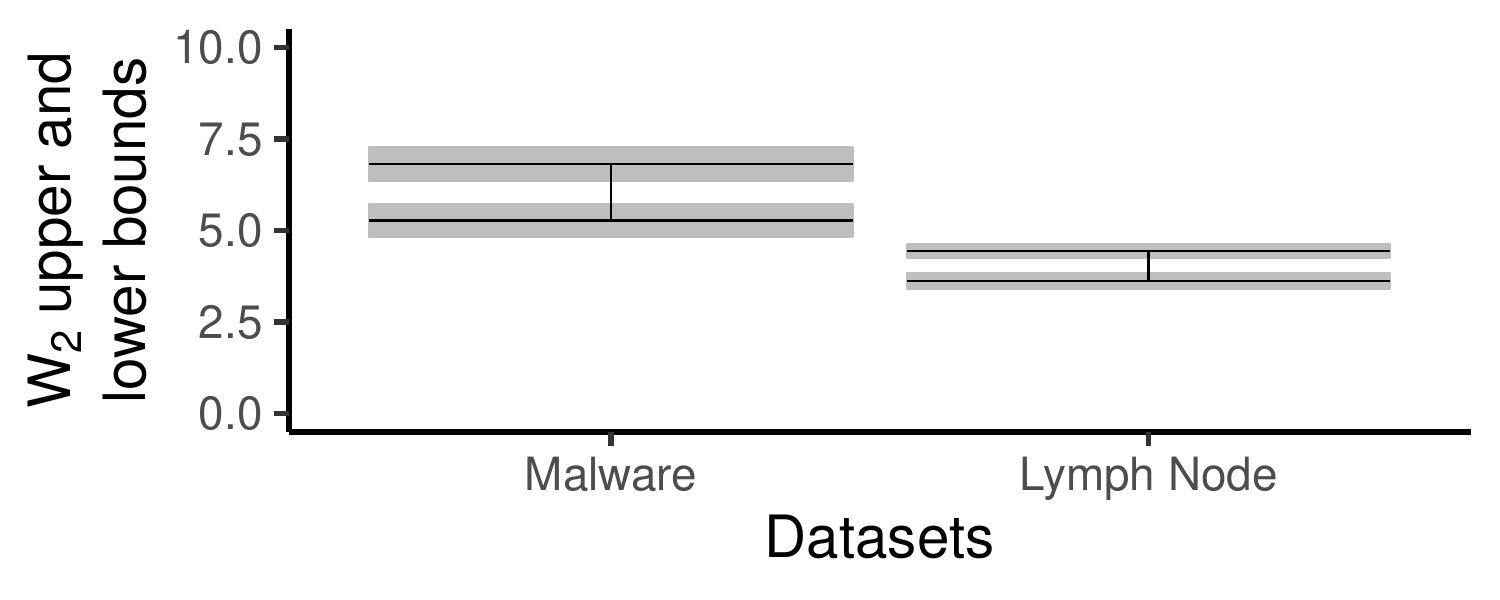}    
\caption{Bounds on the Euclidean $\mathcal{W}_2$ bias of the Skinny Gibbs sampler \citep{narisetty2019skinnyJASA}
for Bayesian logistic regression with a spike and slab prior; see Sec.~\ref{subsection:skinny_gibbs} for details.
We consider a malware dataset ($n = 373$ observations; $d = 503$ covariates) and a
 lymph node GWAS dataset ($n = 148$, $d = 4514$).}
\label{fig:spike_slab}
\end{figure}

\section{Discussion} \label{section:discuss}
We have introduced new estimators to assess the quality of 
approximate inference procedures. 
The estimators consistently bound the Wasserstein distance
between the limiting distribution of the approximation 
and the original target distribution of interest.
The proposed estimators can be applied to approximate MCMC and certain
variational inference methods in practical settings, including  Bayesian regression in $50000$ dimensions. 

The following questions arise from our work. 

\textit{Alternative coupling algorithms.} 
We have chosen CRN coupling as a practical default for our experiments due to its broad applicability, but a growing inventory of alternative coupling 
strategies is available \citep{heng2019unbiased, lee2020coupled,
xu2021couplings, oleary2021maximal, biswas2021coupled}, and, as evidenced in Sec.~\ref{subsection:algo_k_bar}, alternative couplings tailored to the problem can yield tighter upper bounds.
An important open question is how to best identify or construct a better coupling 
for a given problem at hand.

\textit{Avoiding sampling from
an asymptotically unbiased Markov chain.} 
Our proposed upper bounds 
require sampling from a $P$-invariant Markov chain $(X_t)_{t \geq 0}$. 
This raises the question: can one construct a Markov chain 
$(Y'_t, Y_t)_{t \geq 0}$ such that 
(i) $(Y'_t)_{t \geq 0}$ and $(Y_t)_{t \geq 0}$ are identically distributed 
according to the same asymptotically biased chain marginally and (ii)
$\mathbb{E}[c(X_t, Y'_t)^p] = \mathbb{E}[c(X_t, Y_t)^p] \leq \mathbb{E}[c(Y'_t, Y_t)^p]$ for all $t \geq 0$, where 
$(X_t)_{t \geq 0}$ is an asymptotically unbiased chain? Then we could sample from the computationally less expensive chain $(Y'_t, Y_t)_{t \geq 0}$ 
to obtain an upper bound of $\mathbb{E}[c(X_t, Y_t)^p]^{1/p}$ which is only loose
by a constant factor of 2, as 
$\mathbb{E}[c( Y'_t, Y_t)^p]^{1/p} \leq  \mathbb{E}[c(X_t, Y_t)^p]^{1/p} +
\mathbb{E}[c(X_t, Y'_t)^p]^{1/p} = 2\mathbb{E}[c(X_t, Y_t)^p]^{1/p}.$
We hope to investigate such coupling 
constructions in follow-up work. 

\textit{Upper bounds for total variation distance.} 
The $1$-Wasserstein distance with metric $c(x, y)=\mathrm{I}\{x \neq y\}$ gives
the popular total variation (TV) distance,  which always 
takes values in $[0,1]$ and is invariant to reparameterization. 
To obtain upper bounds of TV strictly less than 1 using
our estimators, we require couplings which allow exact meetings
between the two marginal chains. 
Our initial attempts at using
maximal couplings \citep{johnson1998coupling, 
jacob2020unbiasedJRSSB,oleary2021maximal} have not been effective 
in high dimensions and suggest a need for further methodological work. 

\textit{Spot checking.} 
Finally, an anonymous associate editor suggested the following additional application.
Often one is interested in approximating 
an entire family of target distributions $P_\eta$ with approximations $Q_\eta$ indexed by a parameter $\eta$ taking a large number of distinct values in $\mathbb{R}$.
When it is feasible to run a $P_\eta$-invariant Markov chain only for a small number of $\eta$ values 
but infeasible to run these exact chains for all target $\eta$ values, \cubname can be used to spot check Wasserstein quality at a small set of representative $\eta$ values and drive decision making around the 
degree or type of approximation 
used for the full collection of $\eta$ values.

\paragraph{Acknowledgments.}
We thank Juan Shen for sharing the Lymph Node dataset,
and Pierre E. Jacob, Xiao-Li Meng, the participants of the International Conference
on Monte Carlo Methods and Applications and the BayesComp workshop on ``Measuring the quality of MCMC output'' for helpful feedback.
We also thank the anonymous reviewers and associate editor for their valuable comments and suggestions.
NB was supported by the NSF grant DMS-1844695, a GSAS Merit Fellowship, and 
a Two Sigma Fellowship Award. 

\bibliography{references}

\begin{thebibliography}{60}
\providecommand{\natexlab}[1]{#1}
\providecommand{\url}[1]{\texttt{#1}}
\expandafter\ifx\csname urlstyle\endcsname\relax
  \providecommand{\doi}[1]{doi: #1}\else
  \providecommand{\doi}{doi: \begingroup \urlstyle{rm}\Url}\fi

\bibitem[Altschuler et~al.(2017)Altschuler, Weed, and
  Rigollet]{altschuler2017nearNEURIPS}
J.~Altschuler, J.~Weed, and P.~Rigollet.
\newblock {Near-Linear Time Approximation Algorithms for Optimal Transport via
  Sinkhorn Iteration}.
\newblock NeurIPS, page 1961–1971, 2017.
\newblock ISBN 9781510860964.

\bibitem[Bardenet et~al.(2017)Bardenet, Doucet, and Holmes]{bardenet2017onJMLR}
R.~Bardenet, A.~Doucet, and C.~Holmes.
\newblock {On Markov Chain Monte Carlo Methods for Tall Data}.
\newblock \emph{J. Mach. Learn. Res.}, 18\penalty0 (1):\penalty0 1515–1557,
  2017.
\newblock ISSN 1532-4435.

\bibitem[Bhattacharya et~al.(2016)Bhattacharya, Chakraborty, and
  Mallick]{bhattacharya2016fastBIOMETRIKA}
A.~Bhattacharya, A.~Chakraborty, and B.~K. Mallick.
\newblock {Fast sampling with Gaussian scale mixture priors in high-dimensional
  regression}.
\newblock \emph{Biometrika}, 103\penalty0 (4):\penalty0 985--991, 2016.
\newblock ISSN 0006-3444.
\newblock \doi{10.1093/biomet/asw042}.
\newblock URL \url{https://doi.org/10.1093/biomet/asw042}.

\bibitem[Biswas et~al.(2019)Biswas, Jacob, and Vanetti]{biswas2019estimating}
N.~Biswas, P.~E. Jacob, and P.~Vanetti.
\newblock {Estimating convergence of Markov chains with L-lag couplings}.
\newblock NeurIPS, pages 7389--7399, 2019.

\bibitem[Biswas et~al.(2022)Biswas, Bhattacharya, Jacob, and
  Johndrow]{biswas2021coupled}
N.~Biswas, A.~Bhattacharya, P.~E. Jacob, and J.~E. Johndrow.
\newblock Coupling-based convergence assessment of some gibbs samplers for
  high-dimensional bayesian regression with shrinkage priors.
\newblock \emph{J. R. Stat. Soc. Ser. B Methodol.}, 2022.

\bibitem[Blei et~al.(2017)Blei, Kucukelbir, and
  McAuliffe]{blei2017JASAvariational}
D.~M. Blei, A.~Kucukelbir, and J.~D. McAuliffe.
\newblock {Variational Inference: A Review for Statisticians}.
\newblock \emph{J. Am. Stat. Assoc.}, 112\penalty0 (518):\penalty0 859--877,
  2017.
\newblock \doi{10.1080/01621459.2017.1285773}.
\newblock URL \url{https://doi.org/10.1080/01621459.2017.1285773}.

\bibitem[Bou-Rabee and Hairer(2012)]{bou-rabeenonasymptotic2012IMAJNA}
N.~Bou-Rabee and M.~Hairer.
\newblock {Nonasymptotic mixing of the MALA algorithm}.
\newblock \emph{IMA Journal of Numerical Analysis}, 33\penalty0 (1):\penalty0
  80--110, 2012.
\newblock ISSN 0272-4979.
\newblock \doi{10.1093/imanum/drs003}.
\newblock URL \url{https://doi.org/10.1093/imanum/drs003}.

\bibitem[Bou-Rabee et~al.(2020)Bou-Rabee, Eberle, and
  Zimmer]{bou-rabee2020couplingAOAP}
N.~Bou-Rabee, A.~Eberle, and R.~Zimmer.
\newblock Coupling and convergence for hamiltonian monte carlo.
\newblock \emph{Ann. Appl. Probab.}, 30\penalty0 (3):\penalty0 1209--1250,
  2020.
\newblock \doi{10.1214/19-AAP1528}.
\newblock URL \url{https://doi.org/10.1214/19-AAP1528}.

\bibitem[Brooks et~al.(2011)Brooks, Gelman, Jones, and
  Meng]{brooks2000handbook}
S.~Brooks, A.~Gelman, G.~Jones, and X.-L. Meng.
\newblock \emph{{Handbook of {M}arkov Chain {M}onte {C}arlo}}.
\newblock CRC Press, 2011.

\bibitem[Bühlmann et~al.(2014)Bühlmann, Kalisch, and
  Meier]{buhlmann2014highARSA}
P.~Bühlmann, M.~Kalisch, and L.~Meier.
\newblock {High-Dimensional Statistics with a View Toward Applications in
  Biology}.
\newblock \emph{Annu. Rev. Stat. Appl.}, 1\penalty0 (1):\penalty0 255--278,
  2014.
\newblock \doi{10.1146/annurev-statistics-022513-115545}.
\newblock URL \url{https://doi.org/10.1146/annurev-statistics-022513-115545}.

\bibitem[Carvalho et~al.(2010)Carvalho, Polson, and
  Scott]{carvalho2010theBIOMETRIKA}
C.~M. Carvalho, N.~G. Polson, and J.~G. Scott.
\newblock {The horseshoe estimator for sparse signals}.
\newblock \emph{Biometrika}, 97\penalty0 (2):\penalty0 465--480, 2010.
\newblock ISSN 00063444.
\newblock URL \url{http://www.jstor.org/stable/25734098}.

\bibitem[Chopin and Ridgway(2017)]{chopin2017leaveSS}
N.~Chopin and J.~Ridgway.
\newblock {Leave Pima Indians Alone: Binary Regression as a Benchmark for
  Bayesian Computation}.
\newblock \emph{Statist. Sci.}, 32\penalty0 (1):\penalty0 64 -- 87, 2017.
\newblock \doi{10.1214/16-STS581}.
\newblock URL \url{https://doi.org/10.1214/16-STS581}.

\bibitem[Cuturi(2013)]{cuturi2013sinkhorn}
M.~Cuturi.
\newblock {Sinkhorn Distances: Lightspeed Computation of Optimal Transport}.
\newblock NeurIPS, pages 2292--2300, 2013.
\newblock URL
  \url{https://proceedings.neurips.cc/paper/2013/file/af21d0c97db2e27e13572cbf59eb343d-Paper.pdf}.

\bibitem[Dobson et~al.(2021)Dobson, Li, and Zhai]{dobson2021usingSIAM}
M.~Dobson, Y.~Li, and J.~Zhai.
\newblock {Using Coupling Methods to Estimate Sample Quality of Stochastic
  Differential Equations}.
\newblock \emph{SIAM-ASA J. Uncertain.}, 9\penalty0 (1):\penalty0 135--162,
  2021.
\newblock \doi{10.1137/20M1312009}.
\newblock URL \url{https://doi.org/10.1137/20M1312009}.

\bibitem[Dua and Graff(2017)]{Dua2019UCI}
D.~Dua and C.~Graff.
\newblock {UCI} machine learning repository, 2017.
\newblock URL \url{http://archive.ics.uci.edu/ml}.

\bibitem[Durmus and Moulines(2019)]{durmus2019highBERNOULLI}
A.~Durmus and E.~Moulines.
\newblock High-dimensional bayesian inference via the unadjusted langevin
  algorithm.
\newblock \emph{Bernoulli}, 25\penalty0 (4A):\penalty0 2854--2882, 2019.
\newblock \doi{10.3150/18-BEJ1073}.
\newblock URL \url{https://doi.org/10.3150/18-BEJ1073}.

\bibitem[Durmus et~al.(2021)Durmus, Eberle, Enfroy, Guillin, and
  Monmarché]{durmus2021discrete}
A.~Durmus, A.~Eberle, A.~Enfroy, A.~Guillin, and P.~Monmarché.
\newblock Discrete sticky couplings of functional autoregressive processes.
\newblock \emph{arXiv:2104.06771}, 2021.

\bibitem[Durrett(2019)]{durrett2019probability}
R.~Durrett.
\newblock \emph{{Probability: Theory and Examples}}.
\newblock Cambridge Series in Statistical and Probabilistic Mathematics.
  Cambridge University Press, 5 edition, 2019.
\newblock \doi{10.1017/9781108591034}.

\bibitem[Eberle(2014)]{eberle2014errorAoAP}
A.~Eberle.
\newblock Error bounds for metropolis–hastings algorithms applied to
  perturbations of gaussian measures in high dimensions.
\newblock \emph{Ann. Appl. Probab.}, 24\penalty0 (1):\penalty0 337--377, 2014.
\newblock \doi{10.1214/13-AAP926}.
\newblock URL \url{https://doi.org/10.1214/13-AAP926}.

\bibitem[Eberle(2016)]{eberle2016reflectionPTRF}
A.~Eberle.
\newblock {Reflection couplings and contraction rates for diffusions}.
\newblock \emph{Probab. Theory Relat. Fields}, 166\penalty0 (3):\penalty0
  851--886, 2016.
\newblock \doi{10.1007/s00440-015-0673-1}.
\newblock URL \url{https://doi.org/10.1007/s00440-015-0673-1}.

\bibitem[Gelbrich(1990)]{gelbrich1990onMN}
M.~Gelbrich.
\newblock {On a Formula for the L2 Wasserstein Metric between Measures on
  Euclidean and Hilbert Spaces}.
\newblock \emph{Math. Nachr.}, 147\penalty0 (1):\penalty0 185--203, 1990.
\newblock \doi{https://doi.org/10.1002/mana.19901470121}.
\newblock URL
  \url{https://onlinelibrary.wiley.com/doi/abs/10.1002/mana.19901470121}.

\bibitem[George and McCulloch(1993)]{george1993variableJASA}
E.~I. George and R.~E. McCulloch.
\newblock {Variable Selection via Gibbs Sampling}.
\newblock \emph{J. Am. Stat. Assoc.}, 88\penalty0 (423):\penalty0 881--889,
  1993.
\newblock \doi{10.1080/01621459.1993.10476353}.
\newblock URL
  \url{https://www.tandfonline.com/doi/abs/10.1080/01621459.1993.10476353}.

\bibitem[Glynn and Rhee(2014)]{glynn2014exact}
P.~W. Glynn and C.-H. Rhee.
\newblock {Exact estimation for Markov chain equilibrium expectations}.
\newblock \emph{J. Appl. Probab.}, 51\penalty0 (A):\penalty0 377--389, 2014.

\bibitem[Gorham et~al.(2019)Gorham, Duncan, Vollmer, and
  Mackey]{gorham2019measuringAoAP}
J.~Gorham, A.~B. Duncan, S.~J. Vollmer, and L.~Mackey.
\newblock Measuring sample quality with diffusions.
\newblock \emph{Ann. Appl. Probab.}, 29\penalty0 (5):\penalty0 2884--2928,
  2019.
\newblock \doi{10.1214/19-AAP1467}.
\newblock URL \url{https://doi.org/10.1214/19-AAP1467}.

\bibitem[Hans et~al.(2007)Hans, Dobra, and West]{hans2007shotgunJASA}
C.~Hans, A.~Dobra, and M.~West.
\newblock {Shotgun Stochastic Search for “Large p” Regression}.
\newblock \emph{J. Am. Stat. Assoc.}, 102\penalty0 (478):\penalty0 507--516,
  2007.
\newblock \doi{10.1198/016214507000000121}.
\newblock URL \url{https://doi.org/10.1198/016214507000000121}.

\bibitem[Heng and Jacob(2019)]{heng2019unbiased}
J.~Heng and P.~E. Jacob.
\newblock {Unbiased Hamiltonian Monte Carlo with couplings}.
\newblock \emph{Biometrika}, 106\penalty0 (2):\penalty0 287--302, 2019.
\newblock ISSN 0006-3444.
\newblock \doi{10.1093/biomet/asy074}.
\newblock URL \url{https://doi.org/10.1093/biomet/asy074}.

\bibitem[Huggins et~al.(2020)Huggins, Kasprzak, Campbell, and
  Broderick]{Huggins2020validatedAISTATS}
J.~Huggins, M.~Kasprzak, T.~Campbell, and T.~Broderick.
\newblock {Validated Variational Inference via Practical Posterior Error
  Bounds}.
\newblock AISTATS, pages 1792--1802, 2020.
\newblock URL \url{https://proceedings.mlr.press/v108/huggins20a.html}.

\bibitem[Huggins et~al.(2019)Huggins, Campbell, Kasprzak, and
  Broderick]{huggins2019scableAISTATS}
J.~H. Huggins, T.~Campbell, M.~Kasprzak, and T.~Broderick.
\newblock {Scalable Gaussian Process Inference with Finite-data Mean and
  Variance Guarantees}.
\newblock AISTATS, pages 796--805, 2019.
\newblock URL \url{http://proceedings.mlr.press/v89/huggins19a.html}.

\bibitem[Ishwaran and Rao(2005)]{ishwaran2005spikeAOS}
H.~Ishwaran and J.~S. Rao.
\newblock {Spike and slab variable selection: Frequentist and Bayesian
  strategies}.
\newblock \emph{Ann. Statist.}, 33\penalty0 (2):\penalty0 730 -- 773, 2005.
\newblock \doi{10.1214/009053604000001147}.
\newblock URL \url{https://doi.org/10.1214/009053604000001147}.

\bibitem[Jacob et~al.(2020)Jacob, O’Leary, and
  Atchadé]{jacob2020unbiasedJRSSB}
P.~E. Jacob, J.~O’Leary, and Y.~F. Atchadé.
\newblock {Unbiased Markov chain Monte Carlo methods with couplings (with
  Discussion)}.
\newblock \emph{J. R. Stat. Soc. Ser. B Methodol.}, 82\penalty0 (3):\penalty0
  543--600, 2020.
\newblock \doi{10.1111/rssb.12336}.
\newblock URL
  \url{https://rss.onlinelibrary.wiley.com/doi/abs/10.1111/rssb.12336}.

\bibitem[Johndrow et~al.(2020)Johndrow, Orenstein, and
  Bhattacharya]{johndrow2020scalableJMLR}
J.~Johndrow, P.~Orenstein, and A.~Bhattacharya.
\newblock {Scalable Approximate MCMC Algorithms for the Horseshoe Prior}.
\newblock \emph{J. Mach. Learn. Res.}, 21\penalty0 (73):\penalty0 1--61, 2020.
\newblock URL \url{http://jmlr.org/papers/v21/19-536.html}.

\bibitem[Johndrow and Mattingly(2018)]{johndrow2018error}
J.~E. Johndrow and J.~C. Mattingly.
\newblock {Error bounds for Approximations of Markov chains used in Bayesian
  Sampling}.
\newblock \emph{arXiv:1711.05382}, 2018.

\bibitem[Johnson(1998)]{johnson1998coupling}
V.~E. Johnson.
\newblock {A coupling-regeneration scheme for diagnosing convergence in Markov
  chain Monte Carlo algorithms}.
\newblock \emph{J. Am. Stat. Assoc.}, 93\penalty0 (441):\penalty0 238--248,
  1998.

\bibitem[Komarek and Moore(2003)]{ds1dataset}
P.~Komarek and A.~Moore.
\newblock Fast robust logistic regression for large sparse datasets with binary
  outputs.
\newblock AISTATS, pages 163--170, 2003.
\newblock URL \url{http://komarix.org/ac/ds/}.

\bibitem[Lee et~al.(2020)Lee, Singh, and Vihola]{lee2020coupled}
A.~Lee, S.~S. Singh, and M.~Vihola.
\newblock {Coupled conditional backward sampling particle filter}.
\newblock \emph{Ann. Statist.}, 48\penalty0 (5):\penalty0 3066--3089, 2020.

\bibitem[Liang et~al.(2013)Liang, Song, and Yu]{liang2013bayesianJASA}
F.~Liang, Q.~Song, and K.~Yu.
\newblock {Bayesian Subset Modeling for High-Dimensional Generalized Linear
  Models}.
\newblock \emph{J. Am. Stat. Assoc.}, 108\penalty0 (502):\penalty0 589--606,
  2013.
\newblock \doi{10.1080/01621459.2012.761942}.
\newblock URL \url{https://doi.org/10.1080/01621459.2012.761942}.

\bibitem[Middleton et~al.(2019)Middleton, Deligiannidis, Doucet, and
  Jacob]{middleton2019unbiasedPMLR}
L.~Middleton, G.~Deligiannidis, A.~Doucet, and P.~E. Jacob.
\newblock {Unbiased Smoothing using Particle Independent Metropolis-Hastings}.
\newblock AISTATS, pages 2378--2387, 2019.
\newblock URL \url{http://proceedings.mlr.press/v89/middleton19a.html}.

\bibitem[Middleton et~al.(2020)Middleton, Deligiannidis, Doucet, and
  Jacob]{middleton2020unbiasedEJS}
L.~Middleton, G.~Deligiannidis, A.~Doucet, and P.~E. Jacob.
\newblock {Unbiased Markov chain Monte Carlo for intractable target
  distributions}.
\newblock \emph{Electron. J. Statist.}, 14\penalty0 (2):\penalty0 2842--2891,
  2020.
\newblock \doi{10.1214/20-EJS1727}.
\newblock URL \url{https://doi.org/10.1214/20-EJS1727}.

\bibitem[Narisetty and He(2014)]{narisetty2014bayesianAOS}
N.~N. Narisetty and X.~He.
\newblock {Bayesian variable selection with shrinking and diffusing priors}.
\newblock \emph{Ann. Statist.}, 42\penalty0 (2):\penalty0 789 -- 817, 2014.
\newblock \doi{10.1214/14-AOS1207}.
\newblock URL \url{https://doi.org/10.1214/14-AOS1207}.

\bibitem[Narisetty et~al.(2019)Narisetty, Shen, and
  He]{narisetty2019skinnyJASA}
N.~N. Narisetty, J.~Shen, and X.~He.
\newblock {Skinny Gibbs: A Consistent and Scalable Gibbs Sampler for Model
  Selection}.
\newblock \emph{J. Am. Stat. Assoc.}, 114\penalty0 (527):\penalty0 1205--1217,
  2019.
\newblock \doi{10.1080/01621459.2018.1482754}.
\newblock URL \url{https://doi.org/10.1080/01621459.2018.1482754}.

\bibitem[Nemeth and Fearnhead(2021)]{nemeth2021stochasticJASA}
C.~Nemeth and P.~Fearnhead.
\newblock {Stochastic Gradient Markov Chain Monte Carlo}.
\newblock \emph{J. Am. Stat. Assoc.}, 116\penalty0 (533):\penalty0 433--450,
  2021.
\newblock \doi{10.1080/01621459.2020.1847120}.
\newblock URL \url{https://doi.org/10.1080/01621459.2020.1847120}.

\bibitem[Orlin(1988)]{orlin1988fasterSTOC}
J.~Orlin.
\newblock {A Faster Strongly Polynomial Minimum Cost Flow Algorithm}.
\newblock STOC, page 377–387, 1988.
\newblock ISBN 0897912640.
\newblock \doi{10.1145/62212.62249}.
\newblock URL \url{https://doi.org/10.1145/62212.62249}.

\bibitem[Peyré and Cuturi(2019)]{Peyre2019computationalFTML}
G.~Peyré and M.~Cuturi.
\newblock {Computational Optimal Transport: With Applications to Data Science}.
\newblock \emph{Found. Trends Mach. Learn.}, 11\penalty0 (5-6):\penalty0
  355--607, 2019.
\newblock ISSN 1935-8237.
\newblock \doi{10.1561/2200000073}.
\newblock URL \url{http://dx.doi.org/10.1561/2200000073}.

\bibitem[Pillai and Smith(2015)]{pillai2015ergodicity}
N.~S. Pillai and A.~Smith.
\newblock {Ergodicity of Approximate MCMC Chains with Applications to Large
  Data Sets}.
\newblock \emph{arXiv:1405.0182}, 2015.

\bibitem[{R Core Team}(2013)]{Rsoftware}
{R Core Team}.
\newblock \emph{{R: A Language and Environment for Statistical Computing}}.
\newblock R Foundation for Statistical Computing, Vienna, Austria, 2013.
\newblock URL \url{http://www.R-project.org/}.

\bibitem[Roberts and Rosenthal(1998)]{roberts1998optimalJRSSB}
G.~O. Roberts and J.~S. Rosenthal.
\newblock {Optimal scaling of discrete approximations to Langevin diffusions}.
\newblock \emph{J. R. Stat. Soc. Ser. B Methodol.}, 60\penalty0 (1):\penalty0
  255--268, 1998.
\newblock \doi{https://doi.org/10.1111/1467-9868.00123}.
\newblock URL
  \url{https://rss.onlinelibrary.wiley.com/doi/abs/10.1111/1467-9868.00123}.

\bibitem[Roberts and Tweedie(1996)]{roberts1996exponentialBERNOULLI}
G.~O. Roberts and R.~L. Tweedie.
\newblock {Exponential Convergence of Langevin Distributions and Their Discrete
  Approximations}.
\newblock \emph{Bernoulli}, 2\penalty0 (4):\penalty0 341--363, 1996.
\newblock ISSN 13507265.
\newblock URL \url{http://www.jstor.org/stable/3318418}.

\bibitem[Rudolf and Schweizer(2018)]{rudolf2018}
D.~Rudolf and N.~Schweizer.
\newblock Perturbation theory for markov chains via wasserstein distance.
\newblock \emph{Bernoulli}, 24\penalty0 (4A):\penalty0 2610--2639, 2018.
\newblock \doi{10.3150/17-BEJ938}.
\newblock URL \url{https://doi.org/10.3150/17-BEJ938}.

\bibitem[Shumway and Stoffer(2000)]{shumstof2000time}
R.~H. Shumway and D.~S. Stoffer.
\newblock \emph{Time Series Analysis and Its Applications}.
\newblock Springer, 2000.

\bibitem[Smith et~al.(1988)Smith, Everhart, Dickson, Knowler, and
  Johannes]{smith1988using}
J.~W. Smith, J.~Everhart, W.~Dickson, W.~Knowler, and R.~Johannes.
\newblock {Using the ADAP Learning Algorithm to Forecast the Onset of Diabetes
  Mellitus}.
\newblock \emph{Proceedings of the Annual Symposium on Computer Application in
  Medical Care}, pages 261 -- 265, 1988.

\bibitem[Tadesse and Vannucci(2021)]{vannucci2021handbook}
M.~G. Tadesse and M.~Vannucci.
\newblock \emph{{Handbook of Bayesian Variable Selection}}.
\newblock Chapman and Hall/CRC, 2021.
\newblock \doi{10.1201/9781003089018}.
\newblock URL \url{https://doi.org/10.1201/9781003089018}.

\bibitem[Tierney and Kadane(1986)]{tierney1986accurateJASA}
L.~Tierney and J.~B. Kadane.
\newblock {Accurate Approximations for Posterior Moments and Marginal
  Densities}.
\newblock \emph{J. Am. Stat. Assoc.}, 81\penalty0 (393):\penalty0 82--86, 1986.
\newblock \doi{10.1080/01621459.1986.10478240}.
\newblock URL
  \url{https://www.tandfonline.com/doi/abs/10.1080/01621459.1986.10478240}.

\bibitem[{van Erven} and {Harremos}(2014)]{vanErven2014renyiIEEE}
T.~{van Erven} and P.~{Harremos}.
\newblock {Rényi Divergence and {K}ullback--{L}eibler Divergence}.
\newblock \emph{IEEE Trans. Inf. Theory}, 60\penalty0 (7):\penalty0 3797--3820,
  2014.
\newblock \doi{10.1109/TIT.2014.2320500}.

\bibitem[Vats and Knudson(2021)]{vats2020revisitingSS}
D.~Vats and C.~Knudson.
\newblock {Revisiting the Gelman–Rubin Diagnostic}.
\newblock \emph{Statist. Sci.}, 36\penalty0 (4):\penalty0 518 -- 529, 2021.
\newblock \doi{10.1214/20-STS812}.
\newblock URL \url{https://doi.org/10.1214/20-STS812}.

\bibitem[Vehtari et~al.(2021)Vehtari, Gelman, Simpson, Carpenter, and
  Bürkner]{vehtari2020rankBA}
A.~Vehtari, A.~Gelman, D.~Simpson, B.~Carpenter, and P.-C. Bürkner.
\newblock {Rank-Normalization, Folding, and Localization: An Improved
  $\widehat{R}$ for Assessing Convergence of MCMC (with Discussion)}.
\newblock \emph{Bayesian Anal.}, 16\penalty0 (2):\penalty0 667 -- 718, 2021.
\newblock \doi{10.1214/20-BA1221}.
\newblock URL \url{https://doi.org/10.1214/20-BA1221}.

\bibitem[Villani(2008)]{villani2008optimal}
C.~Villani.
\newblock \emph{{Optimal transport -- Old and new}}.
\newblock Springer, 2008.
\newblock \doi{10.1007/978-3-540-71050-9}.

\bibitem[Wang et~al.(2021)Wang, O'Leary, and Jacob]{oleary2021maximal}
G.~Wang, J.~O'Leary, and P.~Jacob.
\newblock {M}aximal {C}ouplings of the {M}etropolis-{H}astings {A}lgorithm.
\newblock AISTATS, pages 1225--1233, 2021.
\newblock URL \url{https://proceedings.mlr.press/v130/wang21d.html}.

\bibitem[Weed and Bach(2019)]{weed2019sharpBERNOULLI}
J.~Weed and F.~Bach.
\newblock {Sharp asymptotic and finite-sample rates of convergence of empirical
  measures in Wasserstein distance}.
\newblock \emph{Bernoulli}, 25\penalty0 (4A):\penalty0 2620 -- 2648, 2019.
\newblock \doi{10.3150/18-BEJ1065}.
\newblock URL \url{https://doi.org/10.3150/18-BEJ1065}.

\bibitem[Welling and Teh(2011)]{welling2011bayesianICML}
M.~Welling and Y.~W. Teh.
\newblock {Bayesian Learning via Stochastic Gradient Langevin Dynamics}.
\newblock ICML, page 681–688, 2011.
\newblock ISBN 9781450306195.

\bibitem[Xu et~al.(2021)Xu, Fjelde, Sutton, and Ge]{xu2021couplings}
K.~Xu, T.~E. Fjelde, C.~Sutton, and H.~Ge.
\newblock {Couplings for Multinomial {H}amiltonian {M}onte {C}arlo}.
\newblock AISTATS, pages 3646--3654, 2021.

\end{thebibliography}

\appendix

\section{Additional figures and discussion} \label{appendices:calcs}
\subsection{Calculation of empirical Wasserstein bounds in Figure \ref{fig:stylized_example_mvn_combined}.}
\label{appendices:empirical_wass}
In this section we note how the empirical Wasserstein upper bounds and error bands 
in Figure \ref{fig:stylized_example_mvn_combined} are generated. Our upper bounds are based on Proposition \ref{prop:empirical_wasserstein}, which gives 
\begin{equation}
\mathcal{W}_p(P, Q)^p \leq \mathbb{E} \big[ \mathcal{W}_p(\hat{P}_T, \hat{Q}_T)^p \big]
\end{equation}
where $P$ and $Q$ are distributions on the metric space $(\mathcal{X}, c)$ with finite moments of order $p$,
and $\hat{P}_T$ and $\hat{Q}_T$ denote empirical distributions of the samples 
$(X_1, \mydots, X_T)$ and $(Y_1, \mydots, Y_T)$ where $X_i \sim P$ and $Y_i \sim Q$ for all $i=1, \mydots, T$.
For $p=2$ and $P \neq Q$, the dot-dashed lines in Figure \ref{fig:stylized_example_mvn_combined} plots the corresponding estimate of this upper bound, given by
\begin{equation} \label{eq:empirical_wass_estimate1}
    \Big( \frac{1}{I} \sum_{i=1}^I \mathcal{W}_2(\hat{P}^{(i)}_T,\hat{Q}^{(i)}_T)^2 \Big)^{1/2}
\end{equation}
where $\hat{P}^{(i)}_T$ and $\hat{Q}^{(i)}_T$ are empirical distribution
of $P$ and $Q$ respectively based on $T$ samples. For each $i=1,\mydots,I$,
such empirical distributions $\hat{P}^{(i)}_T$ and $\hat{Q}^{(i)}_T$ are generated independently and 
then $\mathcal{W}_2(\hat{P}^{(i)}_T,\hat{Q}^{(i)}_T)$ is calculated by solving a linear program. 
The error bands plot 95\% confidence intervals given by 
$\Big[ \frac{1}{I} \sum_{i=1}^I \mathcal{W}_2(\hat{P}^{(i)}_T,\hat{Q}^{(i)}_T)^2 \pm 1.96 \hat{\sigma}/\sqrt{I} \Big]^{1/2}$
where $\hat{\sigma}^2$ is the empirical 
variance of $\big( \mathcal{W}_2(\hat{P}^{(i)}_T,\hat{Q}^{(i)}_T)^2 \big)_{i=1}^I$.

Instead of \eqref{eq:empirical_wass_estimate1}, one could alternatively use the estimator $\mathcal{W}_2(\hat{P}_{IT},\hat{Q}_{IT})$
where $\hat{P}_{IT}$ and $\hat{Q}_{IT}$ are empirical distribution of $P$ and $Q$ respectively based on $IT$ samples.
Using $\mathcal{W}_2(\hat{P}_{IT},\hat{Q}_{IT})$ produces a tighter upper bound estimate compared to using \eqref{eq:empirical_wass_estimate1}, which is linked to consistency of empirical Wasserstein distance based estimates covered in Proposition \ref{prop:empirical_wasserstein} of Section \ref{subsection:comparison}. 
However, this numerical improvement is minor; for example in Figure \ref{fig:stylized_example_mvn_combined} (Left) with dimension $d=100$, a tighter empirical upper bound of $11.35$ is obtained using this estimator compared to the upper bound of $11.83$ using \eqref{eq:empirical_wass_estimate1} and both these upper bound estimates are looser than the coupling based upper bound estimate of $5.78$. Such minor numerical improvement is linked to the curse of dimensionality for empirical Wasserstein distances, as discussed in Sections \ref{subsection:wass} and \ref{subsection:comparison}. Furthermore, calculating $\mathcal{W}_2(\hat{P}_{IT},\hat{Q}_{IT})$ for this example requires approximately $10$ times greater numerical runtimes compared to calculating \eqref{eq:empirical_wass_estimate1}.

\subsection{Section {\ref{subsection:bias_ub}} calculations.}
\label{appendices:stylized_ula_mala}
As kernel $K_1$ is $P$ invariant, 
$X_t \sim P_t \overset{t \rightarrow \infty}{\Rightarrow} P$ for all $t \geq 0$ \citep[e.g.][]{roberts1996exponentialBERNOULLI}. 
The ULA chain $(Y_t)_{t \geq 0}$ corresponds to an auto-regressive $AR(1)$ model, where 
\begin{flalign}
Y_{t} &= ( I_d - (\sigma_Q^2/2) \Sigma^{-1} ) Y_{t-1} + \sigma_Q Z_{t} = B Y_{t-1} + \sigma_Q Z_{t}
\end{flalign}
for all $t \geq 0$, where $Y_{0} \sim \mathcal{N}(0, I_d)$, $Z_{t} \overset{i.i.d.}{\sim} \mathcal{N}(0, I_d)$ and
$Z_{0} \defeq Y_{0}$, and $B  = ( I_d - (\sigma_Q^2/2) \Sigma^{-1} )$. By induction,
\begin{flalign} \label{eq:ula_marginals_calc}
Y_{t} &= B^{t} Z_0 + \sigma_Q \Big( B^{t-1} Z_1 + B^{t-2} Z_2 + \mydots + Z_{t} \Big) \\
&= B^{t} Z_0 + \sigma_Q \sum_{j=0}^{t-1} B^{j} Z_{t-j} \\
&\sim \mathcal{N} \big(0, B^{2t} + \sigma_Q^2 \sum_{j=0}^
{t-1} B^{2j} \big) =:Q_t
\end{flalign}
as required. Finally, note that for $\sigma_Q = 0.5 d^{-1/6}$ sufficiently small such that 
$\opnorm{B}<1$ (where $\opnorm{\cdot}$ is the matrix operator norm), 
$\underset{t \rightarrow \infty}{\lim} \big( B^{2t} + \sum_{j=0}^{t-1} B^{2j} \big) = (I_d - B^2)^{-1}$
(see, e.g. \citet{shumstof2000time} for sufficient conditions for the convergence $AR(1)$ models). This gives
$Q_t \overset{t \rightarrow \infty}{\Rightarrow} \mathcal{N}(0, \sigma_Q^2 (I_d - B^2)^{-1})=:Q$.

\paragraph{ULA asymptotic bias upper bound calculation for Figure \ref{fig:stylized_example_ula_mala}.}
We recall a result of \citet{durmus2019highBERNOULLI} on the asymptotic bias of ULA.
\begin{Prop}\citep[Corollary 9]{durmus2019highBERNOULLI} 
Consider an ULA Markov chain targeting the distribution $\pi$ on $\mathbb{R}^d$ 
with un-normalized density $\exp(-U(x))$. For $\| \cdot \|_2$ the 
Euclidean norm on $\mathbb{R}^d$, assume:
\begin{enumerate}
\item $U$ is continuously differentiable and lipschitz: there exists 
some $L \geq 0$ such that for all $x,y \in \mathbb{R}^d$,
$$\| \nabla U(x) - \nabla U(y) \| \leq L \| x -y \|_2.$$
\item $U$ is $m$-strongly convex for some $m > 0$:
there exists some $m > 0$ such that for all $x,y \in \mathbb{R}^d$,
$$U(x) \leq U(y) + \langle  \nabla U(x), y-x \rangle + (m/2) \| x -y \|^2_2$$
\item $U$ is three times continuously differentiable and there exists
some $\tilde{L}>0$ such that for all $x,y \in \mathbb{R}^d$,
$$\| \nabla^2 U(x) - \nabla^2 U(y) \|_2 \leq \tilde{L} \| x -y \|_2.$$ 
\end{enumerate}
Let the step size $\sigma$ of the Markov chain be sufficiently small
such that $\gamma \defeq \sigma^2/2 < 1/(m+L)$. Then the ULA Markov chain 
converges to some distribution $\pi_{\gamma}$, and
\begin{equation} \label{eq:durmus2019}
\mathcal{W}_2(\pi, \pi_{\gamma})^2 \leq 
2 \kappa^{-1} \gamma^2 d \Big( 2L^2+ \gamma L^4 \big( \frac{\gamma}{6} + \frac{1}{m}\big) 
+ \kappa^{-1} \big( \frac{4 d \tilde{L}^2}{3} + \gamma L^4 + \frac{4 L^4}{3m}\big)\Big)
\end{equation}
where $\kappa = 2mL/(m+L)$. 
\end{Prop}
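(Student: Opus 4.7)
The plan is to combine one-step contractivity of the ULA Markov kernel $K_\gamma$ with a sharp one-step discretization error between ULA and the underlying Langevin diffusion, then close the loop via triangle inequality to solve for $\mathcal{W}_2(\pi, \pi_\gamma)$.

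First I would establish the contraction rate of $K_\gamma$. Under a synchronous coupling of two ULA trajectories that share the Gaussian innovation, $\|X_1^x - X_1^y\|_2^2 = \|(x-y) - \gamma(\nabla U(x) - \nabla U(y))\|_2^2$. Invoking the co-coercivity inequality $\langle \nabla U(x) - \nabla U(y), x-y\rangle \geq (\kappa/2)\|x-y\|_2^2 + \|\nabla U(x)-\nabla U(y)\|_2^2/(2(m+L))$, which follows from $m$-strong convexity and $L$-Lipschitz gradient of $U$, gives $\mathcal{W}_2(K_\gamma \delta_x, K_\gamma \delta_y) \leq (1-\kappa\gamma)\|x-y\|_2$ for $\gamma < 1/(m+L)$ with $\kappa = 2mL/(m+L)$. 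This is where the admissibility condition on $\gamma$ enters.

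Second I would derive a sharp one-step discretization error between the continuous Langevin SDE $d\tilde X_t = -\nabla U(\tilde X_t)\,dt + \sqrt 2\, dB_t$ started at $X_0 \sim \pi$ and the ULA step $X_1^{X_0}$ driven by the same Brownian increment. The difference $\tilde X_\gamma - X_1^{X_0} = -\int_0^\gamma [\nabla U(\tilde X_s) - \nabla U(X_0)]\,ds$ is analyzed by applying Ito's formula to $\nabla U(\tilde X_s)$ and Taylor-expanding to second order using the Hessian-Lipschitz constant $\tilde L$. Averaging against the stationary law, using the Stein-type identity $\mathbb{E}_\pi[\|\nabla U\|_2^2] = \mathbb{E}_\pi[\Delta U] \leq Ld$, and the known Brownian moments yields an $L^2$ bound of the schematic form
\begin{equation*}
\mathbb{E}_\pi[\|\tilde X_\gamma - X_1^{X_0}\|_2^2] \leq c_1\, \gamma^3 L^2 d + c_2\, \gamma^4 L^4 d/m + c_3\, \gamma^4 \tilde L^2 d^2,
\end{equation*}
where the three terms correspond to the three contributions inside the brackets of the claimed bound.

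Third I would assemble the two ingredients. Iterating the triangle inequality $\mathcal{W}_2(\pi, \pi K_\gamma^{n+1}) \leq \mathcal{W}_2(\pi, \pi K_\gamma) + (1-\kappa\gamma)\mathcal{W}_2(\pi, \pi K_\gamma^n)$, which uses stationarity of $\pi$ under the continuous semigroup together with the contraction of $K_\gamma$, and letting $n\to\infty$ so that $\pi K_\gamma^n \to \pi_\gamma$, one obtains
\begin{equation*}
\mathcal{W}_2(\pi, \pi_\gamma) \leq \frac{1}{\kappa\gamma}\, \mathcal{W}_2(\pi, \pi K_\gamma).
\end{equation*}
The right-hand side is controlled by the $L^2$ bound of step two via synchronous coupling of the continuous flow and ULA from $X_0\sim\pi$. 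Squaring and dividing by $(\kappa\gamma)^2$ converts the $\gamma^3$, $\gamma^4$, and $\gamma^4$ contributions into $\gamma d/\kappa^2$-scale terms that, after matching prefactors, reproduce the stated expression $2\kappa^{-1}\gamma^2 d[2L^2 + \gamma L^4(\gamma/6+1/m) + \kappa^{-1}(4d\tilde L^2/3 + \gamma L^4 + 4L^4/(3m))]$.

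The main obstacle is step two. A naive Cauchy-Schwarz estimate on $\int_0^\gamma [\nabla U(\tilde X_s) - \nabla U(X_0)]\,ds$ yields only an $O(\gamma^{3/2})$ one-step error in $L^2$, which is too weak to recover the $\gamma^2 d L^2/\kappa$ leading order. Extracting the required cancellations demands a careful second-order Ito-Taylor expansion of $\nabla U(\tilde X_s)$ along the diffusion, use of the Stein identity for $\mathbb{E}_\pi\|\nabla U\|_2^2$, and precise bookkeeping of cross terms to isolate the separate $L^2 d$, $L^4 d/m$, and $\tilde L^2 d^2$ contributions with their prefactors $4/3$, $1/6$, etc. The other steps are structural and essentially determined by the contraction-plus-discretization template.
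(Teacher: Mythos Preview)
The paper does not prove this proposition at all: it is stated in the appendix purely as a recalled result, with the attribution ``\citep[Corollary 9]{durmus2019highBERNOULLI}'', and is then simply applied to the Gaussian target $\mathcal{N}(0,\Sigma)$ by substituting $L=\lambda_{\min}(\Sigma)^{-1}$, $m=\lambda_{\max}(\Sigma)^{-1}$, $\tilde L=0$. There is therefore no in-paper argument to compare your proposal against.

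For what it is worth, your sketch follows exactly the contraction-plus-discretization template that underlies the original Durmus--Moulines result: synchronous-coupling contraction of $K_\gamma$ at rate $1-\kappa\gamma$ via co-coercivity, a one-step $L^2$ discretization bound against the continuous Langevin diffusion started at stationarity using an It\^o--Taylor expansion and the identity $\mathbb{E}_\pi\|\nabla U\|_2^2 = \mathbb{E}_\pi[\Delta U]\le Ld$, and then summing the geometric series to obtain the $1/(\kappa\gamma)$ prefactor. Your own caveat is the right one: the delicate part is extracting the $O(\gamma^3)$ (rather than $O(\gamma^2)$) leading term in the squared one-step error, which requires the second-order expansion and the Hessian-Lipschitz constant $\tilde L$; the remaining steps are routine.
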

The dotted line in Figure \ref{fig:stylized_example_ula_mala} 
is plotted by applying \eqref{eq:durmus2019} for $\pi = \mathcal{N}(0, \Sigma)$, 
where $L = \lambda_{min}(\Sigma)^{-1}$, $m = \lambda_{max}(\Sigma)^{-1}$
and $\tilde{L}=0$. Here $\lambda_{max}(\Sigma)$ and $\lambda_{min}(\Sigma)$
are the largest and smallest eigenvalue of $\Sigma$ respectively.

\subsection{Non-asymptotic upper bounds using L-Lag coupling} \label{appendices:llag}
In this section, we discuss how to avoid burn-in removal and instead directly correct 
our bound for non-stationarity using the recent $L$-lag coupling approach of \citet{biswas2019estimating}
in the case of the 1-Wasserstein distance.

We first informally outline the approach of \citet{biswas2019estimating}. 
Consider a Markov chain on $(\mathcal{X},c)$ with transition kernel $K_1$, marginal distributions $(P_t)_{t\geq0}$
and a unique stationary distribution $P$. 
Consider a joint kernel $\bar{K}_1$ on $\mathcal{X} \times \mathcal{X}$ such that 
$\bar{K}_1((x,y), (\cdot, \mathcal{X})) = K_1(x, \cdot)$ and  
$\bar{K}_1((x,y), (\mathcal{X}, \cdot)) = K_1(y, \cdot)$ for all 
$x,y \in \mathcal{X}$. Then the \emph{$L$-lag coupling} chain
$(\tilde{X}_{t-L}, X_t)_{t \geq L}$ is generated by sampling 
$X_0$ and $\tilde{X}_0$ independently from a common initial distribution $P_0$, sampling 
$X_t | X_{t-1} \sim K_1(X_{t-1}, \cdot)$ for $t=1,\mydots,L$, and generating 
$(\tilde{X}_{t-L}, X_t)|\tilde{X}_{t-L-1}, X_{t-1} \sim \bar{K}_1((\tilde{X}_{t-L-1}, X_{t-1}), \cdot)$
for $t > L$. Crucially, the joint kernel $\bar{K}_1$ is 
designed such that: (i) the marginal chains $(\tilde{X}_{t-L})_{t \geq L}$
and $(X_t)_{t \geq 0}$ exactly meet such that the random meeting time 
$\tau \defeq \inf\{t>L: \tilde{X}_{t-L} = X_{t} \}$ is almost surely finite and 
(ii) the chains remain faithful after meeting such that $\tilde{X}_{t-L} = X_{t}$
for all $t \geq \tau$. 
Suppose the coupled chain $(\tilde{X}_{t-L}, X_t)_{t \geq L}$
satisfies Assumptions \ref{asm:llag_asm_1}, \ref{asm:llag_asm_2} and 
\ref{asm:llag_asm_3} \citep{biswas2019estimating, jacob2020unbiasedJRSSB}
(see \citet{middleton2020unbiasedEJS} for the use of polynomially-tailed
meeting times). 

\begin{Asm}[Marginal convergence and uniformly bounded moments] \label{asm:llag_asm_1} 
Marginal distributions $(P_t)_{t\geq0}$ converge to $P$ in 1-Wasserstein distance, and 
for all $t \geq L$, $\mathbb{E}[c(\tilde{X}_{t-L}, X_{t})^{2+\eta}] \leq D$
for some constants $ \eta > 0$ and $D < \infty$.
\end{Asm}
\begin{Asm}[Sub-exponentially tailed meeting times] \label{asm:llag_asm_2} 
The meeting times $\tau \defeq \inf \{ t > L : X_t = \tilde{X}_{t-L} \}$ 
satisfies $\mathbb{P}( \frac{\tau-L}{L} >t) \leq C \delta^t$
for some constants $C<\infty$ and $\delta \in (0,1)$ and all $t\geq 0$.
\end{Asm}
\begin{Asm}[Faithfulness after meeting] \label{asm:llag_asm_3} 
$X_t = \tilde{X}_{t-L}$ for all $t \geq \tau$.
\end{Asm}

Under Assumptions \ref{asm:llag_asm_1}, \ref{asm:llag_asm_2} and 
\ref{asm:llag_asm_3}, \citet{biswas2019estimating} obtain
\begin{flalign}
    \mathcal{W}_1(P_t, P) &\leq \sum_{j=1}^\infty \mathcal{W}_1(P_{t+jL-L}, P_{t+jL}) \label{eq:llag_apply1} \\
    &\leq \sum_{j=1}^\infty \mathbb{E}[ c(\tilde{X}_{t+jL-L}, X_{t+jL}) ] \label{eq:llag_apply2} \\
    &= \mathbb{E}\Big[\sum_{j=1}^{\infty} c(\tilde{X}_{t+jL-L}, X_{t+jL})\Big] \label{eq:llag_apply3} \\
    &= \mathbb{E}\Big[\sum_{j=1}^{\lceil(\tau-L-t)/L\rceil} c(\tilde{X}_{t+jL-L}, X_{t+jL})\Big] \label{eq:llag_apply4},
\end{flalign}
where \eqref{eq:llag_apply1} follows from the triangle inequality using Assumption \ref{asm:llag_asm_1}, 
\eqref{eq:llag_apply2} follows from the coupling representation of the 
Wasserstein distance, and \eqref{eq:llag_apply3} follows from 
interchanging the summation and expectation using the dominated
convergence theorem under Assumptions \ref{asm:llag_asm_1} and \ref{asm:llag_asm_2}, 
and \eqref{eq:llag_apply4} follows as $c(\tilde{X}_{t+jL-L}, X_{t+jL})=0$ for
all $j > \lceil(\tau-L-t)/L\rceil$ under Assumption \ref{asm:llag_asm_3}. 
Note that $\tau$ has finite expectation under Assumption \ref{asm:llag_asm_2}, which means 
the upper bound in \eqref{eq:llag_apply4} can be estimated in finite time. 
We can estimate this upper bound by simulating
multiple $L$-lag coupled chains $(\tilde{X}_{t-L}, X_t)_{\tau \geq t \geq L}$ 
independently and using the empirical average
$$ \frac{1}{I} \sum_{i=1}^I \sum_{j=1}^{\lceil(\tau^{(i)}-L-t)/L\rceil} c(\tilde{X}^{(i)}_{t+jL-L}, X^{(i)}_{t+jL})$$
where $I \geq 1$ is the  number of independent coupled chains.

The following Proposition employs this upper bound alongside $\cub[1]$ \eqref{eq:W_ublimit1}
to obtain a non-asymptotic upper bound on $\mathcal{W}_1(P,Q)$. 

\begin{Prop}[Non-asymptotic upper bound] \label{prop:Llag_bound_non_asym}
For any lag $L \geq 1$, consider the coupled chain $(\tilde{X}_{t-L}, X_t, Y_t, \tilde{Y}_{t-L})_{t \geq L}$
such that $(\tilde{X}_{t-L}, X_t)_{t \geq L}$ is an $L$-lag coupling chain for the kernel $K_1$, 
$(\tilde{Y}_{t-L}, Y_t)_{t \geq L}$ is an $L$-lag coupling chain for the kernel $K_2$, and $(X_t, Y_t)_{t \geq L}$
is a coupled chain sampled using Algorithm \ref{algo:coupled_chain_general}. Under 
Assumption \ref{asm:marginal_conv_Wp} with $p=1$ and Assumptions 
\ref{asm:llag_asm_1}, \ref{asm:llag_asm_2} and \ref{asm:llag_asm_3} for the coupled
chains $(\tilde{X}_{t-L}, X_t)_{t \geq L}$ and $(\tilde{Y}_{t-L}, Y_t)_{t \geq L}$, 
\begin{equation} \label{eq:Llag_bound_non_asym}
\mathcal{W}_1(P, Q) \leq \mathbb{E}[\cub[1,t]] + 
\mathbb{E}[\sum_{j=1}^{ \lceil (\tau_P - L -t)/L \rceil } c(\tilde{X}_{t+(j-1)L}, X_{t+jL})] + 
\mathbb{E}[\sum_{j=1}^{ \lceil (\tau_Q - L -t)/L \rceil } c(\tilde{Y}_{t+(j-1)L}, Y_{t+jL})]
\end{equation}
for all $t \geq 0$, where $\tau_P \defeq \inf\{t>L: \tilde{X}_{t-L} = X_{t} \}$ and 
$\tau_Q \defeq \inf\{t>L: \tilde{Y}_{t-L} = Y_{t} \}$.
\end{Prop}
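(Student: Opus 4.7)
The plan is to combine the triangle inequality for $\mathcal{W}_1$ with two already‑developed ingredients: the coupling representation of the Wasserstein distance (applied to the joint chain from Algorithm~\ref{algo:coupled_chain_general}) and the $L$-lag telescoping bound reproduced in \eqref{eq:llag_apply1}--\eqref{eq:llag_apply4}. Concretely, for any $t \geq L$, I would start from
\begin{equation*}
\mathcal{W}_1(P, Q) \leq \mathcal{W}_1(P, P_t) + \mathcal{W}_1(P_t, Q_t) + \mathcal{W}_1(Q_t, Q),
\end{equation*}
where $P_t$ and $Q_t$ denote the marginal laws of $X_t$ and $Y_t$ respectively. The middle term is the easy one: since $(X_t,Y_t)$ is a valid coupling of $P_t$ and $Q_t$ by \eqref{eq:joint_kernel}, the coupling representation \eqref{eq:Wp_defn} gives $\mathcal{W}_1(P_t,Q_t) \leq \mathbb{E}[c(X_t,Y_t)] = \mathbb{E}[\cub[1,t]]$.

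Next I would bound each convergence term by reusing the $L$-lag argument verbatim on each marginal $L$-lag chain. Assumption~\ref{asm:marginal_conv_Wp} (with $p=1$) and Assumptions~\ref{asm:llag_asm_1}--\ref{asm:llag_asm_3} applied to $(\tilde X_{t-L}, X_t)_{t\ge L}$ are precisely the hypotheses needed for \eqref{eq:llag_apply1}--\eqref{eq:llag_apply4}, yielding
\begin{equation*}
\mathcal{W}_1(P_t, P) \leq \mathbb{E}\Big[\sum_{j=1}^{\lceil(\tau_P-L-t)/L\rceil} c(\tilde{X}_{t+(j-1)L}, X_{t+jL})\Big],
\end{equation*}
and symmetrically for $\mathcal{W}_1(Q_t,Q)$ using $(\tilde Y_{t-L}, Y_t)_{t\ge L}$ and the meeting time $\tau_Q$. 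Summing the three pieces delivers \eqref{eq:Llag_bound_non_asym}.

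The only genuinely new bookkeeping is to verify that the joint chain $(\tilde{X}_{t-L}, X_t, Y_t, \tilde{Y}_{t-L})_{t \geq L}$ can actually be constructed so that each of the three sub‑couplings has the prescribed marginal law: $(\tilde X_{t-L}, X_t)$ is an $L$-lag coupling under $K_1$, $(\tilde Y_{t-L}, Y_t)$ is an $L$-lag coupling under $K_2$, and $(X_t,Y_t)$ evolves via $\bar K$ from Algorithm~\ref{algo:coupled_chain_general}. This is immediate by conditional concatenation — one may for example sample $(X_t,Y_t)_{t\ge 0}$ jointly via $\bar K$, then conditionally on each marginal trajectory generate $(\tilde X_{t-L})_{t\ge L}$ and $(\tilde Y_{t-L})_{t\ge L}$ from their respective $L$-lag joint kernels — and the three-term decomposition above only ever uses each sub‑coupling's marginal properties. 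The main (and essentially only) subtlety will be checking that the dominated‑convergence step from \eqref{eq:llag_apply2} to \eqref{eq:llag_apply3} still applies under the stated assumptions, but this is exactly what Assumptions~\ref{asm:llag_asm_1} and \ref{asm:llag_asm_2} were designed to guarantee, so no extra work is required beyond citing the preceding display.
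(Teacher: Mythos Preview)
Your proposal is correct and follows essentially the same route as the paper: apply the triangle inequality $\mathcal{W}_1(P,Q) \leq \mathcal{W}_1(P,P_t) + \mathcal{W}_1(P_t,Q_t) + \mathcal{W}_1(Q_t,Q)$, bound the middle term by $\mathbb{E}[\cub[1,t]]$ via Proposition~\ref{prop:Wp_UB_t}, and bound the two outer terms by invoking the $L$-lag result \eqref{eq:llag_apply1}--\eqref{eq:llag_apply4} (equivalently, \citet[Theorem 2.5]{biswas2019estimating}) on each marginal $L$-lag chain. Your additional remarks on constructing the four-component joint chain and on dominated convergence are sound but go slightly beyond what the paper spells out.
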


\subsection{Sinkhorn algorithm simulations for Section \ref{subsection:comparison}} \label{appendices:sinkhorn}
In this section we consider the impact of the regularization parameter 
of the Sinkhorn algorithm. Figure \ref{fig:sinkhorn_distance} of this section plots the Wasserstein
distance upper bounds for the stylized example in Section \ref{subsection:wass_ub}. 
In particular, we consider the $2$-Wasserstein distance with Euclidean norm on $\mathbb{R}^d$,
and the distributions $P = \mathcal{N}(0, \Sigma)$ where $\Sigma_{i,j} = 0.5^{|i-j|}$
for $1 \leq i,j \leq d$ and $Q= \mathcal{N}(0, I_d)$ in the case of dimension $d=10$. 

The $\cub[2]$ \eqref{eq:W_ublimit1} estimate
(black line) in Figure \ref{fig:sinkhorn_distance} is based a CRN coupling of 
marginal MALA kernels, with $I = 10$ independent 
coupling chains and trajectories of length $T=500$ with a burn-in of 
$S = 100$ for each chain. The true Wasserstein (black dot-dashed line)
distance and the upper bound from indepdendent coupling (black dotted line) 
are analytically tractable, as given in Section \ref{subsection:wass_ub}. 
For different values of the entropic regularization parameter $\lambda$, 
the grey solid line plots the induced distance of the 
optimal matching obtained from the Sinkhorn algorithm. For each $\lambda$, 
we implement the Sinkhorn algorithm on empirical distributions
with $IT=5000$ sample points from $P$ and $Q$. Figure \ref{fig:sinkhorn_distance}
shows that we require a small entropic regularization parameter $\lambda$
to obtain informative upper bounds using the Sinkhorn algorithm. 
On the other hand, Figure \ref{fig:sinkhorn_time} shows that the runtime
for the Sinkhorn algorithm increases dramatically for smaller values of
$\lambda$. This example illustrates that the Sinkhorn algorithm has
expensive runtime precisely for the smaller values of $\lambda$
that give tighter upper bounds to the Wasserstein distance.
In comparison, the $\cub[2]$ \eqref{eq:W_ublimit1} estimate
does not require solving any expensive optimization problem.

\begin{figure}[!t]
\captionsetup[subfigure]{font=footnotesize,labelfont=footnotesize}
    \centering
    \begin{subfigure}[b]{0.43\textwidth}
        \includegraphics[width=\textwidth]{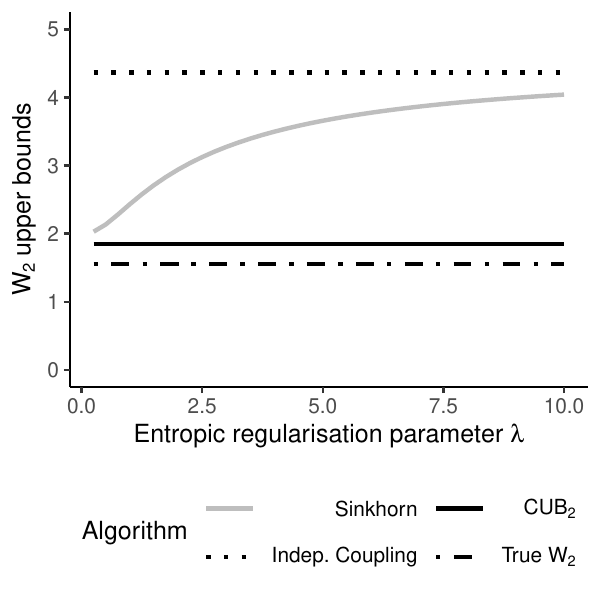}
        \caption{$\mathcal{W}_2$ upper bounds with varying $\lambda$.}
        \label{fig:sinkhorn_distance}
    \end{subfigure}
    \begin{subfigure}[b]{0.43\textwidth}
        \includegraphics[width=\textwidth]{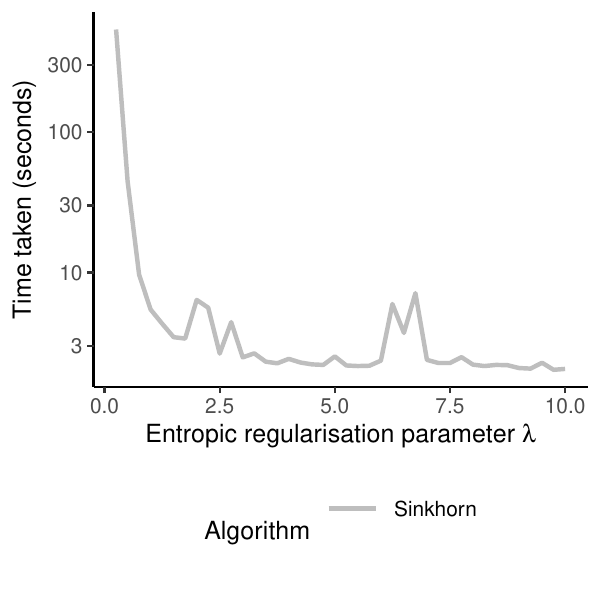} 
        \caption{Sinkhorn runtime with varying $\lambda$.}
        \label{fig:sinkhorn_time}
    \end{subfigure}
    \caption{Figure \ref{fig:sinkhorn_distance} plots upper bound estimates for $\mathcal{W}_2(P,Q)$ with $P = \mathcal{N}(0, \Sigma) \text{ where } \Sigma_{i,j} = 0.5^{|i-j|} \text{ for } 1 \leq i,j \leq d$, $Q= \mathcal{N}(0, I_d)$, metric $c(x,y) = \twonorm{x-y}$ and dimension $d=10$. Figure \ref{fig:sinkhorn_time} plots the runtime of the Sinkhorn algorithm.
    }
    \label{fig:sinkhorn}
\end{figure}

\section{Proofs} \label{appendices:proofs}
\subsection{Consistency proofs}
\paragraph{Technical Results. } We first collect some technical results for reference. 
\begin{Lemma} \label{lemma:series_limit}
Let $(a_j)_{j \geq 0}$ be a real sequence with 
$a_j \overset{j \rightarrow \infty}{\rightarrow} 0$, and let $\rho \in (0,1)$. 
Then $\sum_{j=1}^t \rho^{t-j} a_j \overset{t \rightarrow \infty}{\rightarrow}  0$.
\end{Lemma}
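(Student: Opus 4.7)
The plan is to use a standard head/tail splitting argument driven by the convergence $a_j \to 0$ and the summability of the geometric weights.

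First I would fix $\epsilon > 0$ and exploit the hypothesis $a_j \to 0$ in two ways: (i) the sequence $(a_j)$ is bounded, so $M \defeq \sup_{j \geq 0} |a_j| < \infty$, and (ii) there exists $N = N(\epsilon)$ such that $|a_j| < \epsilon$ whenever $j > N$. These two facts will let me control the contributions of ``small $j$'' and ``large $j$'' separately.

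Next, for any $t > N$, I would split
\[
\Bigl| \sum_{j=1}^t \rho^{t-j} a_j \Bigr| \leq \sum_{j=1}^N \rho^{t-j} |a_j| + \sum_{j=N+1}^t \rho^{t-j} |a_j|.
\]
The second (head) sum is bounded above by $\epsilon \sum_{j=N+1}^t \rho^{t-j} \leq \epsilon \sum_{k=0}^\infty \rho^k = \epsilon/(1-\rho)$, using that $a_j$ is uniformly $\epsilon$-small there and that the geometric weights are summable. The first (tail, i.e. small-$j$) sum is bounded above by $M \sum_{j=1}^N \rho^{t-j} = M \rho^{t-N}(1 + \rho + \cdots + \rho^{N-1}) \leq M \rho^{t-N}/(1-\rho)$, which tends to $0$ as $t \to \infty$ since $\rho \in (0,1)$ and $N$ is fixed.

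Combining these two bounds and letting $t \to \infty$ with $N$ fixed yields $\limsup_{t \to \infty} \bigl| \sum_{j=1}^t \rho^{t-j} a_j \bigr| \leq \epsilon/(1-\rho)$. Since $\epsilon > 0$ was arbitrary, the limit is $0$, as claimed. There is no real obstacle here: the argument is a routine analysis exercise, essentially the statement that the convolution of a null sequence with an $\ell^1$ kernel is null, and I do not expect any subtleties beyond the splitting step.
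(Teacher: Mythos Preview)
Your argument is correct and essentially identical to the paper's own proof: both fix $\epsilon$, use boundedness of $(a_j)$ to get a constant $M$, split the sum at the index beyond which $|a_j|<\epsilon$, bound the small-$j$ block by $M$ times a vanishing geometric factor and the large-$j$ block by $\epsilon/(1-\rho)$, and then let $\epsilon\to 0$. The only cosmetic difference is that the paper writes out the finite geometric sums exactly whereas you bound them by the full series; nothing substantive distinguishes the two.
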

\begin{proof}[Proof of \ref{lemma:series_limit}]
As $a_j \overset{j \rightarrow \infty}{\rightarrow} 0$, the sequence $(a_j)_{j \geq 0}$
is bounded by some $M \in (0, \infty)$. Also for all $\epsilon >0$, there
exists some $j_0 \geq 1$ such that $|a_j| < \epsilon$ for all $j \geq j_0$. 
For all $t > j_0$, this gives
\begin{flalign}
\Big| \sum_{j=1}^t \rho^{t-j} a_j \Big| &\leq \sum_{j=1}^{j_0} \rho^{t-j} |a_j| + \sum_{j=j_0+1}^t \rho^{t-j} |a_j| 
\leq M \rho^{t-j_0} \frac{1 - \rho^{j_0}}{1-\rho} + \epsilon \frac{1-\rho^{t-j_0}}{1-\rho}.
\end{flalign}
Taking $t \rightarrow \infty$, we obtain 
$\lim_{t \rightarrow \infty} \Big| \sum_{j=1}^t \rho^{t-j} a_j \Big| \leq \epsilon/(1-\rho)$,
where $\epsilon/(1-\rho)$ can be made arbitrarily small.
\end{proof}

\begin{Lemma}\label{lemma:slln}
Let $(\xi_i)_{i \geq 0}$ be independent and identically distributed non-negative 
random variables with $\mathbb{E}[\xi_1] < \infty$, and let $S_n = \sum_{i=1}^n \xi_i$. Then
as $n\to\infty$, $S_n/n \toasL \mathbb{E}[ \xi_1 ]$ and for any
$p \geq 1$, $(S_n/n)^{1/p} \toasL \mathbb{E}[\xi_1]^{1/p}$. 
\end{Lemma}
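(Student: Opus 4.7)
The plan is to dispatch the two claims separately and to reduce each to a classical convergence theorem.

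For the first claim, $S_n/n \toasL \mathbb{E}[\xi_1]$, I would simply invoke Kolmogorov's strong law of large numbers together with its $L^1$ companion: since $(\xi_i)_{i \geq 0}$ are i.i.d.\ with $\mathbb{E}[\xi_1]<\infty$, the sample means converge almost surely to $\mathbb{E}[\xi_1]$, and because $\{S_n/n\}_{n\geq 1}$ is uniformly integrable (any $L^1$-convergent, or reverse-martingale, sequence is UI), the Vitali convergence theorem upgrades this to $L^1$ convergence. No work beyond citation is required here.

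For the second claim, I would handle almost-sure and $L^1$ convergence in turn. The map $f(x)=x^{1/p}$ is continuous on $[0,\infty)$, so the continuous mapping theorem (applied pointwise on the almost-sure convergence set) immediately yields $(S_n/n)^{1/p} \toas \mathbb{E}[\xi_1]^{1/p}$. For the $L^1$ leg, I would use the Scheffé-type upgrade for nonnegative random variables: if $Z_n\geq 0$, $Z_n \toas Z$, and $\mathbb{E}[Z_n]\to\mathbb{E}[Z]<\infty$, then $Z_n\toL Z$. Applying this with $Z_n = (S_n/n)^{1/p}$ and $Z = \mathbb{E}[\xi_1]^{1/p}$, the only thing to verify is convergence of the means. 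That follows from a standard sandwich: by concavity of $f$ and Jensen's inequality,
\begin{equation}
\mathbb{E}\big[(S_n/n)^{1/p}\big] \;\leq\; \mathbb{E}[S_n/n]^{1/p} \;=\; \mathbb{E}[\xi_1]^{1/p},
\end{equation}
while Fatou's lemma applied to the nonnegative sequence $(S_n/n)^{1/p}$ together with the almost-sure convergence established above gives
\begin{equation}
\liminf_{n\to\infty} \mathbb{E}\big[(S_n/n)^{1/p}\big] \;\geq\; \mathbb{E}[\xi_1]^{1/p}.
\end{equation}
Combining the two pinches the limit to $\mathbb{E}[\xi_1]^{1/p}$, completing the proof.

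There is no real obstacle here—the statement is a routine packaging of the classical SLLN with a continuous transformation—so the only thing to be careful about is articulating why $L^1$ convergence survives the $(\cdot)^{1/p}$ transformation, which is exactly what the Jensen/Fatou pincer plus the nonnegative-Scheffé upgrade accomplish. An equally valid alternative would be to establish uniform integrability of $\{(S_n/n)^{1/p}\}$ directly from the elementary bound $x^{1/p}\leq 1+x$ (valid for $x\geq 0$ and $p\geq 1$) and the UI of $\{S_n/n\}$, then invoke Vitali; I would prefer the Jensen/Fatou route because it avoids introducing the auxiliary bound.
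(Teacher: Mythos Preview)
Your proposal is correct. The first claim and the almost-sure half of the second claim are handled exactly as in the paper (reverse/backwards martingale for the $L^1$ SLLN, continuous mapping for the a.s.\ transformation). Where you diverge is in the $L^1$ convergence of $(S_n/n)^{1/p}$: you use a Scheff\'e-type upgrade, pinching $\mathbb{E}[(S_n/n)^{1/p}]$ between Jensen above and Fatou below to force convergence of means, then invoking the standard ``nonnegative, a.s.\ convergent, means converge $\Rightarrow$ $L^1$ convergent'' lemma. The paper instead uses the elementary pointwise inequality $|a^{1/p}-b^{1/p}|\leq |a-b|^{1/p}$ for $a,b\geq 0$, followed by Jensen, to bound $\mathbb{E}\big[|(S_n/n)^{1/p}-\mathbb{E}[\xi_1]^{1/p}|\big]\leq \mathbb{E}\big[|S_n/n-\mathbb{E}[\xi_1]|\big]^{1/p}$, which goes to zero by the first claim. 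The paper's route is shorter and gives an explicit quantitative bound tying the transformed $L^1$ error directly to the untransformed one; your route is slightly more machinery-heavy but has the virtue of not relying on any algebraic inequality specific to the map $x\mapsto x^{1/p}$, and would transfer verbatim to any continuous concave transformation.
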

\begin{proof}[Proof of \ref{lemma:slln}]
As $n$ tends to infinity, $S_n/n \toasL \mathbb{E}[ \xi_1 ]$ follows from the proof of 
the Strong law of large numbers using backwards martingales
(see, e.g., \citet[Theorem 4.7.1 and Example 4.7.4]{durrett2019probability}).
$(S_n/n)^{1/p} \toas (\mathbb{E}[ \xi_1 ])^{1/p}$ follows from 
$S_n/n \toas \mathbb{E}[ \xi_1 ]$ by continuous mapping theorem on $[0, \infty)$. 
Finally, for $p \geq 1$, 
\begin{equation}
    \mathbb{E}[|(S_n/n)^{1/p} - (\mathbb{E}[ \xi_1 ])^{1/p}|] \leq 
    \mathbb{E}[|(S_n/n) - \mathbb{E}[\xi_1]|^{1/p}] \leq
    \mathbb{E}[|(S_n/n) - \mathbb{E}[\xi_1]|]^{1/p} \stackrel{n\to\infty}{\to} 0
\end{equation}
where the first inequality follows as $|a^{1/p}-b^{1/p}| \leq |a-b|^{1/p}$
for all $a, b \geq 0$ and $p \geq 1$, the
second inequality follows from Jensen's inequality
and the limit follows as $S_n/n \toL \mathbb{E}[ \xi_1 ]$. Therefore, 
$(S_n/n)^{1/p} \toL \mathbb{E}[ \xi_1 ]^{1/p}$.
\end{proof}

\begin{proof}[Proof of Proposition \ref{prop:Wp_UB_t}: Consistency of instantaneous \cubname]
Note that $\mathcal{W}_p(P_t, Q_t)$ is well-defined and $\mathbb{E}[c(X_t,Y_t)^p]$ is finite as
distributions $P$ and $Q$ have finite moments of order $p$. We obtain
\begin{flalign}
\mathcal{W}_p(P_t, Q_t)^p \leq \mathbb{E}[c(X_t,Y_t)^p] = \mathbb{E}[ \cub[p,t]^p ],
\end{flalign}
where the inequality follows from the coupling representation of Wasserstein distance,
and the equality follows from the definition of $\cub[p,t]$. As $\mathbb{E}[ \cub[p,t]^p ]<\infty$, 
by Lemma \ref{lemma:slln}, $\cub[p,t]^p \toasL \mathbb{E}[ \cub[p,t]^p ]$ as $I\to\infty$. 
\end{proof}

\begin{proof}[Proof of Corollary \ref{cor:Wp_UB_average_t}:
Consistency of CUB for time-averaged marginals]
We first show that 
$$\mathcal{W}_p( \frac{1}{T-S} \sum_{t=S+1}^T P_t, \frac{1}{T-S} \sum_{t=S+1}^T Q_t)^p \leq 
\frac{1}{T-S} \sum_{t=S+1}^T \mathcal{W}_p( P_t, Q_t)^p.$$
Let $\gamma_t$ denote the $p$-Wasserstein optimal coupling between distributions
$P_t$ and $Q_t$ for $t=S+1,\mydots,T$. Sample the coupling $(X^*,Y^*)$ such that  
$(X^*,Y^*)| U^*=t \sim \gamma_t$ for $U^* \sim \Uniform( \{S+1,\mydots, T\})$. Then 
$X^* \sim \frac{1}{T-S} \sum_{t=S+1}^T P_t$ and $Y^* \sim \frac{1}{T-S} \sum_{t=S+1}^T Q_t$
marginally, and 
\begin{flalign}
\mathcal{W}_p \Big( \frac{1}{T-S} \sum_{t=S+1}^T P_t, \frac{1}{T-S} \sum_{t=S+1}^T Q_t \Big)^p &\leq 
\mathbb{E}[c(X^*,Y^*)^p] \text{ by the coupling representation of } \mathcal{W}_p \\
&= \frac{1}{T-S} \sum_{t=S+1}^T \mathbb{E}[ c(X^*,Y^*)^p | U^*=t] \\
&= \frac{1}{T-S} \sum_{t=S+1}^T \mathcal{W}_p( P_t, Q_t)^p.
\end{flalign}
Now by Proposition \ref{prop:Wp_UB_t} and definition \eqref{eq:W_ublimit1}, 
\begin{flalign}
\frac{1}{T-S} \sum_{t=S+1}^T \mathcal{W}_p( P_t, Q_t)^p
\leq \mathbb{E} \Big[ \frac{1}{T-S} \sum_{t=S+1}^T \cub[p,t]^p \Big] 
&= \mathbb{E}[ \cub[p]^p ].
\end{flalign}
As $\mathbb{E}[ \cub[p]^p ] < \infty$, by Lemma \ref{lemma:slln}
$\cub[p]^p \toasL \mathbb{E}[ \cub[p]^p ]$
as $I\to\infty$.
\end{proof}

\begin{proof}[Proof of Corollary \ref{cor:Wp_UB}:
Consistency of \cubname with stationary initialization]
Note that $\mathcal{W}_p(P, Q)$ is well-defined and $\sum_{t=S+1}^T \mathbb{E}[c(X_t,Y_t)^p]/(T-S)$ 
is finite as distributions $P_t$ and $Q_t$ have finite moments of order $p$. We obtain,
\begin{flalign}
\mathcal{W}_p(P, Q)^p = \frac{1}{T-S} \sum_{t=S+1}^T \mathcal{W}_p(P_t, Q_t)^p 
\leq \frac{1}{T-S} \sum_{t=S+1}^T \mathbb{E}[c(X_t,Y_t)^p] = \mathbb{E}[ \cub^p ].
\end{flalign}
where the first equality follows as $P_t=P$ and $Q_t=Q$ for all $t \geq 0$, the
inequality follows Proposition \ref{prop:Wp_UB_t},
and the last equality follows from the definition of $\cub$.
As $\mathbb{E}[ \cub[p]^p ] < \infty$, by Lemma \ref{lemma:slln} 
$\cub[p]^p \toasL \mathbb{E}[ \cub[p]^p ]$
as $I\to\infty$.
\end{proof}

\begin{proof}[Proof of Proposition \ref{prop:Wp_UB_marginal_conv}:
Consistency when chain marginals converge]
Let $(P_t)_{t \geq 0}$ and $(Q_t)_{t \geq 0}$ denote the marginal distributions
of Markov chains $(X_t)_{t \geq 0}$ and $(Y_t)_{t \geq 0}$ respectively.
By Assumption \ref{asm:marginal_conv_Wp}, distributions 
$(P_t)_{t \geq 0}$, $(Q_t)_{t \geq 0}$, $P$ and $Q$ all have finite moments of
order $p$. Then for all $t \geq 1$, 
\begin{flalign}
\mathcal{W}_p(P,Q) &\leq \mathcal{W}_p(P,P_t) + \mathcal{W}_p(P_t,Q_t) + \mathcal{W}_p(Q_t,Q) \label{eq:apply_wp_metric}\\
&\leq \mathcal{W}_p(P,P_t) + \mathbb{E}[c(X_t, Y_t)^p]^{1/p} + \mathcal{W}_p(Q_t,Q), \label{eq:apply_coupling_rep_wass}
\end{flalign}
where \eqref{eq:apply_wp_metric} follows by the triangle inequality as
$\mathcal{W}_p$ is a metric on the space of measure on $\mathcal{X}$ with 
finite moments of order $p$ , and \eqref{eq:apply_coupling_rep_wass}
follows from the coupling representation of $\mathcal{W}_p$. By Assumption
\ref{asm:marginal_conv_Wp}, $\lim_{t \rightarrow \infty} \mathcal{W}_p(P,P_t) = 0$
and $\lim_{t \rightarrow \infty} \mathcal{W}_p(Q_t, Q) = 0$. 
Taking the limit infimum in \eqref{eq:apply_coupling_rep_wass} and raising to
the $p^{th}$ exponent gives 
$ \mathcal{W}_p(P,Q)^p \leq \liminf_{t \rightarrow \infty} \mathbb{E}[c(X_t, Y_t)^p] $.
Therefore for all
$\epsilon > 0$, there exists $S \geq 1$ such that for all $t \geq S$, 
$\mathcal{W}_p(P,Q)^p \leq \epsilon + \mathbb{E}[c(X_t, Y_t)^p]$, and
\begin{flalign}
\mathcal{W}_p(P,Q)^p &\leq  \epsilon + \frac{1}{T-S} \sum_{t=S+1}^T \mathbb{E}[c(X_t, Y_t)^p]
= \epsilon + \mathbb{E}[ \cub^p]
\end{flalign}
for all $T \geq S$.
As $\mathbb{E}[ \cub[p]^p ]^p < \infty$, by Lemma \ref{lemma:slln} 
$\cub[p]^p \toasL \mathbb{E}[ \cub[p]^p ]$
as $I\to\infty$.
\end{proof}

\begin{proof}[Proof of Proposition \ref{prop:Llag_bound_non_asym}:
Non-asymptotic upper bound]
By the triangle inequality, 
\begin{flalign}
\mathcal{W}_1(P, Q) \leq \mathcal{W}_1(P_t, Q_t) + \mathcal{W}_1(P_t,P) + \mathcal{W}_1(P_t,P).
\end{flalign}
By Proposition \ref{prop:Wp_UB_t}, $\mathcal{W}_1(P_t, Q_t) \leq \mathbb{E} [
\cub[1,t] ]$. Under assumptions \ref{asm:llag_asm_1}, 
\ref{asm:llag_asm_2} and \ref{asm:llag_asm_3}, by \citet[Theorem 2.5]{biswas2019estimating} 
\begin{flalign}
\mathcal{W}_1(P_t,P) &\leq \mathbb{E} \big[
\sum_{j=1}^{ \lceil (\tau_P - L -t)/L \rceil } c(\tilde{X}_{t+(j-1)L}, X_{t+jL}) \big] \text{ and } \\
\mathcal{W}_1(Q_t,Q) &\leq \mathbb{E} \big[
\sum_{j=1}^{ \lceil (\tau_Q - L -t)/L \rceil } c(\tilde{Y}_{t+(j-1)L}, Y_{t+jL}) \big].
\end{flalign}
Equation \eqref{eq:Llag_bound_non_asym} now directly follows. As the meeting times 
$\tau_P$ and $\tau_Q$ have sub-exponential tails by Assumption \ref{asm:llag_asm_2}, 
the $L$-lag upper bounds can be estimated in finite time.
\end{proof}

\subsection{Wasserstein upper bound proofs}
\begin{proof}[Proof of Theorem \ref{thm:Wp_UB_single_step}:
\cubname upper bound]
Under the coupled kernel $\bar{K}$ from Algorithm \ref{algo:coupled_kernel_single_step}, 
for each $t \geq 1$ we have the coupling $(X_{t},Z_{t},Y_{t})$
where $(X_{t},Z_{t}) | X_{t-1}, Y_{t-1} \sim \Gamma_1(X_{t-1}, Y_{t-1})$ and
$(Z_{t}, Y_{t}) | X_{t-1}, Y_{t-1} \sim \Gamma_\Delta (Y_{t-1})$. This gives
\begin{flalign}
\mathbb{E} [ c(X_{t}, Y_{t})^p ]^{1/p} &= \mathbb{E}[\mathbb{E}[c(X_{t}, Y_{t})^p |X_{t-1}, Y_{t-1}]]^{1/p}  \\
&\leq \mathbb{E}[ \mathbb{E}[ \big( c(X_{t}, Z_{t}) + c(Z_{t}, Y_{t}) \big)^p |X_{t-1}, Y_{t-1} ] ]^{1/p} \label{eq:Wp_UB_dim_free_apply1} \\
&\leq \mathbb{E}[ \mathbb{E}[ c(X_{t}, Z_{t})^p |X_{t-1}, Y_{t-1} ] ]^{1/p} + 
\mathbb{E}[ \mathbb{E}[ c(Z_{t}, Y_{t})^p |X_{t-1}, Y_{t-1} ] ]^{1/p} \label{eq:Wp_UB_dim_free_apply2} \\
&\leq \rho \mathbb{E}[ c(X_{t-1}, Y_{t-1})^p ]^{1/p}  + \mathbb{E}[ \Delta_p(Y_{t-1}) ]^{1/p}\label{eq:Wp_UB_dim_free_apply3}
\end{flalign}
where \eqref{eq:Wp_UB_dim_free_apply1} follows as $c$ is a metric,
\eqref{eq:Wp_UB_dim_free_apply2} follows by Minkowski's inequality, and
\eqref{eq:Wp_UB_dim_free_apply3} follows by Assumption \ref{asm:gammaP_uniform_contract} 
with $\Delta_p(z) \defeq \mathbb{E}[c(X,Y)^p|z]$ for $(X,Y) \sim \Gamma_\Delta(z)$. 
By induction, \eqref{eq:Wp_UB_dim_free_apply3} implies
\begin{flalign}
\mathbb{E} [ c(X_{t}, Y_{t})^p ]^{1/p} 
&\leq \rho^t \mathbb{E} [ c(X_0, Y_0)^p ]^{1/p} + \sum_{i=1}^t \rho^{t-i} \mathbb{E} [ \Delta_p(Y_{i-1}) ]^{1/p}.
\end{flalign}
\end{proof}

\begin{proof}[Proof of Corollary \ref{cor:Wp_UB_single_step}:
\cubname upper bound under marginal convergence]
Denote $a\defeq \mathbb{E}[ \Delta_p(Y^*)]^{1/p}$ for $Y^* \sim Q$ and 
$a_k \defeq \mathbb{E} [\Delta_p(Y_{k})]^{1/p}$ for $k \geq 0$. 
Then $a_k \overset{k \rightarrow \infty}{\rightarrow} a$, because 
$Q_t$ converges in $p$-Wasserstein distance to $Q$ as $t \rightarrow \infty$. 
By Lemma \ref{lemma:series_limit}, this implies 
\begin{flalign}
\sum_{i=1}^t \rho^{t-i} a_{i-1}
\overset{t \rightarrow \infty}{\rightarrow} 
\sum_{i=1}^t \rho^{t-i} a = \frac{1-\rho^t}{1-\rho} a.
\end{flalign} 
Therefore, for all $\epsilon>0$ there exists $S \geq 1$ such that for all $t \geq S$,
$\sum_{i=1}^t \rho^{t-i} |a_{i}-a| < \epsilon$. 
By Theorem \ref{thm:Wp_UB_single_step}, 
\begin{flalign}
\mathbb{E} [ c(X_{t}, Y_{t})^p ]^{1/p} 
&\leq \rho^t \mathbb{E} [ c(X_0, Y_0)^p ]^{1/p} + \sum_{i=1}^t \rho^{t-i} a_{i-1} \\
&\leq \rho^t \mathbb{E} [ c(X_0, Y_0)^p ]^{1/p} 
+ \sum_{i=1}^t \rho^{t-i} a
+ \sum_{i=1}^t \rho^{t-i} |a_{i-1}-a|   \\
&= \rho^t \mathbb{E} [ c(X_0, Y_0)^p ]^{1/p} 
+ \frac{1-\rho^t}{1-\rho} a + \epsilon .
\end{flalign}
\end{proof}

\begin{proof}[Proof of Proposition \ref{prop:Wp_UB_single_step_lyapunov}:
\cubname upper bound weighted by a Lyapunov function]
As $V$ is a  a $p^{th}$-order Lyapunov function of $K_2$, by induction 
\begin{flalign} \label{eq:lyapunov_KQ_apply}
\mathbb{E}[V(Y_{i})^p] \leq \gamma^i \mathbb{E}[V(Y_{0})^p] + (1-\gamma^i)\frac{L}{1-\gamma}
\text{ for all } i \geq 0.
\end{flalign}
for all $i \geq 0$. Therefore,
\begin{alignat*}{2}
\mathbb{E} [ \Delta_p(Y_{i}) ] &\leq \delta \mathbb{E}[1+ V(Y_{i-1})^p] \leq 
\delta^p \Big( 1 + \gamma^{i-1} \mathbb{E}[V(Y_{0})^p] + (1-\gamma^{i-1})\frac{L}{1-\gamma} \Big) 
\leq  \delta^p \kappa^p
\end{alignat*}
for all $i \geq 1$, where the first inequality follows from the definition of $\delta$, second 
inequality from \eqref{eq:lyapunov_KQ_apply}, and 
the second inequality from the definition of $\kappa$.
By Theorem \ref{thm:Wp_UB_single_step}, we obtain
\begin{flalign}
\mathbb{E} [ c(X_{t}, Y_{t})^p ]^{1/p} 
&\leq \rho^t \mathbb{E} [ c(X_0, Y_0)^p ]^{1/p} + \sum_{i=1}^t \rho^{t-i} \mathbb{E} \Big[ \Delta_p(Y_{i-1}) \Big]^{1/p} 
\\
&\leq \rho^t \mathbb{E} [ c(X_0, Y_0)^p ]^{1/p} + \delta \kappa \sum_{i=1}^t \rho^{t-i} \\
&= \rho^t \mathbb{E} [ c(X_0, Y_0)^p ]^{1/p} + (1-\rho^t) \frac{\delta \kappa}{1- \rho}. 
\end{flalign}
\end{proof}

\subsection{Wasserstein distances of empirical distributions proofs}\label{sec:empirical_wasserstein_proof}
To prove Proposition \ref{prop:empirical_wasserstein}, we first record a technical result.
\begin{Lemma} \label{lemma:empirical_wass}
Suppose $S$ and $T$ are distributions on the metric space $(\mathcal{X}, c)$ with finite moments of order $p$, and $n \geq 1$ is an integer. 
Given $U_i \sim S$ for $i=1, \mydots, n$, let $\hat{S}_n$ denote the empirical distribution of $(U_1, \mydots, U_n)$. 
Then, 
\begin{equation}
\mathcal{W}_p(S,T)^p \leq \mathbb{E}[ \mathcal{W}_p( \hat{S}_n, T )^p ].
\end{equation}
\end{Lemma}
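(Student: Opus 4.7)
The plan is to use the convexity of $\mathcal{W}_p^p$ in its first argument, realised concretely through a mixture-of-couplings construction. The key observation is that $\hat{S}_N$ is a random probability measure whose mean (in the sense of expected value against any bounded measurable test function) equals $S$, since $\mathbb{E}[\hat{S}_N(A)] = \mathbb{E}[\frac{1}{N}\sum_i \mathbbm{1}\{U_i \in A\}] = S(A)$ for every Borel set $A$.

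First, for each realization $u = (u_1, \mydots, u_N)$ of $(U_1, \mydots, U_N)$, let $\gamma_u \in \Gamma(\hat{S}_N(u), T)$ denote an optimal $p$-Wasserstein coupling of $\hat{S}_N(u)$ and $T$, so that $\int c(x,y)^p\, d\gamma_u(x,y) = \mathcal{W}_p(\hat{S}_N(u), T)^p$. Second, I would define the averaged measure on $\mathcal{X} \times \mathcal{X}$ by $\gamma(A \times B) \defeq \mathbb{E}[\gamma_U(A \times B)]$ for Borel $A, B$. A direct marginal computation then yields $\gamma(A \times \mathcal{X}) = \mathbb{E}[\hat{S}_N(A)] = S(A)$ and $\gamma(\mathcal{X} \times B) = \mathbb{E}[T(B)] = T(B)$, so $\gamma \in \Gamma(S, T)$. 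Finally, by the definition of $\mathcal{W}_p$ and Fubini,
\begin{equation}
\mathcal{W}_p(S, T)^p \leq \int c(x,y)^p\, d\gamma(x,y) = \mathbb{E}\Big[\int c(x,y)^p\, d\gamma_U(x,y)\Big] = \mathbb{E}[\mathcal{W}_p(\hat{S}_N, T)^p],
\end{equation}
which is the desired bound. Finiteness of the right-hand side (needed for Fubini) follows because $\mathcal{W}_p(\hat{S}_N, T)^p$ is bounded above by the independent-coupling cost $\frac{1}{N}\sum_i \mathbb{E}[c(U_i, V)^p]$ for $V \sim T$ independent, which is finite by the assumed finiteness of $p$-th moments.

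The main subtlety, and the only non-routine point, is measurability: for the averaging step and Fubini to be legitimate, one needs $u \mapsto \gamma_u$ to be measurable as a map into the space of probability measures on $\mathcal{X} \times \mathcal{X}$ equipped with the weak topology. This is a standard measurable-selection issue for optimal transport on Polish spaces; I would invoke the classical result (e.g., Villani, \emph{Optimal Transport}, Corollary 5.22 or Chapter 4) guaranteeing that an optimal transference plan can be chosen in a measurable way as the marginals vary measurably. Since $u \mapsto \hat{S}_N(u)$ is continuous (hence measurable) into the space of probability measures, the required measurable selection exists, and the argument above goes through without modification.
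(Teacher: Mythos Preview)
Your proof is correct and follows essentially the same approach as the paper: both construct a coupling of $S$ and $T$ by averaging, over the randomness of $(U_1,\ldots,U_N)$, an optimal coupling of $\hat{S}_N$ and $T$. The paper phrases this via random variables $(U,V)$ with $(U,V)\mid U_1,\ldots,U_N$ optimal, whereas you work directly with the averaged measure $\gamma = \mathbb{E}[\gamma_U]$ and additionally make explicit the measurable-selection and Fubini justifications that the paper leaves implicit.
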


\begin{proof}
Our proof follows a coupling construction. Define random variables $V \sim T$ and $U_i \sim S$ 
for $i=1, \mydots, n$ such that $V$ and $(U_1, \mydots, U_n)$ are independent. 
Then $V | U_1, \mydots U_n \sim V \sim T$ by independence.
Let $\hat{S}_n$ denote the empirical distribution of $(U_1, \mydots, U_n)$. 
Define a random variable $U$ such that 
$U | U_1, \mydots U_n \sim \hat{S}_n$ and $(U,V)| U_1, \mydots U_n$ is a Wasserstein optimal coupling of $\hat{S}_n$ and $T$. Note that unconditionally $V \sim T$ and $U \sim S$ as $U_i \sim S$ for all $i=1, \mydots, n$. Therefore $(U,V)$ is a coupling of $S$ and $T$. We obtain, 
\begin{flalign}
\mathcal{W}_p(S,T)^p &\leq \mathbb{E}[c(U, V)^p] \text{ by the coupling representation of
Wasserstein distance} \\
&= \mathbb{E}[\mathbb{E}[c(U, V)^p| U_1, \mydots U_n]] \\
&= \mathbb{E}[\mathcal{W}_p(\hat{S}_n,T)^p].
\end{flalign}
\end{proof}

\begin{proof}[Proof of Proposition \ref{prop:empirical_wasserstein}:
Empirical Wasserstein distance bounds]
\item \paragraph{Upper bound.} 
Let $\hat{P}_n$ and $\hat{Q}_n$ denote the empirical distributions 
of the samples $(X_1, \mydots, X_n)$ and $(Y_1, \mydots, Y_n)$ respectively, where 
$X_i \sim P$, $Y_i \sim Q$ for all $i=1, \mydots, n$, and 
$(X_1, \mydots, X_n)$ and $(Y_1, \mydots, Y_n)$ are independent. By 
Lemma \ref{lemma:empirical_wass} with $S=P$, $U_i = X_i$ and $T=Q$, 
$$\mathcal{W}_p(P,Q)^p \leq \mathbb{E}[\mathcal{W}_p(\hat{P}_n,Q)^p].$$ 

As $(X_1, \mydots, X_n)$ and $(Y_1, \mydots, Y_n)$ are independent, 
$Y_i | (X_1, \mydots, X_n) \sim Y_i \sim Q$ for all $i=1, \mydots, n$. 
We can therefore apply Lemma \ref{lemma:empirical_wass} conditional on
$(X_1, \mydots, X_n)$ now with $S=Q$, $U_i = Y_i$ and $T=\hat{P}_n$ to
obtain 
$$ \mathcal{W}_p(\hat{P}_n,Q)^p \leq 
\mathbb{E}[\mathcal{W}_p(\hat{P}_n,\hat{Q}_n)^p | X_1, \mydots, X_n ]$$
almost surely for all $X_1, \mydots, X_n$. Overall, this gives 
$$ \mathcal{W}_p(P,Q)^p \leq \mathbb{E}[\mathcal{W}_p(\hat{P}_n,Q)^p] \leq 
\mathbb{E}[\mathbb{E}[\mathcal{W}_p(\hat{P}_n,\hat{Q}_n)^p | X_1, \mydots, X_n ]] 
= \mathbb{E}[\mathcal{W}_p(\hat{P}_n,\hat{Q}_n)^p]$$
as required.
\item \paragraph{Lower bound.} 
Let $\hat{P}_n$ and $\hat{Q}_n$ denote empirical distributions of the samples
$(X_1, \mydots, X_n)$ and $(Y_1, \mydots, Y_n)$ respectively, where $X_i \sim P$, 
$Y_i \sim Q$ for all $i=1, \mydots, n$. Given $(X_1, \mydots, X_n)$ and $(Y_1, \mydots, Y_n)$,
by the triangle inequality we obtain
\begin{equation}
\mathcal{W}_p(\hat{P}_n, \hat{Q}_n) \leq 
\mathcal{W}_p(\hat{P}_n, P) + \mathcal{W}_p(P, Q) + \mathcal{W}_p(Q, \hat{Q}_n).
\end{equation}
By Minkowski's inequality, this gives 
\begin{flalign}
\mathbb{E}[\mathcal{W}_p(\hat{P}_n, \hat{Q}_n)^p]^{1/p}
&\leq 
\mathbb{E} \Big[\Big( \mathcal{W}_p(\hat{P}_n, P) + \mathcal{W}_p(P, Q) + \mathcal{W}_p(Q, \hat{Q}_n) \Big)^p \Big]^{1/p} \\
&\leq \mathbb{E}[\mathcal{W}_p(\hat{P}_n, P)^p]^{1/p} + 
\mathbb{E}[\mathcal{W}_p(P, Q)^p]^{1/p} + 
\mathbb{E}[\mathcal{W}_p(Q, \hat{Q}_n)^p]^{1/p} \\
&= \mathbb{E}[\mathcal{W}_p(\hat{P}_n, P)^p]^{1/p} + 
\mathcal{W}_p(P, Q) + \mathbb{E}[\mathcal{W}_p(Q, \hat{Q}_n)^p]^{1/p} \label{eq:wass_midstep1}
\end{flalign}

Let $\tilde{P}_n$ denote empirical distributions of the samples
$(\tilde{X}_1, \mydots, \tilde{X}_n)$, where $\tilde{X}_i \sim P$ for all $i=1, \mydots, n$
and $(\tilde{X}_1, \mydots, \tilde{X}_n)$ and $({X}_1, \mydots, {X}_n)$ are independent. 
Independence implies $\tilde{X}_i | ({X}_1, \mydots, {X}_n) \sim \tilde{X}_i \sim P$ for all $i=1, \mydots, n$.
We can therefore apply Lemma \ref{lemma:empirical_wass} conditional on $({X}_1, \mydots, {X}_n)$,
with $S=P$, $T=\hat{P}_n$ and $\tilde{X}_i=U_i$ to obtain 
\begin{equation} \label{eq:empirical_wass_bound1}
\mathcal{W}_p(\hat{P}_n, P)^p \leq \mathbb{E}[\mathcal{W}_p(\hat{P}_n, \tilde{P}_n)^p | {X}_1, \mydots, {X}_n].
\end{equation}
Similarly, 
\begin{equation} \label{eq:empirical_wass_bound2}
\mathcal{W}_p(Q, \hat{Q}_n)^p \leq \mathbb{E}[\mathcal{W}_p(\tilde{Q}_n, \hat{Q}_n)^p | {Y}_1, \mydots, {Y}_n]
\end{equation}
where $\tilde{Q}_n$ denotes empirical distributions of the samples
$(\tilde{Y}_1, \mydots, \tilde{Y}_n)$, where $\tilde{Y}_i \sim Q$ for all $i=1, \mydots, n$
and $(\tilde{Y}_1, \mydots, \tilde{Y}_n)$ and $({Y}_1, \mydots, {Y}_n)$ are independent. 
By \eqref{eq:wass_midstep1}, we obtain 
\begin{flalign}
\mathbb{E}[\mathcal{W}_p(\hat{P}_n, \hat{Q}_n)^p]^{1/p} \leq &
\mathbb{E}[\mathcal{W}_p(\hat{P}_n, P)^p]^{1/p} + 
\mathcal{W}_p(P, Q) + \mathbb{E}[\mathcal{W}_p(Q, \hat{Q}_n)^p]^{1/p} \\
= & \mathbb{E}[ \mathbb{E}[ \mathcal{W}_p(\hat{P}_n, P)^p | {X}_1, \mydots, {X}_n ]  ]^{1/p} + 
\mathcal{W}_p(P, Q) + \\
&\mathbb{E}[ \mathbb{E}[ \mathcal{W}_p(Q, \hat{Q}_n)^p| {Y}_1, \mydots, {Y}_n ] ]^{1/p} \\
\leq & \mathbb{E}[ \mathbb{E}[ \mathcal{W}_p(\hat{P}_n, \tilde{P}_n)^p | {X}_1, \mydots, {X}_n ]  ]^{1/p} + 
\mathcal{W}_p(P, Q) + \\
&\mathbb{E}[ \mathbb{E}[ \mathcal{W}_p(\tilde{Q}_n, \hat{Q}_n)^p| {Y}_1, \mydots, {Y}_n ] ]^{1/p} \\
= & \mathbb{E}[ \mathcal{W}_p(\hat{P}_n, {P}_n)^p ]^{1/p} + 
\mathcal{W}_p(P, Q) + \mathbb{E}[ \mathcal{W}_p(\tilde{Q}_n, \hat{Q}_n)^p ]^{1/p}
\end{flalign}
as required. 
\item \paragraph{Consistency.} 
By triangle inequality, 
\begin{equation}
    | \mathcal{W}_p(\hat{P}_n, \hat{Q}_n) - \mathcal{W}_p(P, Q) | \leq \mathcal{W}_p(\hat{P}_n, P) + \mathcal{W}_p(Q, \hat{Q}_n).
\end{equation}
Note that $P$, $Q$, $(\hat{P}_n)_{n \geq 0}$ and $(\hat{Q}_n)_{n \geq 0}$ 
all have finite moments of order $p$, and that $\hat{P}_n \Rightarrow P$ and $\hat{Q}_n \Rightarrow Q$
almost surely by the Glivenko–Cantelli theorem, 
where the empirical distribution moments of order $p$ also converge weakly. 
By completeness of the $p$-Wasserstein distance on the space of probability
measures with finite moments of order $p$ \citep[][Theorem 6.9]{villani2008optimal},
$\mathcal{W}_p(\hat{P}_n, P) \toas 0$ and $\mathcal{W}_p(Q, \hat{Q}_n) \toas 0$ 
as $n \rightarrow \infty$.
\end{proof}

\subsection{Proofs for comparison with the approach of Dobson et al.} \label{appendices:dobson}
To prove Proposition \ref{prop:dobson_bound}, we first outline the setup of 
\citet{dobson2021usingSIAM}.  Consider a continuous time diffusion with a 
unique stationary distribution $P$ on $\mathbb{R}^d$.  
Let $K_1$ and $K_2$ denote the Markov chain
transition kernels corresponding to a discretization of this diffusion with and without an 
accept-reject bias correction step respectively.  For example, $K_1$ and $K_2$ 
can be the (single or multiple step) transition kernels of an MALA and an ULA Markov chain respectively. 
Suppose the marginal Markov chains with kernels $K_1$ and $K_2$ converge
in distribution to the unique invariant distributions $P$ and $Q$ respectively. 

For some small $\epsilon>0$, suppose there is a compact subset
$\Omega$ of $\mathbb{R}^d$ such that $P(\Omega^c)<\epsilon$ and 
$Q(\Omega^c)<\epsilon$.
For the capped metric $c(x,y) = \min \{1, \|x-y\|_2 \}$ on $\mathbb{R}^d$,
suppose there exists a Markovian coupling $\Gamma_1$ of the kernel $K_1$ such that 
for some constant $\alpha_{\Omega} \in (0,1)$ and all $X_t, X'_t \in \Omega$,
$\mathbb{E}[ c(X_{t+1}, X'_{t+1}) | X_t, X'_t ] \leq \alpha_{\Omega} c(X_t, X'_t)$ for
$(X_{t+1}, X'_{t+1}) | (X_t, X'_t) \sim \Gamma_2(X_t, X'_t)$. Under such assumptions, 
\citet{dobson2021usingSIAM} show
\begin{equation} \label{ref:dobson_ub}
\mathcal{W}_1(P,Q) \leq \frac{\mathbb{E}[\mathbb{E}[c(X_1,Y_1) | Y^*]] + 
2\epsilon}{1 - \alpha_{\Omega}}
\end{equation}
where $Y^* \sim Q$ and $(X_1, Y_1)|Y^* \sim \Gamma_{\Delta}(Y^*)$ for some fixed coupling 
$\Gamma_{\Delta}(Y^*)$ such that $X_1|Y^* \sim K_1(Y^*, \cdot)$ and $Y_1|Y^* \sim K_2(Y^*, \cdot)$ marginally.
\citet{dobson2021usingSIAM} then estimate the quantities $\mathbb{E}[\mathbb{E}[c(X_1,Y_1) | X^*]]$
and $\alpha_{\Omega}$ separately using couplings to obtain a final upper bound estimate.

Given this setup,  we can show that our upper bound estimator 
$\cub[1]$ \eqref{eq:W_ublimit1} constructed using such
couplings $\Gamma_1$ and $\Gamma_{\Delta}$
has a smaller expected value than the upper bound of \eqref{ref:dobson_ub}.  

\begin{proof}[Proof of Proposition \ref{prop:dobson_bound}:
\cubname lower bounds Dobson et al.]
We proceed as in the proofs of Theorem \ref{thm:Wp_UB_single_step} and
Corollary \ref{cor:Wp_UB_single_step}.  Consider the coupling based estimator in 
\eqref{eq:W_ublimit1} for the $1$-Wasserstein distance with metric 
$c(x,y) = \min \{1, \|x-y\|_2 \}$ on $\mathbb{R}^d$.
Under the coupled kernel $\bar{K}$ from Algorithm \ref{algo:coupled_kernel_single_step}, 
for each $t \geq 1$ we have the coupling $(X_{t},Z_{t},Y_{t})$
where $(X_{t},Z_{t}) | X_{t-1}, Y_{t-1} \sim \Gamma_1(X_{t-1}, Y_{t-1})$ and
$(Z_{t}, Y_{t}) | X_{t-1}, Y_{t-1} \sim \Gamma_\Delta (Y_{t-1})$. This gives
\begin{flalign}
\mathbb{E} [c(X_{t}, Y_{t})] \leq & \mathbb{E} [ c(X_{t}, Z_{t})] + \mathbb{E} [c(Z_{t}, Y_{t})] \label{eq:dobson_apply1} \\ 
= & \mathbb{E} [c(X_{t}, Z_{t}) \mathrm{I}_{\{ X_{t-1} \in \Omega, Y_{t-1} \in \Omega \}^c} ] + 
\mathbb{E} [c(X_{t}, Z_{t}) \mathrm{I}_{\{ X_{t-1} \in \Omega, Y_{t-1} \in \Omega \}} ] +
\mathbb{E} [ c(Z_{t}, Y_{t})] \\
\leq & \mathbb{P}( \{ X_{t-1} \in \Omega^c \} \cup \{ Y_{t-1} \in \Omega^c \}) + 
\mathbb{E} [c(X_{t}, Z_{t}) \mathrm{I}_{\{ X_{t-1} \in \Omega, Y_{t-1} \in \Omega \}} ] + \\
&\mathbb{E} [c(Z_{t}, Y_{t})] \label{eq:dobson_apply2} \\
\leq & \mathbb{P}(X_{t-1} \in \Omega^c) + \mathbb{P}( Y_{t-1} \in \Omega^c ) + 
\alpha_\Omega \mathbb{E} [c(X_{t-1}, Y_{t-1}) ] + \mathbb{E} [c(Z_{t}, Y_{t})] \label{eq:dobson_apply3} 
\end{flalign}
where \eqref{eq:dobson_apply1} follows by the triangle inequality,
\eqref{eq:dobson_apply2} follows as $c$ is bounded by $1$, and
\eqref{eq:dobson_apply3} follows by the union bound and the definition of 
$\alpha_{\Omega}$. Denote $\Delta(z) \defeq \mathbb{E}[c(X,Y)^p|z]$ for $(X,Y) \sim \Gamma_\Delta(z)$,
such that $\mathbb{E} [c(Z_{t}, Y_{t})] = \mathbb{E} [ \mathbb{E} [c(Z_{t}, Y_{t})|Y_{t-1}]] = \mathbb{E}[\Delta(Y_{t-1})] $. 
Then by induction, \eqref{eq:dobson_apply3} implies
\begin{flalign}
\mathbb{E} [ c(X_{t}, Y_{t}) ]
&\leq \alpha_\Omega^t \mathbb{E} [ c(X_0, Y_0) ] + 
\sum_{i=1}^t \alpha_\Omega^{t-i} \Big(  \mathbb{P}(X_{t-1} \in \Omega^c) + \mathbb{P}( Y_{t-1} \in \Omega^c ) + \mathbb{E} [ \Delta(Y_{i-1}) ]\Big).
\end{flalign}
As $X_{t-1}$ and $Y_{t-1}$ converges to $P$ and $Q$ respectively in distribution, 
$ \mathbb{P}(X_{t-1} \in \Omega^c) \overset{t \rightarrow \infty}{\rightarrow} P(\Omega^c) < \epsilon $,
$ \mathbb{P}(Y_{t-1} \in \Omega^c) \overset{t \rightarrow \infty}{\rightarrow} Q(\Omega^c)  < \epsilon $ and 
$ \mathbb{E} [ \Delta(Y_{t}) ]  \overset{t \rightarrow \infty}{\rightarrow} \mathbb{E} [ \Delta(Y^*) ]$
for $Y^* \sim Q$.  
Following the argument in Corollary \ref{cor:Wp_UB_single_step} we obtain that for all
$\epsilon'>0$, there exists some $S \geq 1$ such that for all $t \geq S$, 
\begin{flalign}
\mathbb{E} [ c(X_{t}, Y_{t}) ]
&\leq \alpha_\Omega^t \mathbb{E} [ c(X_0, Y_0) ] + 
\sum_{i=1}^t \alpha_\Omega^{t-i} \Big( \mathbb{E} [ \Delta(Y^*)] + 2 \epsilon \Big) + \epsilon'.
\end{flalign}
Therefore as $\alpha_\Omega \in (0,1)$,  
$\liminf_{t \rightarrow \infty} \mathbb{E} [ c(X_{t}, Y_{t}) ] \leq \frac{\mathbb{E}[\Delta(Y^*)]+2\epsilon}{1 - \alpha_\Omega}$
where $ \Delta(Y^*)=\mathbb{E}[c(X_1,Y_1) | Y^*]$  and $Y^* \sim Q$ from \eqref{ref:dobson_ub}
as required.
\end{proof}

\section{Example applications of theoretical results} \label{appendices:theory_examples}
In this section we consider the theoretical results of Section \ref{subsection:theory} applied to 
three simple examples, working with the metric $c(x,y) = \| x - y \|_2$.  
\paragraph{MALA and ULA.} Consider a MALA chain and an ULA chain with a 
common step size $\sigma$ both targeting a distribution $P$. 
Assume the negative log density of $P$ is gradient Lipschitz and strongly convex. 
In this setting, let $(X_t, Y_t)_{t \geq 0}$ be a CRN coupling of ULA and MALA simulated using 
Algorithm \ref{algo:coupled_chain_general}, such that the Markov chains $(X_t)_{t \geq 0}$ and 
$(Y_t)_{t \geq 0}$ marginally correspond to ULA and MALA respectively. 
For $\sigma$ sufficiently small, the marginal ULA chain $(X_t)_{t \geq 0}$
converges to some distribution $P_\sigma$ and
satisfies Assumption \ref{asm:gammaP_uniform_contract} for $p=2$
under a CRN coupling \cite[Proposition 3]{durmus2019highBERNOULLI},
giving a contraction rate $\rho$ such that $1-\rho = C \sigma^2/2$ 
for some constant $C$ which depends on the
gradient Lipschitz constant and convexity of the negative log density of $P$ rather 
than depending explicitly on the dimension of the state space. 
By Corollary \ref{cor:Wp_UB_single_step},
\begin{equation} \label{eq:mala_ula_bound}
\mathcal{W}_2(P_\sigma,P) \leq 
\liminf_{t \rightarrow \infty} \mathbb{E}[ \cub[2,t]^2 ]^{1/2} \leq 
\frac{\mathbb{E} \big[ \| Y - Y' \|^2 \big( 1-\alpha_{\sigma} \big(Y, Y' \big) \big) \big]^{1/2}}{C \sigma^2/2},
\end{equation}
where $Y \sim P$ is the limiting distribution of the MALA chain, 
$Y' | Y \sim \mathcal{N}( Y + \frac{\sigma^2}{2} \nabla \log P (Y), \sigma^2 I_d)$
corresponds to the Euler–Maruyama discretization based proposal, 
and $\alpha_{\sigma} \big(Y, Y' \big) \in [0,1]$ is the Metropolis--Hastings acceptance probability. 
As the step size $\sigma$ tends to zero, the upper bound in \eqref{eq:mala_ula_bound} 
require further analysis of the MALA acceptance probabilities 
\citep{bou-rabeenonasymptotic2012IMAJNA, eberle2014errorAoAP} and could degenerate. 
Recently, discrete sticky couplings \citep{durmus2021discrete} have been developed 
for perturbed functional autoregressive processes, which produce stable upper
bounds on total variation and the Wasserstein distance in such limiting regimes.

\paragraph{ULA and ULA.} We can similarly consider two ULA chains 
with a common step size $\sigma$ targeting different distributions $P$ and $Q$. 
As above, assume both $\log P$ and $\log Q$ are gradient Lipschitz and strongly convex. 
In this setting, let $(X_t, Y_t)_{t \geq 0}$ be a CRN coupling of two ULA chains simulated using 
Algorithm \ref{algo:coupled_chain_general}, such that the Markov chains $(X_t)_{t \geq 0}$ and 
$(Y_t)_{t \geq 0}$ marginally correspond to ULA targeting distributions $P$ and $Q$ respectively. 
For $\sigma$ sufficiently small, the marginal chains $(X_t)_{t \geq 0}$ and $(Y_t)_{t \geq 0}$ 
converge to some distributions $P_\sigma$ and $Q_\sigma$ respectively. Both marginal chains 
also satisfy Assumption \ref{asm:gammaP_uniform_contract} for $p=2$ under a CRN coupling, with 
contraction rates $\rho_P$ and $\rho_Q$ such that $1- \rho_{P} = C_P \sigma^2/2$ and 
$1- \rho_{Q} = C_Q \sigma^2/2$ respectively for
some constants $C_P$ and $C_Q$ that do not explicitly depend on the dimension. 
By Corollary \ref{cor:Wp_UB_single_step}, this gives 
\begin{flalign} \label{eq:ula_ula_bound}
\mathcal{W}_2(P_\sigma, Q_\sigma) \leq 
\liminf_{t \rightarrow \infty} \mathbb{E}[ \cub[2](P_{t},Q_{t})^2 ]^{1/2} \leq 
\frac{\mathbb{E} \big[ \| \nabla \log P(Y_\sigma) - \nabla \log Q(Y_\sigma) \|^2 \big]^{1/2}}{C_P}
\end{flalign}
where $Y \sim Q_\sigma$. By symmetry, we can obtain a similar bound in terms of some random variable 
$X \sim P_\sigma$ and $C_Q$. 
As $\sigma$ approaches  zero, the numerator in \eqref{eq:ula_ula_bound} approaches the
square root of the Fisher divergence between distributions $Q$ and $P$, given by
$F(Q,P)\defeq\mathbb{E} [ \| \nabla \log P(Y) - \nabla \log Q(Y) \|^2 ]$ for $Y \sim Q$. Such link
between the Fisher divergence and the Wasserstein distance has been noted previously by considering
continuous-time Langevin diffusions (e.g., \citet{huggins2019scableAISTATS}). Finally, note that the
upper bound in \eqref{eq:ula_ula_bound} does not explicitly depend on dimension, 
highlighting that estimators based on our coupled chains may give upper bounds that remain 
informative in high dimensions.

\paragraph{ULA and SGLD.} 
Consider an ULA chain and a Stochastic gradient Langevin dynamics (SGLD) \citep{welling2011bayesianICML}
chain with a common step size $\sigma$ and both targeting a distribution $P$. The
SGLD chain is based on unbiased estimates of the gradient of the log density of $P$, such that 
$\widehat{\nabla \log P}_{SGLD}(z) = \nabla \log P(z) + e_{SGLD}(z)$ for all $z \in \mathcal{X}$, 
where $e_{SGLD}(z)$ is mean zero error. 
We assume this error is bounded such that 
$\delta^2 \defeq \sup_{z \in \mathcal{X}}  e_{SGLD}(z)/(1+V(z)^2) < \infty $, 
for some $2^{nd}$-order Lyapunov function $V$ as in Proposition \ref{prop:Wp_UB_single_step_lyapunov}
and that the negative log density of $P$ is gradient Lipschitz and strongly convex. 
In this setting, let $(X_t, Y_t)_{t \geq 0}$ be a CRN coupling of ULA and SGLD simulated using 
Algorithm \ref{algo:coupled_chain_general}, such that the Markov chains $(X_t)_{t \geq 0}$ and 
$(Y_t)_{t \geq 0}$ marginally correspond to ULA and SGLD with marginal distributions 
$(P^{(ULA)}_t)_{t \geq 0}$ and $(P^{(SGLD)}_t)_{t \geq 0}$ respectively. 
For $\sigma$ sufficiently small, the marginal ULA chain $(X_t)_{t \geq 0}$ 
satisfies Assumption \ref{asm:gammaP_uniform_contract} for $p=2$
under a CRN coupling, giving a contraction rate $\rho$ such that $1-\rho = C \sigma^2/2$ for
constants $C$ that does not explicitly depend on the dimension. Then by 
Proposition \ref{prop:Wp_UB_single_step_lyapunov}, 
\begin{flalign} \label{eq:ula_sgld_bound}
\limsup_{t \rightarrow \infty} \mathcal{W}_2 \big( P^{(ULA)}_t, P^{(SGLD)}_t \big) \leq 
\liminf_{t \rightarrow \infty} \mathbb{E} \Big[ \cub[2] \big(P^{(ULA)}_{t},Q^{(ULA)}_{t} \big)^2 \Big]^{1/2} 
\leq \frac{\delta \kappa}{C}.
\end{flalign}
Note that the upper bound in \eqref{eq:ula_sgld_bound} does not explicitly depend on dimension,
and approaches zero as $\delta$ approaches zero.
This shows that estimators based on our coupled chains give upper bounds which 
may remain informative in high dimensions and are tight with respect to
the error from the stochastic gradients.
This example also highlights the stability of our upper bounds even when one of the marginal chains
(SGLD) may not converge to a limiting distribution. 

\section{Multi-step couplings} \label{appendices:multi_step}
In this section, we consider coupling algorithms for multi-step kernels 
and investigate their theoretical properties.

\subsection{Coupling algorithms for multi-step kernels} \label{subsection:multi_step_algo}
Consider the $L$-step Markov chains $(X_{Lt})_{t \geq 0}$ and $(Y_{Lt})_{t \geq 0}$
for $L \geq 1$, corresponding to marginal multi-step Markov kernels $K^L_P$ and $K^L_Q$ respectively. 
Following \eqref{eq:joint_kernel} and Section \ref{subsection:algo_k_bar}, 
we now construct a kernel $\bar{K}_{L-step}$ on the joint space 
$\mathcal{X} \times \mathcal{X}$ such that for all $x,y \in \mathcal{X}$ and all $A \in \mathcal{B}(\mathcal{X})$,
\begin{equation} \label{eq:joint_kernel_multi_step}
\bar{K}_{L-step}\big( (x, y), (A, \mathcal{X}) \big) = K^L_P(x, A) \text{ and } \bar{K}_{L-step}\big( (x, y), (\mathcal{X}, A) \big) = K^L_Q(y, A).
\end{equation}

Given coupled kernels $\Gamma_1$ and $\Gamma_\Delta$, 
Figure \ref{fig:multiple_step_coupling} illustrates how to sample from the joint kernel 
$\bar{K}_{L-step}$. By construction, this gives the marginal distributions 
$X_{s} | X_{0}, Y_{0} \sim K^{s}_P(X_{0}, \cdot)$ and
$Y_{s} | X_{0}, Y_{0} \sim K^{s}_Q(Y_{0}, \cdot)$ for all $s=1,\mydots,L$, such that Equation 
\eqref{eq:joint_kernel_multi_step} is satisfied. 
Algorithm \ref{algo:coupled_kernel_multi_step} samples from this coupled kernel $\bar{K}_{L-step}$. 
It characterizes the dependency between $X_{Lt}$ and $Y_{Lt}$ such that 
\begin{flalign}
X_{Lt} | X_{L(t-1)}, Y_{L(t-1)} &\sim K^L_P(X_{L(t-1)}, \cdot) \\ 
Z^{(j)}_{L} | Y_{L(t-1)+(j-1)} &\sim K^{L-(j-1)}_P( Y_{L(t-1)+(j-1)}, \cdot) \\ 
Y_{Lt} | X_{L(t-1)}, Y_{L(t-1)} &\sim K^L_Q(Y_{L(t-1)}, \cdot)
\end{flalign}
for $s=1,\mydots,L-1$. When $L=1$, we obtain $\bar{K}_{L-step} = \bar{K}$ from 
Algorithm \ref{algo:coupled_kernel_single_step}. 
Note that $\bar{K}_{1-step}$ is the single-step kernel $\bar{K}$ from Algorithm \ref{algo:coupled_kernel_single_step}, 
but $\bar{K}_{L-step}$ and $\bar{K}^L$ are not equivalent in general. 

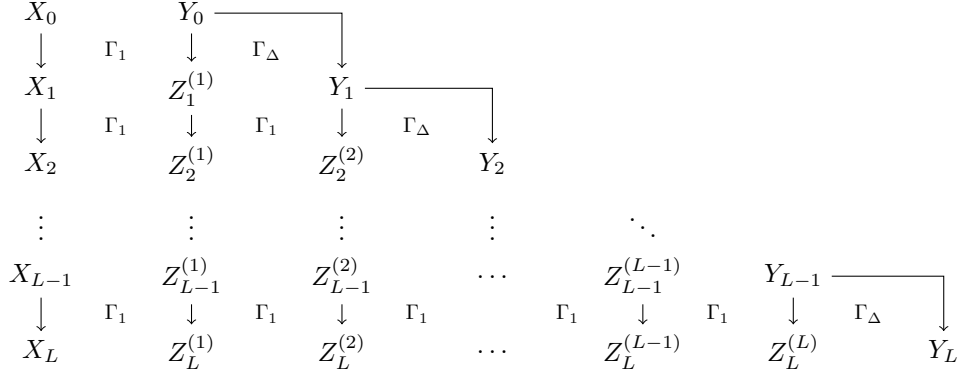
\begin{figure}
\centering
\begin{tikzpicture}
\node (X0) at (0, 0) {$X_{0}$};
\node (Y0) at (2,0) {$Y_{0}$};

\node (X1) at (0,-1) {$X_{1}$};
\node (Z11) at (2, -1) {$Z_{1}^{(1)}$};
\node (Y1) at (4, -1) {$Y_{1}$};

\node (X2) at (0,-2) {$X_{2}$};
\node (Z21) at (2, -2) {$Z_{2}^{(1)}$};
\node (Z22) at (4, -2) {$Z_{2}^{(2)}$};
\node (Y2) at (6, -2) {$Y_{2}$};

\node (Xdot) at (0,-2.75) {$\vdots$};
\node (Zdot1) at (2, -2.75) {$\vdots$};
\node (Zdot2) at (4, -2.75) {$\vdots$};
\node (Zdot3) at (6, -2.75) {$\vdots$};
\node (Ydot) at (8, -2.75) {$\ddots$};

\node (XL_minus_one) at (0,-3.5) {$X_{L-1}$};
\node (ZL_minus_one1) at (2, -3.5) {$Z_{L-1}^{(1)}$};
\node (ZL_minus_one2) at (4, -3.5) {$Z_{L-1}^{(2)}$};
\node (ZL_minus_onedot) at (6, -3.5) {$\ldots$};
\node (ZL_minus_oneL_minus_one) at (8, -3.5) {$Z_{L-1}^{(L-1)}$};
\node (YL_minus_one) at (10, -3.5) {$Y_{L-1}$};

\node (XL) at (0,-4.5) {$X_{L}$};
\node (ZL1) at (2, -4.5) {$Z_{L}^{(1)}$};
\node (ZL2) at (4, -4.5) {$Z_{L}^{(2)}$};
\node (ZLdot) at (6, -4.5) {$\ldots$};
\node (ZLL_minus_one) at (8, -4.5) {$Z_{L}^{(L-1)}$};
\node (ZLL) at (10, -4.5) {$Z_{L}^{(L)}$};
\node (YL) at (12, -4.5) {$Y_{L}$};

\draw[->]
  (X0) edge (X1) (Y0) edge (Z11);
\draw[->, to path={-| (\tikztotarget)}]
 (Y0) edge (Y1);
\node[scale=0.8] at (1, -0.5) {$\Gamma_1$};
\node[scale=0.8] at (3, -0.5) {$\Gamma_\Delta$};

\draw[->]
  (X1) edge (X2) (Z11) edge (Z21) (Y1) edge (Z22);
\draw[->, to path={-| (\tikztotarget)}]
 (Y1) edge (Y2);
\node[scale=0.8] at (1, -1.5) {$\Gamma_1$};
\node[scale=0.8] at (3, -1.5) {$\Gamma_1$};
\node[scale=0.8] at (5, -1.5) {$\Gamma_\Delta$};

\draw[->]
  (XL_minus_one) edge (XL) (ZL_minus_one1) edge (ZL1) 
  (ZL_minus_one2) edge (ZL2) (ZL_minus_oneL_minus_one) edge (ZLL_minus_one)
  (YL_minus_one) edge (ZLL);
\draw[->, to path={-| (\tikztotarget)}]
 (YL_minus_one) edge (YL);
\node[scale=0.8] at (1, -4) {$\Gamma_1$};
\node[scale=0.8] at (3, -4) {$\Gamma_1$};
\node[scale=0.8] at (5, -4) {$\Gamma_1$};
\node[scale=0.8] at (7, -4) {$\Gamma_1$};
\node[scale=0.8] at (9, -4) {$\Gamma_1$};
\node[scale=0.8] at (11, -4) {$\Gamma_\Delta$};
\end{tikzpicture}
\caption{Joint kernel $\bar{K}_{L-step}$ on $\mathcal{X} \times \mathcal{X}$, which couples the marginal kernels $K^L_P$ and $K^L_Q$}
\label{fig:multiple_step_coupling}
\end{figure}

\begin{algorithm}
\DontPrintSemicolon
\KwIn{chain states $X_{0}$ and $Y_{0}$, kernels $K_1$ and $K_2$, coupled kernels $\Gamma_1$ and $\Gamma_\Delta$} 
\For{s=1,\mydots,L}{
Sample 
\begin{equation}
(X_{s},Z^{(1)}_s,\mydots,Z^{(s)}_{s},Y_{s}) | (X_{s-1},Z^{(1)}_{s-1},\mydots,Z^{(s-1)}_{s-1},Y_{s-1}) 
\end{equation}
jointly such that
\begin{flalign}
(X_{s},Z^{(1)}_s) &\sim \Gamma_1(X_{s-1},Z^{(1)}_{s-1}) \label{eq:multi_step_algo_1} \\
(Z^{(j)}_s, Z^{(j+1)}_s) &\sim \Gamma_1(Z^{(j)}_{s-1}, Z^{(j+1)}_{s-1}) \text{ for } j=1,\mydots,s-1 \label{eq:multi_step_algo_2} \\
(Z^{(s)}_{s},Y_{s}) &\sim \Gamma_\Delta(Y_{s-1}) \label{eq:multi_step_algo_3} 
\end{flalign}
}
\Return $(X_{L(t-1)+s}, Y_{L(t-1)+s})$ for $s=1,\mydots,L$.
\caption{Joint kernel $\bar{K}_{L-step}$ on $\mathcal{X} \times \mathcal{X}$, which couples the marginal kernels $K^L_P$ and $K^L_Q$}
\label{algo:coupled_kernel_multi_step}
\end{algorithm}

We give concrete implementations of Algorithm \ref{algo:coupled_kernel_multi_step} for the ULA
and MALA Markov chain with common random numbers and reflection couplings. These are based on 
common random numbers and reflection couplings for the single-step coupling kernels
included in Appendices \ref{appendices:algos}. 

\paragraph{ULA with common random numbers coupling.}  For each $s=1,\ldots,L$ in Algorithm \ref{algo:coupled_kernel_multi_step}, sample $\epsilon_{s} \sim \mathcal{N}(0,I_d)$ and
\begin{itemize}
    \item Sample $(X_{s},Z^{(1)}_s) \sim \Gamma_1(X_{s-1},Z^{(1)}_{s-1})$ in \eqref{eq:multi_step_algo_1}
        such that 
        $X_s = X_{s-1} + \frac{1}{2} \sigma_P^2 \nabla \log p(X_{s-1}) + \sigma_P \epsilon_{s}$ and 
        $Z^{(1)}_s = Z^{(1)}_{s-1} + \frac{1}{2} \sigma_P^2 \nabla \log q(Z^{(1)}_{s-1}) + \sigma_P \epsilon_{s}$.
    \item Sample $(Z^{(j)}_s, Z^{(j+1)}_s) \sim \Gamma_1(Z^{(j)}_{s-1}, Z^{(j+1)}_{s-1})$ for each $j=1,\mydots,s-1$ in \eqref{eq:multi_step_algo_2} such that 
    $Z^{(j)}_s = Z^{(j)}_{s-1} + \frac{1}{2} \sigma_P^2 \nabla \log q(Z^{(j)}_{s-1}) + \sigma_P \epsilon_{s}$ and 
    $Z^{(j+1)}_s = Z^{(j+1)}_{s-1} + \frac{1}{2} \sigma_P^2 \nabla \log q(Z^{(j+1)}_{s-1}) + \sigma_P \epsilon_{s}$.
    \item Sample $(Z^{(s)}_{s},Y_{s}) \sim \Gamma_\Delta(Y_{s-1})$ in \eqref{eq:multi_step_algo_3} such that 
    $Z^{(s)}_s = Z^{(s)}_{s-1} + \frac{1}{2} \sigma_P^2 \nabla \log q(Z^{(s)}_{s-1}) + \sigma_P \epsilon_{s}$ and 
    $Y_s = Y_{s-1} + \frac{1}{2} \sigma_Q^2 \nabla \log p(Y_{s-1}) + \sigma_Q \epsilon_{s}$.
\end{itemize}

\paragraph{MALA with common random numbers coupling.} 
For $s=1,\ldots,L$ in Algorithm \ref{algo:coupled_kernel_multi_step},
sample $\epsilon_{s} \sim \mathcal{N}(0,I_d)$ and generate proposals $X^*_s,Z^{(1),*}_s,\ldots, Z^{(s),*}_s, Y^*_s$ using the steps for ULA with common random numbers coupling given above. Then 
sample $U^{(s)} \sim \Uniform ([0,1]) $ and 
accept each of these proposals if $U^{(s)}$ is less than the respective Metropolis-Hastings acceptance probabilities.

\paragraph{ULA with reflection coupling.}  
For each $s=1,\ldots,L$ in Algorithm \ref{algo:coupled_kernel_multi_step}, sample $\epsilon_{s} \sim \mathcal{N}(0,I_d)$ and 
\begin{itemize}
    \item Sample $(X_{s},Z^{(1)}_s) \sim \Gamma_1(X_{s-1},Z^{(1)}_{s-1})$ in \eqref{eq:multi_step_algo_1}
        such that 
        $X_s = X_{s-1} + \frac{1}{2} \sigma_P^2 \nabla \log p(X_{s-1}) + \sigma_P \epsilon_{s}$ and 
        $Z^{(1)}_s = Z^{(1)}_{s-1} + \frac{1}{2} \sigma_P^2 \nabla \log q(Z^{(1)}_{s-1}) + \sigma_P ( I_d - e^{(1)} e^{(1)\top} )  \epsilon_{s}$
        for $e^{(1)}=\frac{X_{s-1}-Z^{(1)}_{s-1} }{\| X_{s-1} - Z^{(1)}_{s-1} \|}$.
    \item Sample $(Z^{(j)}_s, Z^{(j+1)}_s) \sim \Gamma_1(Z^{(j)}_{s-1}, Z^{(j+1)}_{s-1})$ for each $j=1,\mydots,s-1$ in \eqref{eq:multi_step_algo_2} such that 
    $Z^{(j)}_s = Z^{(j)}_{s-1} + \frac{1}{2} \sigma_P^2 \nabla \log q(Z^{(j)}_{s-1}) + \sigma_P \epsilon_{s}$ and 
    $Z^{(j+1)}_s = Z^{(j+1)}_{s-1} + \frac{1}{2} \sigma_P^2 \nabla \log q(Z^{(j+1)}_{s-1}) + \sigma_P ( I_d - e^{(j+1)} e^{(j+1)\top} ) \epsilon_{s}$
    for $e^{(j+1)}=\frac{Z^{(j)}_{s-1}-Z^{(j+1)}_{s-1} }{\| Z^{(j)}_{s-1} - Z^{(j+1)}_{s-1} \|}$.
    \item Sample $(Z^{(s)}_{s},Y_{s}) \sim \Gamma_\Delta(Y_{s-1})$ in \eqref{eq:multi_step_algo_3} such that 
    $Z^{(s)}_s = Z^{(s)}_{s-1} + \frac{1}{2} \sigma_P^2 \nabla \log q(Z^{(s)}_{s-1}) + \sigma_P \epsilon_{s}$ and 
    $Y_s = Y_{s-1} + \frac{1}{2} \sigma_Q^2 \nabla \log p(Y_{s-1}) + \sigma_Q ( I_d - e^{(s+1)} e^{(s+1)\top} ) \epsilon_{s}$
    for $e^{(s+1)}=\frac{Z^{(s)}_{s-1}-Y_{s-1}}{\| Z^{(s)}_{s-1}-Y_{s-1} \|}$.
\end{itemize}

\paragraph{MALA with reflection coupling.} For $s=1,\ldots,L$ in Algorithm \ref{algo:coupled_kernel_multi_step}, sample sample $\epsilon_{s} \sim \mathcal{N}(0,I_d)$ and
generate proposals $X^*_s,Z^{(1),*}_s,\ldots, Z^{(s),*}_s, Y^*_s$ using the steps for ULA with reflection coupling given above. Then 
sample $U^{(s)} \sim \Uniform ([0,1]) $ and 
accept each of these proposals if $U^{(s)}$ is less than the respective Metropolis-Hastings acceptance probabilities.

\paragraph{} 
Having developed algorithms to sample from the coupled kernels $\bar{K}$
and $\bar{K}_{L-step}$, we now investigate theoretical properties our upper bounds. 

\subsection{Theoretical properties of couplings of multi-step kernels} \label{subsection:multi_step_theory}
To establish theoretical guarantees of coupled Markov chains based on the coupled kernel $\bar{K}_{L-step}$,  
we assume the Markovian coupling $\Gamma_1$ in Algorithm \ref{algo:coupled_kernel_multi_step} 
satisfies a geometric ergodicity condition. 

\begin{Asm} \label{asm:gammaP_geo_erg} 
There exists constants $C \in [1,\infty)$ and $\rho \in (0,1)$ such that for all $L \geq 1$,
\begin{equation} \label{eq:gammaP_geo_erg}
\mathbb{E}[ c(X_{t+L}, Y_{t+L})^p | X_t, Y_t ]^{1/p} \leq C \rho^L c(X_t, Y_t) \text{ for }
(X_{t+L}, Y_{t+L}) | (X_t, Y_t) \sim \Gamma^L_P(X_t, Y_t).
\end{equation}
\end{Asm}

Assumption \ref{asm:gammaP_geo_erg} is weaker than uniform contraction in Wasserstein’s distance as in 
Assumption \ref{asm:gammaP_uniform_contract}. 
Under Assumption \ref{asm:gammaP_geo_erg}, we now characterize the distance from our coupled 
chains based on the coupled kernel $\bar{K}_{L-step}$
explicitly in terms of the initial distribution $\bar{I_0}$ and the coupled kernel $\Gamma_\Delta$ 
corresponding to perturbations between the marginal kernels $K_1$ and $K_2$. 
At the heart of our analysis is the construction of the coupled kernel $\bar{K}_{L-step}$ given in
Figure \ref{fig:multiple_step_coupling} and Algorithm \ref{algo:coupled_kernel_multi_step}. 
When the coupled kernel $\Gamma_\Delta$ characterizing the perturbation
between the marginal kernels $K_1$ and $K_2$ is Wasserstein optimal, our analysis is linked to 
\citet{rudolf2018}, which only considers the $1$-Wasserstein distance and establishes similar 
results using analytic rather than probabilistic arguments. 

\begin{Th} \label{thm:Wp_UB_multi_step}
Let $(X_t, Y_t)_{t \geq 0}$ denote a coupled Markov chain
generated using Algorithm \ref{algo:coupled_chain_general} with 
initial distribution $\bar{I}_0$ and joint kernel $\bar{K}$ 
on $\mathcal{X} \times \mathcal{X}$ from Algorithm 
\ref{algo:coupled_kernel_single_step}. Suppose the coupled kernel $\Gamma_1$ satisfies 
Assumption \ref{asm:gammaP_geo_erg} for some $C \geq 1$ and $\rho<1$. Fix some $L \geq 1$
such that $\tilde{\rho} = C \rho^L < 1$, and consider the coupled chain $(X_{t}, Y_{t})_{t \geq 0}$ generated using 
Algorithm \ref{algo:coupled_kernel_multi_step} with the $L$-step coupled kernel $\bar{K}_{L-step}$. 
Then for all $t \geq 0$,
\begin{flalign}
\mathbb{E} [ c(X_{Lt}, Y_{Lt})^p ]^{1/p} 
&\leq \tilde{\rho}^t \mathbb{E} [ c(X_0, Y_0)^p ]^{1/p} + \sum_{i=1}^t \tilde{\rho}^{t-i} \Big( \sum_{j=1}^{L} C \rho^{L-j} \mathbb{E} \Big[ \Delta_p(Y_{L(i-1)+j}) \Big]^{1/p} \Big)
\end{flalign}
where $(X_0, Y_0) \sim \bar{I}_0$ and $\Delta_p(z) : = \mathbb{E}[ c(X,Y)^p]$ 
for $(X,Y)|z \sim \Gamma_\Delta(z)$.
\end{Th}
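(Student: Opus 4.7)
The plan is to extend the argument for Theorem \ref{thm:Wp_UB_single_step} by first deriving a one-block recurrence for $\mathbb{E}[c(X_{Li}, Y_{Li})^p]^{1/p}$ in terms of $\mathbb{E}[c(X_{L(i-1)}, Y_{L(i-1)})^p]^{1/p}$ with contraction rate $\tilde{\rho} = C\rho^L$, and then iterating this recurrence over $t$ blocks.

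For a single block, I would use the auxiliary variables $Z_L^{(1)}, \ldots, Z_L^{(L)}$ from Algorithm \ref{algo:coupled_kernel_multi_step} as bridge points between $X_{Li}$ and $Y_{Li}$. By the triangle inequality followed by Minkowski's inequality,
\begin{equation*}
\mathbb{E}[c(X_{Li}, Y_{Li})^p]^{1/p} \leq \mathbb{E}[c(X_{Li}, Z_L^{(1)})^p]^{1/p} + \sum_{j=1}^{L-1}\mathbb{E}[c(Z_L^{(j)}, Z_L^{(j+1)})^p]^{1/p} + \mathbb{E}[c(Z_L^{(L)}, Y_{Li})^p]^{1/p}.
\end{equation*}
The key structural fact, direct from the construction in Algorithm \ref{algo:coupled_kernel_multi_step}, is that each consecutive pair in the decomposition is itself a Markov chain under the appropriate coupled kernel. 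Specifically, $(X_s, Z_s^{(1)})$ evolves under $\Gamma_1$ for $L$ sub-steps from $(X_{L(i-1)}, Y_{L(i-1)})$; for each $1 \leq j \leq L-1$, the pair $(Z_s^{(j)}, Z_s^{(j+1)})$ evolves under $\Gamma_1$ for $L-j$ sub-steps from $(Z_j^{(j)}, Y_{L(i-1)+j})$, using the boundary identification $Z_j^{(j+1)} := Y_{L(i-1)+j}$ implicit in the algorithm; and $(Z_L^{(L)}, Y_{Li}) \sim \Gamma_\Delta(Y_{Li-1})$. Applying Assumption \ref{asm:gammaP_geo_erg} to each of the $\Gamma_1$-driven pairs and invoking the tower property together with the identity $(Z_j^{(j)}, Y_{L(i-1)+j}) \sim \Gamma_\Delta(Y_{L(i-1)+j-1})$ then yields the one-block bound
\begin{equation*}
\mathbb{E}[c(X_{Li}, Y_{Li})^p]^{1/p} \leq \tilde{\rho}\, \mathbb{E}[c(X_{L(i-1)}, Y_{L(i-1)})^p]^{1/p} + \sum_{j=1}^{L} C\rho^{L-j}\, \mathbb{E}[\Delta_p(Y_{L(i-1)+j-1})]^{1/p},
\end{equation*}
where the $j = L$ summand absorbs $\mathbb{E}[c(Z_L^{(L)}, Y_{Li})^p]^{1/p} = \mathbb{E}[\Delta_p(Y_{Li-1})]^{1/p}$ using $C \geq 1$.

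Setting $A_i := \mathbb{E}[c(X_{Li}, Y_{Li})^p]^{1/p}$, the one-block bound becomes a linear recurrence $A_i \leq \tilde{\rho}\, A_{i-1} + E_i$, and unfolding it over $i = 1, \ldots, t$ gives the claimed telescoping geometric series. The main obstacle will be carefully justifying the Markov evolution of each pairwise marginal $(Z_s^{(j)}, Z_s^{(j+1)})_{s \geq j}$ under $\Gamma_1$, despite the fact that in each joint update of Algorithm \ref{algo:coupled_kernel_multi_step} the variable $Z_s^{(j)}$ participates in the two overlapping pairwise couplings $(Z_{s-1}^{(j-1)}, Z_{s-1}^{(j)}) \mapsto (Z_s^{(j-1)}, Z_s^{(j)})$ and $(Z_{s-1}^{(j)}, Z_{s-1}^{(j+1)}) \mapsto (Z_s^{(j)}, Z_s^{(j+1)})$, which must be made consistent via the gluing lemma (Lemma \ref{lemma:gluing_lemma}). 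Verifying that this gluing affects only the joint law across pairs and leaves the $\Gamma_1$ transition law of any individual pair intact is the step most prone to indexing slips, but once stated cleanly the remainder of the proof reduces to routine applications of Minkowski's inequality and geometric summation.
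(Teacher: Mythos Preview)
Your proposal is correct and follows essentially the same route as the paper: decompose $c(X_{Li},Y_{Li})$ through the chain of auxiliary points $Z_L^{(1)},\ldots,Z_L^{(L)}$, apply Minkowski, use Assumption~\ref{asm:gammaP_geo_erg} on each $\Gamma_1$-driven pair to produce the one-block recurrence, and then unfold it. Two small remarks: (i) your one-block bound has $\mathbb{E}[\Delta_p(Y_{L(i-1)+j-1})]^{1/p}$, which is exactly what the paper's computation yields before a final cosmetic index shift to match the theorem statement; (ii) the paper does not explicitly invoke the gluing lemma here but simply takes the pairwise marginal laws as given by the construction in Algorithm~\ref{algo:coupled_kernel_multi_step} and Figure~\ref{fig:multiple_step_coupling}, so your concern about overlapping couplings, while legitimate, is treated by the paper as part of the definition rather than something to be re-derived.
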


\begin{Cor} \label{cor:Wp_UB_multi_step}
Under the setup and assumptions of Theorem \ref{thm:Wp_UB_multi_step}, 
consider when the marginal distributions $Q_t$ converge in $p$-Wasserstein 
distance to some distribution $Q$ with finite moments of order $p$ as 
$t \rightarrow\infty$. 
Then for all $\epsilon >0$, there exists some $S \geq 1$ such that for all $t \geq S$,
\begin{equation}
\mathbb{E} [ c(X_{Lt}, Y_{Lt})^p ]^{1/p} \leq 
(C\rho^L)^t \mathbb{E} [ c(X_0, Y_0)^p ]^{1/p} + C \Big( \frac{1 - (C\rho^L)^t}{1-C\rho^L} \Big) \Big( \frac{1-\rho^L}{1-\rho} \Big) \mathbb{E}[\Delta_p(Y^*)]^{1/p} + \epsilon.
\end{equation}
where $(X_0, Y_0) \sim \bar{I}_0$, $\Delta_p(z) \defeq \mathbb{E}[ c(X,Y)^p|z]$ for $(X,Y) \sim \Gamma_\Delta(z)$ and $Y^* \sim Q$.
\end{Cor}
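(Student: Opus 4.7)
\textbf{Proof proposal for Corollary \ref{cor:Wp_UB_multi_step}.} The plan is to mirror the strategy used for the single-step Corollary \ref{cor:Wp_UB_single_step}, but applied to the nested double sum that appears in the conclusion of Theorem \ref{thm:Wp_UB_multi_step}. Concretely, I will start from the bound
\begin{equation}
\mathbb{E}[c(X_{Lt}, Y_{Lt})^p]^{1/p} \leq \tilde{\rho}^{\,t}\,\mathbb{E}[c(X_0,Y_0)^p]^{1/p} + \sum_{i=1}^t \tilde{\rho}^{\,t-i} b_i,
\end{equation}
where $\tilde{\rho} = C\rho^L \in (0,1)$ and $b_i \defeq \sum_{j=1}^{L} C\rho^{L-j}\, \mathbb{E}[\Delta_p(Y_{L(i-1)+j})]^{1/p}$. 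The goal then reduces to showing that the tail sum $\sum_{i=1}^t \tilde{\rho}^{\,t-i} b_i$ behaves like $B\,(1-\tilde{\rho}^{\,t})/(1-\tilde{\rho})$ up to an arbitrarily small error, where $B \defeq C\,(1-\rho^L)/(1-\rho)\cdot\mathbb{E}[\Delta_p(Y^*)]^{1/p}$.

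First, I would set $a_k \defeq \mathbb{E}[\Delta_p(Y_k)]^{1/p}$ and $a \defeq \mathbb{E}[\Delta_p(Y^*)]^{1/p}$, and argue that $a_k \to a$ as $k \to \infty$ from the assumed $p$-Wasserstein convergence $Q_t \to Q$, exactly as in the single-step corollary (this uses that $Y^* \sim Q$, that each $Y_k \sim Q_k$, and the standard fact that $p$-Wasserstein convergence implies convergence of expectations of continuous functions with $p$-th order growth). Consequently, for each fixed $j \in \{1,\dots,L\}$ the sequence $(a_{L(i-1)+j})_{i\geq 1}$ also converges to $a$, and so $b_i \to B$ as $i \to \infty$.

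Next, I would apply Lemma \ref{lemma:series_limit} to the sequence $(b_i - B)_{i \geq 1}$, which tends to $0$, with geometric weights $\tilde{\rho} \in (0,1)$, to conclude
\begin{equation}
\sum_{i=1}^t \tilde{\rho}^{\,t-i}(b_i - B) \;\xrightarrow{t\to\infty}\; 0.
\end{equation}
Writing $\sum_{i=1}^t \tilde{\rho}^{\,t-i} b_i = B\,(1-\tilde{\rho}^{\,t})/(1-\tilde{\rho}) + \sum_{i=1}^t \tilde{\rho}^{\,t-i}(b_i - B)$ and substituting into the Theorem \ref{thm:Wp_UB_multi_step} bound yields the advertised inequality: for any $\epsilon > 0$ choose $S$ so that $|\sum_{i=1}^t \tilde{\rho}^{\,t-i}(b_i - B)| < \epsilon$ for all $t \geq S$, and unpack $B$.

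The main obstacle is the convergence step $a_k \to a$, which, as in the single-step case, implicitly requires enough regularity of $z \mapsto \Delta_p(z)$ (continuity together with $p$-th order growth) to promote weak convergence with finite $p$-th moments into convergence of $\mathbb{E}[\Delta_p(Y_t)]^{1/p}$; I would treat this exactly as Corollary \ref{cor:Wp_UB_single_step} treats it. The remaining work, the double-index reindexing, the passage from $b_i \to B$ to $\sum_{i=1}^t \tilde{\rho}^{\,t-i} b_i$, and the algebraic identification of the geometric series $\sum_{j=1}^L C\rho^{L-j} = C(1-\rho^L)/(1-\rho)$, is bookkeeping that falls out cleanly once the single-step template is adapted.
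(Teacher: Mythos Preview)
Your proposal is correct and follows essentially the same route as the paper's own proof: both start from the bound of Theorem \ref{thm:Wp_UB_multi_step}, use the convergence $a_k \defeq \mathbb{E}[\Delta_p(Y_k)]^{1/p} \to a \defeq \mathbb{E}[\Delta_p(Y^*)]^{1/p}$ induced by $Q_t \to Q$ in $p$-Wasserstein distance, and then control the geometric tail sum to within $\epsilon$ before identifying the closed-form $C\,\frac{1-(C\rho^L)^t}{1-C\rho^L}\cdot\frac{1-\rho^L}{1-\rho}\,a$. Your packaging of the inner sum as $b_i$ and explicit invocation of Lemma \ref{lemma:series_limit} on $(b_i - B)_{i\geq 1}$ is slightly more formal than the paper, which simply asserts the convergence and the resulting $\epsilon$-bound directly, but the substance is the same.
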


As in Section \ref{subsection:theory}, we can also upper bound the limiting distance from 
our coupled chains in terms of the perturbations between the marginal kernels weighted by a Lyapunov function of $K_2$.

\begin{Prop} \label{prop:Wp_UB_multi_step_lyapunov}
Under the setup and assumptions of Theorem \ref{thm:Wp_UB_multi_step}, 
let $V: \mathcal{X} \rightarrow [0, \infty)$ 
be a $p^{th}$-order Lyapunov function of $K_2$ such that 
\begin{equation} \label{eq:lyapunov_KQ}
\mathbb{E}[V(Y_{t+1})^p|Y_t=z] \leq \gamma V(z)^p + L 
\end{equation}
for all $z \in \mathcal{X}$, where $\gamma \in [0,1)$ and $L \in [0,\infty)$ are 
constants. Define
\begin{flalign} \label{eq:lyapunov_KQ_def2}
\delta \defeq \sup_{z \in \mathcal{X}} \bigg( \frac{\Delta_p(z)
}{1 + V(z)^p} \bigg)^{1/p} \qquad
\kappa \defeq 1 + \max \Big\{ \mathbb{E}[V(Y_{0})^p]^{1/p}, \Big(\frac{L}{1-\gamma}\Big)^{1/p} \Big\}  \Big).
\end{flalign}
where $\Delta_p(z) \defeq \mathbb{E}[ c(X,Y)^p|z]$ for $(X,Y) \sim \Gamma_\Delta(z)$. Then for all $t \geq 0$,
\begin{equation}
\mathbb{E}[ \cub[p,t]^p ]^{1/p} = 
\mathbb{E} [ c(X_t, Y_t)^p ]^{1/p} \leq 
(C\rho^L)^t \mathbb{E} [ c(X_0, Y_0)^p ]^{1/p} + C \Big( \frac{1 - (C\rho^L)^t}{1-C\rho^L} \Big) \Big( \frac{1-\rho^L}{1-\rho} \Big) \delta \kappa.
\end{equation}
\end{Prop}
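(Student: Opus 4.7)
The plan is to derive the bound by combining Theorem \ref{thm:Wp_UB_multi_step} (which provides an explicit formula in terms of the per-step perturbations $\mathbb{E}[\Delta_p(Y_{L(i-1)+j})]^{1/p}$) with a uniform Lyapunov-based bound on these perturbation terms. The key observation is that the Lyapunov condition \eqref{eq:lyapunov_KQ} for $K_2$ lets us bound $\mathbb{E}[V(Y_s)^p]$ uniformly in $s$, which in turn bounds $\mathbb{E}[\Delta_p(Y_s)]^{1/p}$ uniformly via the definition of $\delta$.

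First, I would start from the conclusion of Theorem \ref{thm:Wp_UB_multi_step}, namely
\begin{equation*}
\mathbb{E}[c(X_{Lt}, Y_{Lt})^p]^{1/p} \leq (C\rho^L)^t \mathbb{E}[c(X_0, Y_0)^p]^{1/p} + \sum_{i=1}^t (C\rho^L)^{t-i} \sum_{j=1}^L C\rho^{L-j} \mathbb{E}[\Delta_p(Y_{L(i-1)+j})]^{1/p}.
\end{equation*}
The task then reduces to establishing the uniform estimate $\mathbb{E}[\Delta_p(Y_s)]^{1/p} \leq \delta \kappa$ for all $s \geq 0$.

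Second, by the definition of $\delta$ we have $\Delta_p(z) \leq \delta^p(1+V(z)^p)$ for every $z \in \mathcal{X}$, hence $\mathbb{E}[\Delta_p(Y_s)] \leq \delta^p(1 + \mathbb{E}[V(Y_s)^p])$. A straightforward induction on the Lyapunov drift condition \eqref{eq:lyapunov_KQ} yields $\mathbb{E}[V(Y_s)^p] \leq \gamma^s \mathbb{E}[V(Y_0)^p] + (1-\gamma^s)L/(1-\gamma) \leq \max\{\mathbb{E}[V(Y_0)^p], L/(1-\gamma)\}$. Taking $p$-th roots and using the subadditivity inequality $(a+b)^{1/p} \leq a^{1/p}+b^{1/p}$ (valid for $a,b \geq 0$ and $p \geq 1$), I obtain
\begin{equation*}
\mathbb{E}[\Delta_p(Y_s)]^{1/p} \leq \delta(1+\mathbb{E}[V(Y_s)^p])^{1/p} \leq \delta\bigl(1 + \max\{\mathbb{E}[V(Y_0)^p]^{1/p}, (L/(1-\gamma))^{1/p}\}\bigr) = \delta\kappa.
\end{equation*}

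Third, I would substitute this uniform bound back into the Theorem \ref{thm:Wp_UB_multi_step} estimate and evaluate the two resulting geometric sums: the inner sum $\sum_{j=1}^L C\rho^{L-j} = C(1-\rho^L)/(1-\rho)$ and the outer sum $\sum_{i=1}^t (C\rho^L)^{t-i} = (1-(C\rho^L)^t)/(1-C\rho^L)$, the latter being finite precisely because $C\rho^L < 1$ by hypothesis. Multiplying gives the claimed bound.

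The proof is essentially routine once the right decomposition is identified; the only subtle point is the need to carefully invoke the $p \geq 1$ subadditivity inequality $(a+b)^{1/p} \leq a^{1/p}+b^{1/p}$ when converting the Lyapunov control on $\mathbb{E}[V(Y_s)^p]$ into the form of $\kappa$ used in this multi-step statement (which splits the $1/p$ power across the two terms inside the max), rather than using the tighter single-step form of $\kappa$ from Proposition \ref{prop:Wp_UB_single_step_lyapunov}.
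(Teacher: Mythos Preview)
Your proposal is correct and follows essentially the same approach as the paper: apply Theorem \ref{thm:Wp_UB_multi_step}, bound each $\mathbb{E}[\Delta_p(Y_s)]^{1/p}$ uniformly by $\delta\kappa$ via the Lyapunov drift inequality together with the subadditivity $(a+b)^{1/p}\le a^{1/p}+b^{1/p}$, and then sum the two resulting geometric series. Your write-up is in fact slightly cleaner in making the uniform-in-$s$ bound explicit before substituting.
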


\subsection{Proofs} \label{subsection:multi_step_proofs}
\begin{proof}[Proof of Theorem \ref{thm:Wp_UB_multi_step}]
Under the coupled kernel $\bar{K}_{L-step}$ from Algorithm \ref{algo:coupled_kernel_single_step}, 
for each $t \geq 1$ we obtain
\begin{equation}
(X_{Lt},Z^{(1)}_L,\mydots,Z^{(L)}_{L},Y_{Lt})
\end{equation}
where
\begin{flalign}
(X_{Lt},Z^{(1)}_L) | X_{L(t-1)}, Y_{L(t-1)} &\sim \Gamma^L_P(X_{L(t-1)}, Y_{L(t-1)}) \\
(Z^{(j)}_L, Z^{(j+1)}_L) | Y_{L(t-1)+j-1} &\sim \Gamma_\Delta(Y_{L(t-1)+j-1}) \Gamma_1^{L-j} \text{ for } j=1,\mydots,L-1 \\
(Z^{(L)}_L, Y_{Lt}) | Y_{L(t-1)+L-1} &\sim \Gamma_\Delta(Y_{L(t-1)+L-1}).
\end{flalign}
As $(X_{Lt}, Z^{(0)}_{t}) | X_{L(t-1)}, Y_{L(t-1)} \sim \Gamma_1^L(X_{L(t-1)}, Y_{L(t-1)})$, we obtain
\begin{flalign}
\mathbb{E} [ c(X_{Lt}, Y_{Lt})^p ]^{1/p} = & \mathbb{E}[\mathbb{E}[c(X_{Lt}, Y_{Lt})^p |X_{L(t-1)}, Y_{L(t-1)}]]^{1/p}  \\
\leq & \mathbb{E}[ \mathbb{E}[ \big( c(X_{Lt}, Z^{(1)}_{L}) + c(Z^{(1)}_{L}, Y_{Lt}) \big)^p |X_{L(t-1)}, Y_{L(t-1)} ] ]^{1/p} \label{eq:Wp_UB_dim_free_apply1multi} \\
\leq & \mathbb{E}[ \mathbb{E}[ c(X_{Lt}, Z^{(1)}_{L})^p |X_{L(t-1)}, Y_{L(t-1)} ] ]^{1/p} + \\
& \mathbb{E}[ \mathbb{E}[ c(Z^{(1)}_{L}, Y_{Lt})^p |X_{L(t-1)}, Y_{L(t-1)} ] ]^{1/p} \label{eq:Wp_UB_dim_free_apply2multi} \\
\leq & \tilde{\rho} \mathbb{E}[ c(X_{L(t-1)}, Y_{L(t-1)})^p ]^{1/p}  + \mathbb{E}[ c(Z^{(1)}_{L}, Y_{Lt})^p]^{1/p}\label{eq:Wp_UB_dim_free_apply3multi}
\end{flalign}
where \eqref{eq:Wp_UB_dim_free_apply1multi} follows as $c$ is a metric,
\eqref{eq:Wp_UB_dim_free_apply2multi} follows by Minkowski's inequality, and
\eqref{eq:Wp_UB_dim_free_apply3multi} follows by Assumption \ref{asm:gammaP_geo_erg}. 
Denote $\Delta_p(z) \defeq \mathbb{E}[c(X,Y)^p|z]$ for $(X,Y) \sim \Gamma_\Delta(z)$. Then, 
\begin{flalign}
\mathbb{E}[c(Z^{(1)}_{L}, Y_{Lt})^p]^{1/p} &\leq 
\mathbb{E} \Big[ \Big( c(Z^{(L)}_{L}, Y_{Lt}) + \sum_{j=1}^{L-1} c(Z^{(j)}_{L}, Z^{(j+1)}_{L}) \Big)^p \Big]^{1/p}
\text{ as $c$ is a metric}\\
&\leq \mathbb{E} \Big[ c(Z^{(L)}_{L}, Y_{Lt})^p \Big]^{1/p}  + 
\sum_{j=1}^{L-1} \mathbb{E} \Big[ c(Z^{(j)}_{L}, Z^{(j+1)}_{L})^p \Big]^{1/p} 
\text{ by Minkowski's inequality} \\
&= \mathbb{E} \Big[ \mathbb{E} \Big[ c(Z^{(L)}_{L}, Y_{Lt})^p |Y_{L(t-1)+L-1} \Big] \Big]^{1/p}  + 
\sum_{j=1}^{L-1} \mathbb{E} \Big[ \mathbb{E} \Big[ c(Z^{(j)}_{L}, Z^{(j+1)}_{L})^p | Y_{L(t-1)+j-1} \Big] \Big]^{1/p} \\
&= \mathbb{E}[\Delta_p(Y_{L(t-1)+(L-1)}) ]^{1/p}  + 
\sum_{j=1}^{L-1} \mathbb{E} \Big[ \mathbb{E} \Big[ c(Z^{(j)}_{L}, Z^{(j+1)}_{L})^p | Y_{L(t-1)+j-1} \Big] \Big]^{1/p} \\
&\leq \mathbb{E}[\Delta_p(Y_{L(t-1)+(L-1)}) ]^{1/p}  + 
\sum_{j=1}^{L-1} C \rho^{L-j} \mathbb{E} \Big[ \Delta_p(Y_{L(t-1)+j-1}) \Big]^{1/p} 
\text{ by Assumption \ref{asm:gammaP_geo_erg}} \\
&\leq \sum_{j=1}^{L} C \rho^{L-j} \mathbb{E} \Big[ \Delta_p(Y_{L(t-1)+j}) \Big]^{1/p} 
\text{as $C \geq 1 $}.
\end{flalign}
Equation \eqref{eq:Wp_UB_dim_free_apply3multi} now gives 
\begin{flalign}
\mathbb{E} [ c(X_{Lt}, Y_{Lt})^p ]^{1/p} &\leq 
\tilde{\rho} \mathbb{E}[ c(X_{L(t-1)}, Y_{L(t-1)})^p ]^{1/p}  + \sum_{j=1}^{L} C \rho^{L-j} \mathbb{E} \Big[ \Delta_p(Y_{L(t-1)+j}) \Big]^{1/p}
\label{eq:Wp_UB_dim_free_apply3multi2}
\end{flalign}
By induction, \eqref{eq:Wp_UB_dim_free_apply3multi2} implies
\begin{flalign}
\mathbb{E} [ c(X_{Lt}, Y_{Lt})^p ]^{1/p} 
&\leq \tilde{\rho}^t \mathbb{E} [ c(X_0, Y_0)^p ]^{1/p} + \sum_{i=1}^t \tilde{\rho}^{t-i} \Big( \sum_{j=1}^{L} C \rho^{L-j} \mathbb{E} \Big[ \Delta_p(Y_{L(i-1)+j}) \Big]^{1/p} \Big)
\end{flalign}
as required.
\end{proof}

\begin{proof}[Proof of Corollary \ref{cor:Wp_UB_multi_step}]
Denote $a\defeq \mathbb{E}[ \Delta_p(Y^*)]^{1/p}$ for $Y^* \sim Q$ and 
$a_k \defeq \mathbb{E} [\Delta_p(Y_{k})]^{1/p}$ for $k \geq 0$. 
Then $a_k \overset{k \rightarrow \infty}{\rightarrow} a$, because 
$Q_t$ converges in $p$-Wasserstein distance to $Q$ as $t \rightarrow \infty$. 
This implies 
\begin{flalign}
\sum_{i=1}^t \tilde{\rho}^{t-i} \Big( \sum_{j=1}^{L} C \rho^{L-j} a_{L(i-1)+j} \Big)
\overset{t \rightarrow \infty}{\rightarrow} 
\sum_{i=1}^t \tilde{\rho}^{t-i} \Big( \sum_{j=1}^{L} C \rho^{L-j} a \Big).
\end{flalign} 
Therefore, for all $\epsilon>0$ there exists $S \geq 1$ such that for all $t \geq S$,
$\sum_{i=1}^t \tilde{\rho}^{t-i} \sum_{j=1}^{L} C \rho^{L-j} |a_{L(i-1)+j}-a| < \epsilon$. 
By Theorem \ref{thm:Wp_UB_multi_step}, 
\begin{flalign}
\mathbb{E} [ c(X_{Lt}, Y_{Lt})^p ]^{1/p} 
&\leq \tilde{\rho}^t \mathbb{E} [ c(X_0, Y_0)^p ]^{1/p} + \sum_{i=1}^t \tilde{\rho}^{t-i} \Big( \sum_{j=1}^{L} C \rho^{L-j} a_{L(i-1)+j} \Big) \\
&\leq \tilde{\rho}^t \mathbb{E} [ c(X_0, Y_0)^p ]^{1/p} 
+ \sum_{i=1}^t \tilde{\rho}^{t-i} \sum_{j=1}^{L} C \rho^{L-j} a
+ \sum_{i=1}^t \tilde{\rho}^{t-i} \Big( \sum_{j=1}^{L} C \rho^{L-j} |a_{L(i-1)+j}-a| \Big)  \\
&\leq \tilde{\rho}^t \mathbb{E} [ c(X_0, Y_0)^p ]^{1/p} 
+ \sum_{i=1}^t \tilde{\rho}^{t-i} \sum_{j=1}^{L} C \rho^{L-j} a + \epsilon \\
&= (C\rho^L)^t \mathbb{E} [ c(X_0, Y_0)^p ]^{1/p} 
+ C \Big( \frac{1 - (C\rho^L)^t}{1-C\rho^L} \Big) \Big( \frac{1-\rho^L}{1-\rho} \Big) a + \epsilon 
\end{flalign}
as required.
\end{proof}

\begin{proof}[Proof of Proposition \ref{prop:Wp_UB_multi_step_lyapunov}]
As $V$ is a  a $p^{th}$-order Lyapunov function of $K_2$, by induction 
\begin{flalign}
\mathbb{E}[V(Y_{j})^p] \leq \gamma^j \mathbb{E}[V(Y_{0})^p] + (1-\gamma^t)\frac{L}{1-\gamma}
\end{flalign}
for all $j \geq 0$. This gives
\begin{flalign}
\mathbb{E} \Big[ \Delta_p(Y_{j}) \Big]^{1/p} &\leq \delta \mathbb{E}[1+ V(Y_{j-1})^p]^{1/p} \\
&\leq \delta ( 1 + \mathbb{E}[V(Y_{j-1})^p]^{1/p} ) \\
&\leq \delta \bigg( 1 + \Big( \gamma^{t-1}\mathbb{E}[V(Y_{0})^p] + (1- \gamma^{t-1})\frac{L}{1 - \gamma} \Big)^{1/p} \bigg) \\
&\leq \delta \bigg( 1 + \max \Big\{ \mathbb{E}[V(Y_{0})^p]^{1/p}, \Big(\frac{L}{1-\gamma}\Big)^{1/p} \Big\} \bigg) \\
&= \delta	\kappa
\end{flalign}
for all $j \geq 0$. By Theorem \ref{thm:Wp_UB_multi_step}, we obtain
\begin{flalign}
\mathbb{E} [ c(X_{Lt}, Y_{Lt})^p ]^{1/p} 
&\leq \tilde{\rho}^t \mathbb{E} [ c(X_0, Y_0)^p ]^{1/p} + \sum_{i=1}^t \tilde{\rho}^{t-i} \Big( \sum_{j=1}^{L} C \rho^{L-j} \mathbb{E} \Big[ \Delta_p(Y_{L(i-1)+j}) \Big]^{1/p} \Big)
\\
&\leq \tilde{\rho}^t \mathbb{E} [ c(X_0, Y_0)^p ]^{1/p} + \sum_{i=1}^t \tilde{\rho}^{t-i} \Big( \sum_{j=1}^{L} C \rho^{L-j} \delta \kappa \Big)
\\
&\leq \tilde{\rho}^t \mathbb{E} [ c(X_0, Y_0)^p ]^{1/p} + C \Big( \frac{1 - (C\rho^L)^t}{1-C\rho^L} \Big) \Big( \frac{1-\rho^L}{1-\rho} \Big) \delta \kappa
\end{flalign}
\end{proof}

\section{Details for the practical applications in Section \ref{section:applications}} \label{appendices:applications}
In this section, we provide details of the datasets, algorithms and parameters used 
for the three practical applications in Section \ref{section:applications}.
Open-source R code \citep{Rsoftware} recreating all experiments in this paper can be found at
\if1\blind
{\url{github.com/niloyb/BoundWasserstein}.} \fi
\if0\blind 
{\url{github.com/AnonPaperCode/BoundWasserstein}.} \fi

\subsection{Approximate MCMC and variational inference for tall data} \label{appendices:tall_data}
Section \ref{subsection:tall_data} considers Bayesian logistic regression
with a Gaussian prior applied to the Pima Diabetes dataset \citep{smith1988using} and the 
DS1 life sciences dataset \citep{ds1dataset}. 
The Pima Diabetes dataset has $n=768$ binary observations
(corresponding to the presence of diabetes), and 
$d=8$ covariates (containing information such as body mass index, 
insulin level and age), and is publicly available on 
\url{kaggle.com/uciml/pima-indians-diabetes-database}. 
The DS1 life sciences dataset has $n=26732$ binary observations
(corresponding to reactivity of the compound observed in a life 
sciences experiment), and $d=10$ covariates (containing information about the 
inputs to the life sciences experiment), and is publicly available on 
\url{komarix.org/ac/ds/} (ds1.10 file). 

In Figure \ref{fig:tall_data}, the upper bounds are given by 
our estimator $\cub[2]$ \eqref{eq:W_ublimit1} with
$S=1000,  T=2000,$ and $I=100$ for the Pima dataset and 
$S=500,T=100,$ and $I=40$ for the DS1 dataset, 
where these values were chosen based on initial runs.
The lower bounds are estimated using \eqref{eq:W2L2LB}
based on the same samples from the coupled chains used to 
calculate the upper bound estimate. For all the cases considered in 
Figure \ref{fig:tall_data}, we use a CRN coupling of the 
marginal kernels with a common step-size of $0.05$ for the 
Pima dataset and a common step-size of $0.05$ for the DS1 dataset.
We also considered switching between CRN 
and reflection couplings based on the Euclidean norm between the two chains. 
This did not produce tighter upper bounds than CRN in our experiments, 
but it may be effective in other examples, so  
we have included this option in our released code. 

\subsection{Approximate MCMC for high-dimensional linear regression} \label{appendices:half_t}
Section \ref{subsection:half_t} considers Bayesian linear regression
with the half-t global-local shrinkage prior applied to a bacteria genome-wide 
association study (GWAS) dataset \citep{buhlmann2014highARSA} 
and a synthetically generated dataset. The GWAS dataset
has $n = 71$ observations (corresponding to production of the vitamin riboflavin) and 
$d = 4088$ covariates (corresponding to single nucleotide 
polymorphisms (SNPs) in the genome) and is publicly available.
The synthetically generated dataset
has $n = 500$ observations and $d = 50000$ covariates. 
For the synthetic dataset, we generate 
$[X]_{i,j} \stackrel{\textup{i.i.d.}}{\sim} \mathcal{N}(0,1)$ 
and $y \sim \mathcal{N}(X \beta_{*} , \sigma^2_{*}I_n)$, where $\sigma_{*}=2$ and 
$\beta_{*} \in \mathbb{R}^d$ is chosen to be 
sparse such that $\beta_{*,j}=2^{(9-j)/4}$ for $1 \leq j \leq 20$ and $\beta_{*,j}=0$ for
all $j > 20$.

The state-of-the-art exact MCMC algorithms to sample from posteriors 
corresponding to the half-t prior are Gibbs samplers which cost 
$\mathcal{O}(n^2 d)$ per iteration.
This computation cost arises from a weighted matrix product 
calculation of the form $X \Diag(\eta_t)^{-1} X^\top $ where $\eta_t \in [0,\infty)^p$ 
corresponds to the local scale parameters which take different values at each iteration $t$. 
For the Horseshoe prior (degrees of freedom $\nu$=1), approximate MCMC methods have been 
developed by 
\citet{johndrow2020scalableJMLR} based on approximations of the form
\begin{equation} \label{eq:half_t_approx}
X \Diag(\xi \eta_t)^{-1} X^\top  \approx X\ \text{Diag}( ( \xi^{-1} \eta_j^{-1} \mathrm{I}_{ \{ \xi^{-1} \eta_j^{-1} > \epsilon \} })_{j=1}^p )\ X^\top 
\end{equation}
for some small threshold $\epsilon>0$.
\citet{biswas2021coupled} extended the exact marginal chain of \citep{johndrow2020scalableJMLR}
to all degrees of freedom $\nu \geq 1$.

In Section \ref{subsection:half_t}, we use couplings to assess the 
quality of the approximate MCMC algorithm characterized by the approximation in \eqref{eq:half_t_approx} for $\nu=2$. 
The upper bounds in Figure \ref{fig:half_t} are given by our estimator $\cub[2]$ 
\eqref{eq:W_ublimit1}. 
We take $S=1000,  T=3000,$ and $I=100$ for both datasets, where these values were chosen 
based on initial runs and the coupling-based
convergence assessment of the exact chain from \citet{biswas2021coupled}. 
The lower bounds in Figure \ref{fig:half_t} are estimated using \eqref{eq:W2L2LB}
based on same samples from the coupled chains used to 
calculate the upper bound estimate. 
We consider a CRN coupling with one marginal chain 
corresponding to the exact MCMC kernel and the other chain corresponding 
to the approximate MCMC kernel. 
The CRN coupled kernel is given in Algorithm \ref{algo:Half_t_crn_coupling}.

\begin{algorithm}
\LinesNumbered
\SetKwBlock{Begin}{Begin}{}
\SetAlgoLined
\SetKwProg{Loop}{LOOP}{}{}
\DontPrintSemicolon
\KwIn{exact chain current state $ C_t \defeq (\beta_t, \eta_t, \sigma_t^2, \xi_t) \in \mathbb{R}^d \times \mathbb{R}^d_{> 0} \times \mathbb{R}_{> 0} \times \mathbb{R}_{> 0}$, approximate chain current state $ \tilde{C}_t \defeq (\tilde{\beta}_t, \tilde{\eta}_t, \tilde{\sigma}_t^2, \tilde{\xi}_t) \in \mathbb{R}^d \times \mathbb{R}^d_{> 0} \times \mathbb{R}_{> 0} \times \mathbb{R}_{> 0}$ and 
approximation threshold $\epsilon>0$.} 
\begin{enumerate}
\item Sample $(\eta_{t+1}, \tilde{\eta}_{t+1}) \big| \xi_{t},\tilde{\xi}_{t}, \sigma^2_{t}, \tilde{\sigma}^2_{t},\beta_{t}, \tilde{\beta}_{t}$ component-wise, for each component $j$ targeting 
\begin{flalign} \label{eq:eta_full_conditionals}
\pi(\eta_{t+1,j} | \mydots ) &\propto \frac{e^{-m_{t,j} \eta_{t+1,j}}}{\eta_{t+1,j}^{\frac{1-\nu}{2}}(1+\nu \eta_{t+1,j})^{\frac{\nu + 1}{2}}} \text{ and }
\pi(\tilde{\eta}_{t+1,j} | \mydots ) \propto \frac{e^{-\tilde{m}_{t,j} \eta_{t+1,j}}}{\eta_{t+1,j}^{\frac{1-\nu}{2}}(1+\nu \eta_{t+1,j})^{\frac{\nu + 1}{2}}}
\end{flalign} 
for $m_{t,j} \defeq \big(\xi_{t} \beta_{t,j}^2\big)/\big(2  \sigma^2_{t}\big)$ and $\tilde{m}_{t,j} \defeq \big(\tilde{\xi}_{t} \tilde{\beta}_{t,j}^2\big)/\big(2 \tilde{\sigma}^2_{t}\big)$ respectively using common random numbers. This can be done using the slice sampler of \citet{biswas2021coupled}.

\item Sample $(\xi_{t+1},\tilde{\xi}_{t+1}, \sigma^2_{t+1}, \tilde{\sigma}^2_{t+1},\beta_{t+1}, \tilde{\beta}_{t+1})$ given $\eta_{t+1}$ and 
$\tilde{\eta}_{t+1}$ as follows:
\begin{enumerate}
\item Sample $(\xi_{t+1},\tilde{\xi}_{t+1})$ via Metropolis-Hastings with step size $\sigma_{\text{MH}}=0.8$: \;
Propose $\log(\xi^*) = \log(\xi_{t}) + \sigma_{\text{MH}} Z^*$ and $\log(\tilde{\xi}^*) = \log(\tilde{\xi}_{t}) + \sigma_{\text{MH}} Z^*$ for $Z^* \sim \mathcal{N}(0, 1)$. \;
Calculate acceptance probabilities
\begin{equation}
q =  \frac{L(y | \xi_*, \eta_{t+1} ) \pi_{\xi}(\xi_*)  }{L(y | \xi_{t}, \eta_{t+1} ) \pi_{\xi}(\xi_{t})} \frac{\xi^*}{\xi_{t}} \text{ and } 
\tilde{q} =  \frac{L(y | \tilde{\xi}_*, \tilde{\eta}_{t+1} ) \pi_{\xi}(\tilde{\xi}_*)  }{L(y | \tilde{\xi}_{t}, \tilde{\eta}_{t+1} ) \pi_{\xi}(\tilde{\xi}_{t})} \frac{\tilde{\xi}^*}{\tilde{\xi}_{t}}
\end{equation}
where $\pi_{\xi}(\cdot) $ is the prior density of $\xi$, 
$ M \defeq I_n + \xi_t^{-1} X\, \text{Diag}(\eta_{j,t}^{-1})\, X^\top $,
$ \tilde{M} \defeq I_n + X\, \text{Diag}( ( \tilde{\xi}_t^{-1} \tilde{\eta}_{j,t}^{-1} \mathrm{I}_{ \{ \tilde{\xi}_{\operatorname{max}}^{-1} \tilde{\eta}_{j,t}^{-1} > \epsilon \} })_{j=1}^p )\, X^\top $ for $\tilde{\xi}_{\operatorname{max}} = \max \{ \tilde{\xi}_t, \tilde{\xi}^* \}$, 
\begin{flalign} \log(L(y | \xi, \eta )) &=  - \frac{1}{2} \log (|M|) - \frac{a_0 + n}{2} \log( b_0 + y^\top M^{-1} y ) \text{ and} \\
\log(L(y | \xi, \eta )) &=  - \frac{1}{2} \log (|\tilde{M}|) - \frac{a_0 + n}{2} \log( b_0 + y^\top \tilde{M}^{-1} y ).
\label{eq:Half_t_Exact_algo_ll}
\end{flalign}
Sample $U^* \sim \Uniform([0,1])$. 
Set $\xi_{t+1} \defeq \xi^*$ if $U^* \leq \min(1,q)$, else set $\xi_{t+1} \defeq \xi_{t}$.
Set $\tilde{\xi}_{t+1} \defeq \tilde{\xi}^*$ if $U^* \leq \min(1,\tilde{q})$, else set $\tilde{\xi}_{t+1} \defeq \tilde{\xi}_{t}$.
\end{enumerate}
\end{enumerate}
\caption{Common random numbers coupling of an exact and an approximate
Markov chain for Bayesian regression with half-t priors.}
\label{algo:Half_t_crn_coupling}
\end{algorithm}

\begin{algorithm}
\setcounter{algocf}{1}
\caption{continued}
  \LinesNumbered
\setcounter{AlgoLine}{12}
  \SetAlgoVlined
\SetKwBlock{Begin}{}{end}
\SetKwProg{Loop}{LOOP}{}{}

\begin{enumerate}
\setcounter{enumi}{1}
    \item 
\begin{enumerate}[(a)]
\setcounter{enumii}{1}
\item Sample $(\sigma^2_{t+1}, \tilde{\sigma}^2_{t+1}) \big| \xi_{t+1},\tilde{\xi}_{t+1},\eta_{t+1}, \tilde{\eta}_{t+1}$ using 
common random numbers, marginally targeting
\begin{flalign}
\sigma^2_{t+1} &| \xi_{t+1}, \eta_{t+1} \sim \InvGamma\Big(\frac{a_0+n}{2}, \frac{y^\top M_{\xi_{t+1},\eta_{t+1}}^{-1} y + b_0}{2}\Big) \text{ and} \\
\tilde{\sigma}^2_{t+1} &| \tilde{\xi}_{t+1}, \tilde{\eta}_{t+1} \sim \InvGamma\Big(\frac{a_0+n}{2}, \frac{y^\top M_{\tilde{\xi}_{t+1},\tilde{\eta}_{t+1}}^{-1} y + b_0}{2}\Big).
\end{flalign}
\item Sample $(\beta_{t+1}, \tilde{\beta}_{t+1}) | \sigma^2_{t+1}, \tilde{\sigma}^2_{t+1}, \xi_{t+1}, \tilde{\xi}_{t+1},  \eta_{t+1}, \tilde{\eta}_{t+1}$ 
with common random numbers and the fast sampling algorithms of \citet{bhattacharya2016fastBIOMETRIKA}, marginally targeting
\begin{flalign}
\beta_{t+1} | \sigma^2_{t+1}, \xi_{t+1}, \eta_{t+1} \sim \mathcal{N} \big( \Sigma^{-1} X^\top y,  \sigma^2_{t+1} \Sigma^{-1} \big)
\text{ for } \Sigma = X^\top X + \xi_{t+1} \text{Diag}(\eta_{t+1} ) \\
\tilde{\beta}_{t+1} | \tilde{\sigma}^2_{t+1}, \tilde{\xi}_{t+1}, \tilde{\eta}_{t+1} \sim \mathcal{N} \big( \tilde{\Sigma}^{-1} X^\top y,  \tilde{\sigma}^2_{t+1} \tilde{\Sigma}^{-1} \big)
\text{ for } \tilde{\Sigma} = X^\top X + \tilde{\xi}_{t+1} \text{Diag}( \tilde{\eta}_{t+1} )
\end{flalign}
\end{enumerate}
\end{enumerate}
\Return 
$C_{t+1} \defeq (\beta_{t+1}, \eta_{t+1}, \sigma^2_{t+1}, \xi_{t+1})$ and 
$\tilde{C}_{t+1} \defeq (\tilde{\beta}_{t+1}, \tilde{\eta}_{t+1}, \tilde{\sigma}_{t+1}^2, \tilde{\xi}_{t+1})$.
\end{algorithm}

\subsection{Approximate MCMC for high-dimensional logistic regression} \label{appendices:skinny_gibbs}

Section \ref{subsection:skinny_gibbs} considers Bayesian logistic regression
with spike and slab priors applied to 
a malware detection dataset and a lymph node GWAS dataset.
The Malware detection dataset from the UCI machine learning repository \citep{Dua2019UCI} has $n=373$ observations (corresponding to a binary response vector indicating whether a file is malicious or non-malicious) and $d=503$ covariates (corresponding to features of the files), and 
is publicly available on \url{kaggle.com/piyushrumao/malware-executable-detection}. 
The lymph node GWAS dataset \citep{hans2007shotgunJASA, liang2013bayesianJASA, narisetty2019skinnyJASA} has $n=148$ observations (corresponding to a binary response 
vector indicating high or low risk status of the lymph node that is related to
breast cancer) and $d=4514$ covariates (corresponding to SNPs in the genome)
is not publicly available.  

The logistic regression likelihood is given by 
$L(\beta; y, X) = \prod_{i=1}^n (1+\exp(-y_i x_i^\top \beta))^{-1}$ where 
$y \in \{-1,1\}^n$ is the response vector, 
$X \in \mathbb{R}^{n \times d}$ is the scaled design matrix with rows $x_i^\top$, 
and $\beta \in \mathbb{R}^d$ is an unknown signal vector. 
The spike and slab prior is given by 
\begin{equation} \label{eq:spike_slab}
Z_j \overset{i.i.d.}{\sim} \Bernoulli(q), \qquad \beta_j | Z_j=0 \sim \mathcal{N}(0, \tau^2_{0}),
\qquad \beta_j | Z_j=1 \sim \mathcal{N}(0, \tau^2_{1})
\end{equation}
for $j=1, \mydots, d$ where $q \in (0,1), \tau_0>0,$ and $\tau_1>0$ are hyper-parameters
with $\tau_0 \ll \tau_1$ such that $Z_i=0$ and $Z_i=1$ correspond to null and 
non-null components of $\beta_j$ respectively.
By considering the posterior distribution of each variable $Z_j$ on 
$\{0,1\}$, spike and slab priors provide an interpretable method for 
Bayesian variable selection \citep[e.g.][]{george1993variableJASA, ishwaran2005spikeAOS, 
narisetty2014bayesianAOS}. 

The state-of-the-art exact MCMC algorithms to sample from posteriors
corresponding to the prior in \eqref{eq:spike_slab}
are Gibbs samplers which cost $\mathcal{O}(n^2 d)$ per iteration 
\citep{bhattacharya2016fastBIOMETRIKA}. 
\citet{narisetty2019skinnyJASA} have recently developed approximate MCMC methods for 
this setting. Their approximate MCMC algorithm, called \textit{Skinny Gibbs},
is based on matrix approximations of the form
\begin{equation}
\begin{pmatrix}
X_A^\top X_A + \tau_{1}^{-2}I & X_A^\top X_{A^c} \\
X_{A^c}^\top X_A & X_{A^c}^\top X_{A^c} + \tau_{0}^{-2}I \\
\end{pmatrix} \approx \begin{pmatrix}
X_A^\top X_A + \tau_{1}^{-2}I & 0 \\
0 & ((n-1) + \tau_{0}^{-2})I \\
\end{pmatrix}
\end{equation}
where $A= \{j: Z_{j}=1\}$, $X_A$ is an $n \times |A|$ matrix corresponding to the active columns 
$j \in A$ of the design matrix, and $X_{A^c}$ is an $n \times (d-|A|)$ matrix 
corresponding to the inactive columns $j \notin A$. 
This gives an overall computation cost of $\mathcal{O}(n \min \{d, |A|^2 \})$ per iteration. 

In Section \ref{subsection:skinny_gibbs}, we use couplings to assess the 
quality of the Skinny Gibbs algorithm.
The upper bounds in Figure \ref{fig:half_t} are given by our estimator $\cub[2]$ 
\eqref{eq:W_ublimit1} with $S=1000,  T=3000,$ and $I=100$ for both the malware and lymph node GWAS datasets, where these values were chosen based on initial runs. 
The lower bounds in Figure \ref{fig:half_t} are estimated using \eqref{eq:W2L2LB}
based on the same samples from the coupled chains used to 
calculate the upper bound estimate. 
We consider a CRN coupling between one marginal chain corresponding to the exact MCMC kernel 
and another corresponding to the Skinny Gibbs kernel. 
The CRN coupled kernel is given in Algorithm \ref{algo:skinny_gibbs_crn_coupling}.

\begin{algorithm}%
\DontPrintSemicolon
\KwIn{exact chain current state $C_t \defeq (\beta_t, z_t, e_t, w_t) \in \mathbb{R}^d \times \{0,1\}^d \times \mathbb{R}^n \times \mathbb{R}^n$ and \\ 
approximate chain current state $\tilde{C}_t \defeq (\tilde{\beta}_t, \tilde{z}_t, \tilde{e}_t, \tilde{w}_t) \in \mathbb{R}^d \times \{0,1\}^d \times \mathbb{R}^n \times \mathbb{R}^n$. } 
\begin{enumerate}
\item Sample $(\beta_{t+1}, \tilde{\beta}_{t+1}) | z_t, e_t, w_t, \tilde{z}_t, \tilde{e}_t, \tilde{w}_t$ 
with common random numbers and the fast sampling algorithms of \citet{bhattacharya2016fastBIOMETRIKA}, marginally targeting
\begin{flalign}
(\beta_{A,t+1}, \beta_{A^c,t+1}) | z_t,  e_t, w_t \sim \mathcal{N} \big( \Sigma^{-1} X^\top  W y,  \Sigma^{-1} \big) &\text{ for } 
\Sigma = 
\scriptsize 
\begin{pmatrix}
X_A^\top  W X_A + \tau_{1}^{-2}I & X_A^\top  W X_{A^c} \\
X_{A^c}^\top  W X_A & X_{A^c}^\top  W X_{A^c} + \tau_{0}^{-2}I \\
\end{pmatrix}
\normalsize
, \\
(\tilde{\beta}_{\tilde{A}, t+1}, \tilde{\beta}_{\tilde{I}, t+1}) | \tilde{z}_t,  \tilde{e}_t, \tilde{w}_t \sim \mathcal{N} \big( \tilde{\Sigma}^{-1} X^\top  \tilde{W} y,  \tilde{\Sigma}^{-1} \big)
&\text{ for } 
\scriptsize
\tilde{\Sigma}= \begin{pmatrix}
X_{\tilde{A}}^\top  \tilde{W} X_{\tilde{A}} + \tau_{1}^{-2} I & 0 \\
0 & ((n-1) + \tau_{0}^{-2}) I \\
\end{pmatrix}
\end{flalign}
\normalsize
where $W = \Diag( w_t )$ and $\tilde{W} = \Diag( \tilde{w}_t )$,  $A= \{j: z_{j,t}=1\}$ and $\tilde{A}= \{j: \tilde{z}_{j,t}=1\}$ are the index sets of active components,
$X_A$ and $X_{\tilde{A}}$ are matrices corresponding to the active (or inactive) columns of $X$ 
with columns $j \in A$ and $j \in \tilde{A}$ respectively,  
$\beta_{A, t+1}$ and $\tilde{\beta}_{\tilde{A}, t+1}$ are vectors of active components of $\beta_{t+1}$ and $\tilde{\beta}_{t+1}$ respectively.

\item Sample $(z_{t+1}, \tilde{z}_{t+1})$ given $\beta_{t+1}, \tilde{\beta}_{t+1}, e_t, \tilde{e}_t, w_t,  \tilde{w}_t$ with common random numbers sequentially in order for $j=1, \mydots, p$ such that each $z_{j,t+1}$ and $\tilde{z}_{j,t+1}$ are Bernoulli random variables with odds
\begin{flalign}
& \frac{q \mathcal{N}(\beta_{j,t+1}, 0, \tau_1^2)}{(1-q) \mathcal{N}(\beta_{j,t+1}, 0, \tau_0^2)} \text{ and } \\
&\frac{q \mathcal{N}(\tilde{\beta}_{j,t+1}, 0, \tau_1^2)}{(1-q) \mathcal{N}(\tilde{\beta}_{j,t+1}, 0, \tau_0^2)} 
\exp \Big( \tilde{\beta}_{j,t+1} X_j^\top  \tilde{W} ( Y - X_{C_j} \beta_{C_j, t+1} ) + \frac{1}{2}X_j^\top  (I-\tilde{W}) X_j \beta_{j,t+1}^2 \Big)& & &
\end{flalign}
respectively where $\mathcal{N}(\cdot; \mu, \Sigma)$ is the probability density of the normal distribution with mean $\mu$ and variance $\Sigma$,  $C_j \defeq \{k:  \tilde{z}_{k,t+1}=1 \text{ for } k <j \text{ or } \tilde{z}_{k,t}=1 \text{ for } k > j \}$ is the index set of active components in $\{1, \mydots, p\} \backslash \{j\}$, 
$X_{C_j}$ is a matrix of the columns of $X$ which correspond to indices in $C_j$,  and $\tilde{\beta}_{C_j, t+1}$ is a vector of the components of $\tilde{\beta}_{t+1}$
which correspond to indices in $C_j$. 
\end{enumerate}
\caption{Common random numbers coupling of an exact and an approximate
Markov chain for Bayesian logistic regression with spike and slab priors.}
 \label{algo:skinny_gibbs_crn_coupling}
\end{algorithm}

\begin{algorithm}
\setcounter{algocf}{2}
\caption{continued}
  \LinesNumbered
\setcounter{AlgoLine}{12}
  \SetAlgoVlined
\SetKwBlock{Begin}{}{end}
\SetKwProg{Loop}{LOOP}{}{}

\begin{enumerate}
\setcounter{enumi}{2}
\item Sample $(e_{t+1}, \tilde{e}_{t+1}) | \beta_{t+1}, \tilde{\beta}_{t+1}, z_{t+1}, \tilde{z}_{t+1}, w_t, \tilde{w}_t$ with common random numbers component-wise independently such that for each $i=1,\mydots, n$
\begin{flalign}
e_{i,t+1} &\sim
\left\{\begin{matrix}
 \mathcal{N}( x_i^\top  \beta_{t+1},  w_{i,t}^{-1}) \mathrm{I}_{[0, \infty)} & \text{if } y_i=1 \\
 \mathcal{N}( x_i^\top  \beta_{t+1},  w_{i,t}^{-1}) \mathrm{I}_{(-\infty, 0)} & \text{if } y_i=0  \\
\end{matrix}\right.
\text{ and}  \\
\tilde{e}_{i,t+1} &\sim
\left\{\begin{matrix}
 \mathcal{N}( x_{\tilde{A},i}^\top  \tilde{\beta}_{\tilde{A},t+1},  \tilde{w}_{i,t}^{-1}) \mathrm{I}_{[0, \infty)} & \text{if } y_i=1 \\
 \mathcal{N}( x_{\tilde{A},i}^\top  \tilde{\beta}_{\tilde{A},t+1},  \tilde{w}_{i,t}^{-1}) \mathrm{I}_{(-\infty, 0)} & \text{if } y_i=0  \\
\end{matrix}\right.
\end{flalign}
where $x_i^\top $ and $x_{\tilde{A}, i}^\top $ are the $i^{th}$ row of the $X$ and $X_{\tilde{A}}$ respectively.
\item Sample $(w_{t+1}, \tilde{w}_{t+1}) | \beta_{t+1}, \tilde{\beta}_{t+1}, z_{t+1}, \tilde{z}_{t+1}, e_{t+1}, \tilde{e}_{t+1}$.  We take this variable to be fixed,  and set $w_{i,t} = \tilde{w}_{i,t} = 3/\pi^2$ for all $i=1, \mydots, n$ and $t \geq 0$, where $3/\pi^2$ is the precision of the logistic distribution.  In the case this variable can vary,  they can be sampled using common random numbers such that for each $i=1, \mydots, n$, 
\begin{flalign}
w_{i,t+1} \sim  \Gamma \bigg( \frac{\nu+1}{2},  \frac{K (y_i- x_i^\top  \beta_{t+1})^2}{2} \bigg) \text{ and }  
\tilde{w}_{i,t+1} \sim  \Gamma \bigg( \frac{\nu+1}{2},  \frac{K(y_i- x_{\tilde{A},i}^\top  \tilde{\beta}_{\tilde{A},t+1})^2}{2} \bigg)
\end{flalign}
where $\nu=7.3$, $K \defeq (\pi^2 (\nu-2)/3)$ are fixed constants as given in \citet{narisetty2019skinnyJASA}. 
\end{enumerate}
\normalsize
\Return $C_t \defeq (\beta_t, z_t, e_t, w_t)$ and $C_t \defeq (\tilde{\beta}_t, \tilde{z}_t, \tilde{e}_t, \tilde{w}_t)$.
\end{algorithm}

\newpage

\section{Additional Algorithms} \label{appendices:algos}
\begin{algorithm}[!htb]
\DontPrintSemicolon
\KwIn{$(X_t, Y_t)$, unnormalized densities $p$ and $q$
of $P$ and $Q$ respectively, step sizes $\sigma_P$ and $\sigma_Q$} 
Sample $\epsilon_{CRN} \sim \mathcal{N}(0,I_d)$. Calculate proposals
$$X^* \defeq X_t + \frac{1}{2} \sigma_P^2 \nabla \log p(X_{t}) + \sigma_P \epsilon_{CRN} \text{ and } Y^* \defeq Y_t + \frac{1}{2} \sigma_Q^2 \nabla \log q(Y_{t}) + \sigma_Q \epsilon_{CRN}$$

Sample $U_{CRN} \sim \Uniform([0,1])$ \;
\textbf{if} $U_{CRN} \leq \frac{p(X^*) \mathcal{N}( X^* ; X_t + \frac{1}{2} \sigma_P^2 \nabla \log p(X_{t}), \sigma_P^2I_d )}{p(X_t) \mathcal{N}( X_t ; X^* + \frac{1}{2} \sigma_P^2 \nabla \log p(X^*), \sigma_P^2I_d )}  $, \textbf{then} set \( X_{t+1} = X^* \) ; \textbf{else} set \( X_{t+1} = X_{t} \) \;
\textbf{if} $U_{CRN} \leq \frac{q(Y^*) \mathcal{N}( Y^* ; Y_t + \frac{1}{2} \sigma_Q^2 \nabla \log q(Y_{t}), \sigma_Q^2I_d )}{q(Y_t) \mathcal{N}( Y_t ; Y^* + \frac{1}{2} \sigma_Q^2 \nabla \log q(Y^*), \sigma_Q^2I_d )}  $, \textbf{then} set \( Y_{t+1} = Y^* \) ; \textbf{else} set \( Y_{t+1} = Y_{t} \) \;
\Return $(X_{t+1}, Y_{t+1})$
\caption{Common random numbers coupling of two MALA kernels marginally 
 targetting distributions $P$ and $Q$ respectively}
 \label{algo:mala_crn}
\end{algorithm}

\begin{algorithm}[!htb]
\DontPrintSemicolon
\KwIn{$(X_t, Y_t)$, unnormalized densities $p$ and $q$
of $P$ and $Q$ respectively, step sizes $\sigma_P$ and $\sigma_Q$} 
Sample $\epsilon_{CRN} \sim \mathcal{N}(0,I_d)$. Calculate proposals
$$X^* \defeq X_t + \frac{1}{2} \sigma_P^2 \nabla \log p(X_{t}) + \sigma_P \epsilon_{CRN} \text{ and } Y^* \defeq Y_t + \frac{1}{2} \sigma_Q^2 \nabla \log q(Y_{t}) + \sigma_Q \epsilon_{CRN}. $$

Sample $U \sim \Uniform([0,1])$ \;
\textbf{if} $U \leq \frac{p(X^*) \mathcal{N}( X^* ; X_t + \frac{1}{2} \sigma_P^2 \nabla \log p(X_{t}), \sigma_P^2I_d )}{p(X_t) \mathcal{N}( X_t ; X^* + \frac{1}{2} \sigma_P^2 \nabla \log p(X^*), \sigma_P^2I_d )}  $, \textbf{then} set \( X_{t+1} = X^* \) ; \textbf{else} set \( X_{t+1} = X_{t} \) \;
Set \( Y_{t+1} = Y^* \) \;
\Return $(X_{t+1}, Y_{t+1})$
\caption{Common random numbers coupling of a MALA kernel and an ULA kernel 
marginally targeting distributions $P$ and $Q$ respectively}
 \label{algo:mala_ula_crn}
\end{algorithm}

\begin{algorithm}[!htb]
\DontPrintSemicolon
\KwIn{$(X_t, Y_t)$, unnormalized densities $p$ and $q$
of $P$ and $Q$ respectively, step sizes $\sigma_P$ and $\sigma_Q$} 
Sample $\epsilon \sim \mathcal{N}(0,I_d)$. Calculate proposals
\begin{flalign}
X^* &\defeq X_t + \frac{1}{2} \sigma_P^2 \nabla \log p(X_{t}) + \sigma_P \epsilon \\
Y^* &\defeq Y_t + \frac{1}{2} \sigma_Q^2 \nabla \log q(Y_{t}) + \sigma_Q ( I_d - e e^\top ) \epsilon
\text{  for } e = \frac{X_t-Y_t}{\|X_t-Y_t\|_2}.
\end{flalign}

Sample $U_{CRN} \sim \Uniform([0,1])$. \;
\textbf{if} $U_{CRN} \leq \frac{p(X^*) \mathcal{N}( X^* ; X_t + \frac{1}{2} \sigma_P^2 \nabla \log p(X_{t}), \sigma_P^2I_d )}{p(X_t) \mathcal{N}( X_t ; X^* + \frac{1}{2} \sigma_P^2 \nabla \log p(X^*), \sigma_P^2I_d )}  $, \textbf{then} set \( X_{t+1} = X^* \) ; \textbf{else} set \( X_{t+1} = X_{t} \). \;
\textbf{if} $U_{CRN} \leq \frac{q(Y^*) \mathcal{N}( Y^* ; Y_t + \frac{1}{2} \sigma_Q^2 \nabla \log q(Y_{t}), \sigma_Q^2I_d )}{q(Y_t) \mathcal{N}( Y_t ; Y^* + \frac{1}{2} \sigma_Q^2 \nabla \log q(Y^*), \sigma_Q^2I_d )}  $, \textbf{then} set \( Y_{t+1} = Y^* \) ; \textbf{else} set \( Y_{t+1} = Y_{t} \). \;
\Return $(X_{t+1}, Y_{t+1})$
\caption{Reflection coupling of two MALA kernels marginally 
 targetting distributions $P$ and $Q$ respectively \citep[see, e.g.][]{bou-rabee2020couplingAOAP}.}
 \label{algo:mala_reflection}
\end{algorithm}

\begin{algorithm}[!htb]
\DontPrintSemicolon
\KwIn{$(X_t, Y_t)$, unnormalized densities $p$ and $q$
of $P$ and $Q$ respectively, step sizes $\sigma_P$ and $\sigma_Q$} 
Sample $\epsilon \sim \mathcal{N}(0,I_d)$, $U^* \sim \Uniform([0,1])$. 
Calculate proposals 
\begin{flalign}
X^* &\defeq X_t + \frac{1}{2} \sigma_P^2 \nabla \log p(X_{t}) + \sigma_P \epsilon \\
Y^* &\defeq
\left\{\begin{matrix}
 X^* & \text{if } U^* \leq \frac{\mathcal{N}( X^* ; Y_t + \frac{1}{2} \sigma_P^2 \nabla \log p(Y_{t}), \sigma_P^2I_d )}{\mathcal{N}( X^* ; X_t + \frac{1}{2} \sigma_P^2 \nabla \log p(X_{t}), \sigma_P^2I_d )} \\
 Y_t + \frac{1}{2} \sigma_Q^2 \nabla \log q(Y_{t}) + \sigma_Q ( I_d - e e^\top ) \epsilon & \text{ otherwise} \text{ for } e = \frac{X_t-Y_t}{\|X_t-Y_t\|_2}. \\
\end{matrix}\right.
\end{flalign}

Sample $U_{CRN} \sim \Uniform([0,1])$. \;
\textbf{if} $U_{CRN} \leq \frac{p(X^*) \mathcal{N}( X^* ; X_t + \frac{1}{2} \sigma_P^2 \nabla \log p(X_{t}), \sigma_P^2I_d )}{p(X_t) \mathcal{N}( X_t ; X^* + \frac{1}{2} \sigma_P^2 \nabla \log p(X^*), \sigma_P^2I_d )}  $, \textbf{then} set \( X_{t+1} = X^* \) ; \textbf{else} set \( X_{t+1} = X_{t} \). \;
\textbf{if} $U_{CRN} \leq \frac{q(Y^*) \mathcal{N}( Y^* ; Y_t + \frac{1}{2} \sigma_Q^2 \nabla \log q(Y_{t}), \sigma_Q^2I_d )}{q(Y_t) \mathcal{N}( Y_t ; Y^* + \frac{1}{2} \sigma_Q^2 \nabla \log q(Y^*), \sigma_Q^2I_d )}  $, \textbf{then} set \( Y_{t+1} = Y^* \) ; \textbf{else} set \( Y_{t+1} = Y_{t} \). \;
\Return $(X_{t+1}, Y_{t+1})$
\caption{Reflection maximal coupling of two MALA kernels marginally 
 targetting distributions $P$ and $Q$ respectively \citep[see, e.g.][]{bou-rabee2020couplingAOAP}.}
 \label{algo:mala_reflection_maximal}
\end{algorithm}

\end{document}